\newtheorem{theorem}{Theorem}
\newtheorem{lemma}[theorem]{Lemma}
\newtheorem{claim}[theorem]{Claim}
\newtheorem{conjecture}[theorem]{Conjecture}
\newtheorem{question}[theorem]{Question}
\theoremstyle{definition}
\newtheorem{definition}[theorem]{Definition}
\def\pprime{{\prime\prime}}
\def\ppprime{{\prime\prime\prime}}
\def\Merge{{\textit{Merge}}}
\def\Rtim{{{\mathcal R}_{\mathrm{tim}}}}
\def\Rastack{{{\mathcal R}_{\alpha\mathrm{st}}}}
\def\Rshiver{{{\mathcal R}_{\mathrm{sh}}}}
\def\Ramerge{{{\mathcal R}_{\alpha\mathrm{m}}}}
\def\YZ{{\mathit{YZ}}}
\def\QQ{Q}
\title{
    Strategies for Stable Merge Sorting
}
\author{
  Sam Buss \\
  {\small \texttt{sbuss@ucsd.edu}} \\
  {\small Department of Mathematics} \\
  {\small University of California, San Diego} \\
  {\small La Jolla, CA, USA}
  \and
  Alexander Knop \\
  {\small \texttt{aknop@ucsd.edu}} \\
  {\small Department of Mathematics} \\
  {\small University of California, San Diego} \\
  {\small La Jolla, CA, USA}
}
\begin{document}

\maketitle

\begin{abstract}
We introduce new stable natural merge sort algorithms, called $2$-merge
sort and $\alpha$-merge sort. We prove upper and lower bounds for several
merge sort algorithms, including Timsort, Shivers' sort, $\alpha$-stack
sorts, and our new $2$-merge and $\alpha$-merge sorts. The upper and lower
bounds have the forms $c \cdot n \log m$ and $c \cdot n \log n$ for inputs
of length~$n$ comprising $m$~monotone runs. For Timsort, we prove a lower bound of
$(1.5 - o(1)) n \log n$. For $2$-merge sort, we prove optimal upper and
lower bounds of approximately $(1.089 \pm o(1))n \log m$. We prove similar
asymptotically matching upper and lower bounds for $\alpha$-merge sort, when
$\varphi < \alpha < 2$, where $\varphi$ is the golden ratio.

Our bounds are in terms of merge cost; this upper bounds the number of comparisons
and accurately models runtime.
The merge strategies can be used for any stable merge sort, not just
natural merge sorts. The new $2$-merge and $\alpha$-merge sorts have better
worst-case merge cost upper bounds and are slightly simpler to implement than
the widely-used Timsort; they also perform better in experiments. We report
also experimental comparisons with algorithms developed by Munro-Wild and Jug\'e
subsequently to the results of the present paper.
\end{abstract}

\stepcounter{footnote}
\footnotetext{The present paper is an expansion of an earlier conference
paper~\cite{BussKnop:StableMergeSortSODA}. It includes the contents
of~\cite{BussKnop:StableMergeSortSODA} plus a new easy, direct proof of
an $O(n \log m)$ upper bound
for $\alpha$-merge sort (Theorem~\ref{thm:alphaMergeUpperEasy}),
the proof of the tight upper bounds for $\alpha$-merge sort
(Theorem~\ref{thm:alphamergeUpper}), and experimental comparisons with
the algorithms of
Munro-Wild~\cite{MunroWild:MergeSort}
and Jug\'e~\cite{Juge:AdaptiveShivers}
that were developed after this paper was first circulated.}

\section{Introduction}
This paper studies stable merge sort algorithms, especially natural merge sorts.
We will propose new strategies for the order in which merges are performed, and
prove upper and lower bounds on the cost of several merge strategies. The first
merge sort algorithm was proposed by von
Neumann~\cite[p.159]{Knuth:SortingSearching}: it works by splitting the input
list into sorted sublists, initially possibly lists of length one, and then
iteratively merging pairs of sorted lists, until the entire input is sorted. A
sorting algorithm is {\em stable} if it preserves the relative order of elements
which are not distinguished by the sort order. There are several methods of
splitting the input into sorted sublists before starting the merging; a merge
sort is called {\em natural} it finds the sorted sublists by detecting
consecutive runs of entries in the input which are already in sorted order.
Natural merge sorts were first proposed by Knuth~\cite[p.160]{Knuth:SortingSearching}.

Like most sorting algorithms, the merge sort is comparison-based in that
it works by comparing the relative order of pairs of entries in the input list.
Information-theoretic considerations imply
that any comparison-based sorting algorithm must make at least
$\log_2(n!) \approx  n \log_2 n$ comparisons in the worst case.
However, in many practical applications, the input is frequently already partially
sorted.  There are many {\em adaptive} sort algorithms which will
detect this and run faster on inputs which are already partially sorted.
Natural merge sorts are adaptive in this sense: they detect sorted
sublists (called ``runs'') in the input, and thereby reduce the
cost of merging sublists. One very popular stable natural merge sort is the
eponymous Timsort of Tim Peters~\cite{Peters:TimSortPost}.
Timsort is extensively used, as it is included in Python, in the Java standard
library, in GNU Octave, and in the Android operating system. Timsort has
worst-case runtime $O(n \log n)$, but is designed to run substantially faster on
inputs which are partially pre-sorted by using intelligent strategies to
determine the order in which merges are performed.

There is extensive literature on adaptive sorts: e.g., for theoretical
foundations see \cite{Mannila:Presortedness,%
EstivillCastroWood:surveyAdaptive,%
PeterssonMoffat:OverviewAdaptive,%
McIlroy:OptimisticSorting} and
for more applied investigations see
\cite{%
CookKim:NearlySorted,%
Harris:NaturalMerge,%
ChandramouliGoldstein:Patience,%
ZML:SortRace,%
LaRoccaCantone:neatsort}.
The present paper will consider only stable, natural merge sorts.  As
exemplified by the wide deployment of Timsort, these are certainly an important
class of adaptive sorts. We will consider
the Timsort algorithm~\cite{Peters:TimSortPost,GRBBH:TimSortBug}, and
related sorts due to Shivers~\cite{Shivers:mergesort}
and Auger-Nicaud-Pivoteau~\cite{ANP:MergeStrategies}. We will
also introduce new algorithms,
the ``2-merge sort'' and
the ``$\alpha$-merge sort'' for $\varphi < \alpha < 2$ where
$\varphi$ is the golden ratio.

The main contribution of the present paper is the definition
of new stable merge sort algorithms, called $2$-merge and
$\alpha$-merge sort. These have better worst-case run times
than Timsort, are slightly easier to implement
than Timsort, and perform better in our experiments.

We focus on natural merge sorts, since they are so widely used. However, our
central contribution is analyzing merge strategies and our results are
applicable to any stable sorting algorithm that generates and merges runs,
including patience sorting~\cite{ChandramouliGoldstein:Patience},
melsort~\cite{Skiena:Encroaching,LevcopoulosPetersson:SortingShuffled},
and split sorting~\cite{LevcopoulosPetersson:SplitSort}.

All the merge sorts we consider will use the following framework. (See
Algorithm~\ref{alg:basic}.) The input is a list of $n$ elements.
The first logical stage of the
algorithm (following~\cite{Peters:TimSortPost}) identifies maximal
length subsequences of consecutive entries which are in sorted order, either
ascending or descending. The descending subsequences are reversed, and this
partitions the input into ``runs'' $R_1, \ldots, R_m$ of entries sorted in
non-decreasing order. The number of runs is~$m$; the number of elements in a
run~$R$ is $|R|$. Thus $\sum_i |R_i| = n$. It is easy to see that these runs may
be formed in linear time and with linear number of comparisons.

Merge sort algorithms process the runs in left-to-right order starting
with~$R_1$. This permits runs to be identified on-the-fly,
only when needed. This means there is no need to allocate $\Theta(m)$ additional
memory to store the runs. This also may help reduce cache
misses. On the other hand, it means that the value $m$
is not known until the final run is formed; thus, the natural
sort algorithms do not use $m$ except
as a stopping condition.

The runs~$R_i$ are called {\em original runs}. The second logical stage of the
natural merge sort algorithms repeatedly merges runs in pairs to give longer and
longer runs. (As already alluded to, the first and second logical stages are
interleaved in practice.) Two runs $A$ and $B$ can be merged in linear time;
indeed with only $|A| + |B| - 1$ many comparisons and $|A| + |B|$ many movements
of elements. The merge sort stops when all original runs have been identified
and merged into a single run.

Our mathematical model for the run time of a merge sort algorithm is the sum,
over all merges of pairs of runs $A$ and~$B$, of $|A| + |B|$.  We call the
quantity the {\em merge cost}. In most situations, the run time of a
natural sort algorithm can be linearly bounded in terms of its merge cost. Our
main theorems are lower and upper bounds on the merge cost of several stable
natural merge sorts.  Note that if the runs are merged in a balanced fashion,
using a binary tree of height $\lceil{\log m}\rceil$, then the total merge cost
$ \le n \lceil{\log m}\rceil$. (We use $\log$ to denote logarithms base~2.)
Using a balanced binary tree of merges gives  a good worst-case merge cost, but
it does not take into account savings that are available when runs have
different lengths.\footnote{%
  \cite{HMW:Presorted} gives a different method of achieving merge cost
  $O( n \log m )$. Like the binary tree method, their method is not adaptive.
}
The goal is find {\em adaptive} stable natural merge sorts which can
take advantage of different run lengths to reduce the merge cost even below $n\log m$,
but which are
guaranteed to never be much worse than the binary tree. Therefore, our preferred
upper bounds on merge costs are stated in the form $c \cdot n \log m$ for some
constant~$c$, rather than in the form $c\cdot n \log n$.

The merge cost ignores the $O(n)$ cost of forming the original runs~$R_i$: this
does not affect the asymptotic limit of the constant~$c$.

Algorithm~\ref{alg:basic} shows the framework for all the merge sort algorithms
we discuss. This is similar to what Auger et al.~\cite{ANP:MergeStrategies} call
the ``generic'' algorithm. The input~$S$ is a sequence of integers which is
partitioned into monotone runs of consecutive members. The decreasing runs are
inverted, so $S$ is expressed as a list $\mathcal R$ of increasing original runs
$R_1, \ldots, R_m$ called ``original runs''. The algorithm maintains a
stack~$\mathcal \QQ$ of runs $\QQ_1, \ldots, \QQ_\ell$, which have been formed from
$R_1, \ldots, R_k$. Each time through the loop, it either pushes
the next original run, $R_{k + 1}$, onto the stack $\mathcal \QQ$, or it chooses a
pair of adjacent runs $\QQ_i$ and~$\QQ_{i+1}$ on the stack and merges them. The
resulting run replaces $\QQ_i$ and~$\QQ_{i+1}$ and becomes the new $\QQ_i$, and the
length of the stack decreases by one. The entries of the runs $\QQ_i$ are stored
in-place, overwriting the elements of the array which held the input~$S$.
Therefore, the stack needs to hold only the positions~$p_i$ (for $i = 0, \ldots,
\ell +  1$) in the input array where the runs $\QQ_i$ start, and thereby implicitly
the lengths $|\QQ_i|$ of the runs. We have $p_1=0$, pointing to the beginning of
the input array, and for each~$i$, we have $|\QQ_i| = p_{i + 1} - p_i$. The
unprocessed part of~$S$ in the input array starts at position $p_{\ell + 1} =
p_\ell + |\QQ_\ell|$. If $p_{\ell + 1} < n$, then it will be the starting position
of the next original run pushed onto~$\mathcal \QQ$.

\begin{algorithm}
    \begin{algorithmic}[1]
      \Procedure{MergeSortFramework}{$S$, $n$}
        \State $\mathcal{R} \gets \text{list of runs forming $S$}$
        \State $\mathcal \QQ \gets \text{empty stack}$
        \While{$\mathcal{R} \neq \emptyset$ or $\mathcal \QQ$ has $>1$ member}
          \Choose {to do either {\bf (A)} or
                  {\bf (B${}_i$)} for some $\ell  -  k_2 < i < \ell$}, based on
            \State whether $\mathcal R$ is empty and
                     on the values $|\QQ_j|$ for $\ell  -  k_1 < j \le \ell$
    \def\ChoicesIndent{2.2em}
            \State \hbox to\ChoicesIndent{\bf (A)}%
                    Remove the next run $R$ from $\mathcal{R}$
                    and push it onto $\mathcal \QQ$.
            \State \kern\ChoicesIndent This increments $\ell = |\mathcal \QQ|$ by 1.
            \State \hbox to\ChoicesIndent{\bf (B${}_i$)}%
                    Replace $\QQ_i$ and $\QQ_{i+1}$ in $\mathcal \QQ$ with
                    $\Merge(\QQ_i, \QQ_{i+1})$.
            \State \kern\ChoicesIndent This decrements $\ell = |\mathcal \QQ|$ by 1.
          \EndChoose
        \EndWhile
        \State Return $\QQ_1$ as the sorted version of~$S$.
      \EndProcedure
    \end{algorithmic}
    \caption{
        The basic framework for all our merge sort algorithms.
        \newline $S$ is a
        sequence of integers of length~$n$. $\mathcal{R}$~is the list of $m$
        runs formed from~$S$.
        \newline
        $\mathcal \QQ$~is a stack of $\ell$~runs,
        $\QQ_1, \QQ_2, \ldots, \QQ_\ell$. The top member of the
        stack~$\mathcal \QQ$ is $\QQ_\ell$.
        \newline
        $k_1$ and $k_2$ are fixed (small)
        integers. The algorithm is $(k_1,k_2)$-aware.
        \newline
        Upon termination, $\mathcal \QQ$ contains a single run~$\QQ_1$
        which is the sorted version of~$S$.
    }
    \label{alg:basic}
\end{algorithm}

Algorithm~\ref{alg:basic} is called {\em $(k_1,k_2)$-aware} since its choice of
what to do is based on just the lengths of the runs in the top $k_1$ members of
the stack~$\mathcal \QQ$, and since merges are only applied to runs in the top
$k_2$ members of~$\mathcal \QQ$. (\cite{ANP:MergeStrategies} used the terminology
``degree'' instead of ``aware''.) In all our applications, $k_1$ and $k_2$ are
small numbers, so it is appropriate to store the runs in a stack. Usually
$k_1=k_2$, and we write ``$k$-aware'' instead of ``$(k,k)$-aware''.
Table~\ref{tab:awareness} shows the awareness values for the algorithms
considered in this paper. To improve
readability (and following \cite{ANP:MergeStrategies}), we use the letters
$W,X,Y,Z$ to denote the top four runs on the stack,
$\QQ_{\ell - 3}, \QQ_{\ell - 2}, \QQ_{\ell - 1}, \QQ_\ell$ respectively, (if they
exist).\footnote{\cite{Peters:TimSortPost} used ``$A,B,C$'' for ``$X,Y,Z$''.}

\begin{table}[ht]
    \begin{center}
        \begin{tabular}{cc}
            Algorithm & Awareness \\
            \hline
            Timsort (original) \cite{Peters:TimSortPost} & 3-aware \\
            Timsort (corrected) \cite{GRBBH:TimSortBug} & (4,3)-aware \\
            $\alpha$-stack sort \cite{ANP:MergeStrategies} & 2-aware \\
            Shivers sort \cite{Shivers:mergesort} & 2-aware \\
            $2$-merge sort & 3-aware \\
            $\alpha$-merge sort ($\varphi{<}\alpha{<}2$) & 3-aware
        \end{tabular}
    \end{center}
    \caption{Awareness levels for merge sort algorithms}
    \label{tab:awareness}
\end{table}

In all the sorting algorithms we consider, the height of the stack~$\mathcal \QQ$
will be small, namely $\ell = O(\log n)$. Since the stack needs only store the
$\ell +  1$ values $p_1, \ldots, p_{\ell + 1}$, the memory requirements for the stack
are minimal. Another advantage of Algorithm~\ref{alg:basic} is that runs
may be identified on the fly and that merges occur only near the top of the
stack: this may help reduce cache misses.
(See~\cite{LaMarcaLadner:CacheSorting} for other methods for reducing cache
misses.)

$\Merge(A,B)$ is the run obtained by stably merging $A$ and~$B$. Since it takes
only linear time to extract the runs $\cal R$ from~$S$, the computation time of
Algorithm~\ref{alg:basic} is dominated by the time needed for merging runs. As
already mentioned, we use $|A| + |B|$ as the mathematical model for the runtime
of a (stable)
merge. The usual algorithm for merging $A$ and $B$ uses an auxiliary buffer to
hold the smaller of $A$ and $B$, and then merging directly into the combined
buffer: this has runtime proportional to $|A| + |B|$.
Timsort~\cite{Peters:TimSortPost} uses a variety of techniques to speed
up merges, in particular ``galloping''; this also still takes time proportional
to $|A| + |B|$ in general. It is possible to perform (stable) merges in-place
with no additional memory~\cite{%
Kronrod:inplacemerging,
MannilaUkkonen:inplacemerging,
HuangLangston:inplacemerging,
Symvonis:inplacemerging,
GKP:inplacemerging}; these algorithms also require time $\Theta(|A| + |B|)$.

More complicated data structures can perform merges in sublinear time in some
situations; see for instance~\cite{CLP:SublinearMerging,HMW:Presorted}. These
methods do not seem to be useful in practical applications, and of course still
have worst-case run time $\Theta(|A| + |B|)$.

If $A$ and $B$ have very unequal lengths, there are (stable) merge algorithms
which use fewer than $\Theta(|A| + |B|)$
comparisons~\cite{HwangLin:merging,Symvonis:inplacemerging,GKP:inplacemerging}.
Namely, if $|A| \le |B|$, then it is possible to merge
$A$ and $B$ in time $O(|A| +  |B|)$ but using only $O(|A| (1 + \log(|B| / |A|)))$
comparisons.
Nonetheless, we feel that the cost $|A| + |B|$
is the best way to model the runtime of merge sort algorithms. Indeed,
the main strategies for speeding up merge sort algorithms try
to merge runs of approximate equal length as much as possible;
thus $|A|$ and $|B|$
are very unequal only in special cases. Of course, all our upper bounds
on merge cost are also upper bounds on number of comparisons.

\begin{definition}
    The {\em merge cost} of a merge sort algorithm on an input~$S$ is the sum of
    $|A| + |B|$ taken over all merge operations $\Merge(A,B)$ performed.  For
    $\QQ_i$ a run on the stack~$\mathcal \QQ$ during the computation, the {\em merge
    cost} of $\QQ_i$ is the sum of $|A| + |B|$ taken over all merges $\Merge(A,B)$
    used to combine runs that form~$\QQ_i$.
\end{definition}

Our definition of merge cost is motivated primarily by analyzing the run time
of {\em sequential} algorithms. Nonetheless, the constructions may be applicable
to distributed sorting algorithms, such as in
MapReduce~\cite{DeanGhemawat:mapreduce}. A distributed sorting algorithm
typically performs sorts on subsets of the input on multiple processors: each
of these can take advantage of a faster sequential merge sort algorithm.
Applications to distributed sorting are beyond the scope of the present paper
however.

In later sections, the notation $w_{\QQ_i}$ is used to denote
the merge cost of the $i$-th entry $\QQ_i$ on the stack at a given
time. Here ``$w$'' stands for ``weight''.
The notation $|\QQ_i|$ denotes the length of the run~$\QQ_i$.
We use $m_{\QQ_i}$ to denote the number of original runs
which were merged to form~$\QQ_i$.

All of the optimizations used by Timsort
mentioned above can be used equally well
with any of the merge sort algorithms discussed in the present paper.
In addition, they can be used with other sorting algorithms that
generate and merge runs. These other algorithms include patience
sort~\cite{ChandramouliGoldstein:Patience}, melsort~\cite{Skiena:Encroaching}
(which is an extension of patience sort), the hybrid quicksort-melsort
algorithm of~\cite{LevcopoulosPetersson:SortingShuffled}, and split
sort~\cite{LevcopoulosPetersson:SplitSort}.  The merge cost as defined above
applies equally well to all these algorithms. Thus, it gives a runtime
measurement which applies to a broad range of sort algorithms that incorporate
merges and which is largely independent of which optimizations are used.

Algorithm~\ref{alg:basic}, like all the algorithms we discuss,
only merges adjacent elements, $\QQ_i$ and~$\QQ_{i+1}$, on the stack.
This is necessary for the sort to be stable: If $i<i^\pprime<i^\prime$
and two non-adjacent runs
$\QQ_i$ and $\QQ_{i^\prime}$ were merged, then we would not know how to order
members occurring in both $\QQ_{i^\pprime}$ and $\QQ_i\cup \QQ_{i^\prime}$.
The patience sort, melsort, and split sort can all readily be modified
to be stable,
and our results on merge costs can be applied to them.

Our merge strategies
do not apply to non-stable merging,
but Barbay and Navarro~\cite{BarbayNavarro:AdaptiveSorting} have given an
optimal method---based on Huffmann codes---of merging for non-stable
merge sorts in which merged runs do not need to be adjacent.

The known worst-case upper and lower bounds
on stable natural merge sorts are listed in
Table~\ref{tab:allbounds}. The table expresses bounds in
the strongest forms known. Since $m \le n$, it is
generally preferable
to have upper bounds in
terms of $n \log m$, and lower bounds
in terms of $n \log n$.

\begin{table}[ht]
\begin{center}
\begin{tabular}{ccc}
Algorithm & Upper bound & Lower bound \\[1ex]
\hline
Timsort\footnotemark
    & $\left\{\begin{array}{c}
    O(n \log n) ~\hbox{\cite{AJNP:Timsort,ANP:MergeStrategies}} \\
    O(n \log m) ~\hbox{\cite{AJNP:Timsort}}
    \end{array}\right\}$
    & $1.5 \cdot n \log n$ [Theorem~\ref{thm:timsortLower}] \\[1ex]
$\alpha$-stack sort
    & $O(n \log n)$ \hbox{\cite{ANP:MergeStrategies}}
    & $\left\{ \begin{array}{ll}
               c_\alpha \cdot n \log n & \hbox{[Theorem~\ref{thm:alphaLower}]} \\
               \omega (n \log m) & \hbox{[Theorem~\ref{thm:alphaLowerLogm}]}
               \end{array}
       \right.$ \\[2ex]
Shivers sort
    & $\left\{ \begin{array}{c}
          \hbox{$n \log n$ \cite{Shivers:mergesort}} \\
          \hbox{See also Theorem~\ref{thm:ShiversUpperBound}}
      \end{array} \right\}$
    & \hbox{$\omega (n \log m)$ [Theorem~\ref{thm:ShiversLowerBound}]} \\[2ex]
$2$-merge sort
    & $c_2 \cdot n \log m$ [Theorem \ref{thm:twomergeUpper}]         % XXX
    & $c_2 \cdot n \log n$ [Theorem \ref{thm:twomergeLower}] \\[1ex] % XXX
$\alpha$-merge sort
    & $c_\alpha \cdot n \log m$ [Theorem \ref{thm:alphamergeUpper}]  % XXX
    & $c_\alpha \cdot n \log n$ [Theorem \ref{thm:alphamergeLower}]  % XXX
\end{tabular}
\end{center}
\caption{Upper and lower bounds on the merge cost of various algorithms.
For more precise statements, see the theorems. The results hold for
$\varphi < \alpha \le 2$; for these values,
$c_\alpha$ is defined by equation~(\ref{eq:cAlphaDef}) and satisfies
$1.042<c_\alpha<1.089$.
In particular, $c_2 = 3/\log(27/4) \approx 1.08897$.
All bounds are asymptotic; that is, they are correct up to
a multiplicative factor of $1 \pm o(1)$.  For this reason,
the upper and lower bounds in the last two lines of the
the table are not exactly matching.
The table lists $2$-merge sort and $\alpha$-merge sort on
separate lines since $2$-merge sort is slightly simpler
than $\alpha$-merge sort.  In addition, our proof of the upper bound
for $2$-merge sort is substantially
simpler than our proof for $\alpha$-merge sort.}
\label{tab:allbounds}
\end{table}

\footnotetext{When the first
draft of the present paper was circulated, the question of an $O(n \log m)$
upper bound for Timsort was still open; this was subsequently resolved by~\cite{AJNP:Timsort}.}

The main results of the paper are those listed in the final two lines of
Table~\ref{tab:allbounds}. Theorem~\ref{thm:twomergeUpper} proves that the merge
cost of $2$-merge sort is at most $(d_2 + c_2\log m)\cdot n$, where $d_2\approx
1.911$ and $c_2\approx 1.089$: these are very tight bounds, and the value
for~$c_2$ is optimal by Theorem~\ref{thm:twomergeLower}. It is also
substantially better than the worst-case merge cost for Timsort proved in
Theorem~\ref{thm:timsortLower}. Similarly for $\varphi < \alpha < 2$,
Theorem~\ref{thm:twomergeUpper} proves an upper bound of $(d_\alpha+c_\alpha\log
m)\cdot n$. The values for $c_\alpha$ are optimal by
Theorem~\ref{thm:alphamergeLower}; however, our values for $d_\alpha$ have
unbounded limit $\lim_{\alpha \rightarrow \varphi^+}d_\alpha$ and we conjecture
this is not optimal.

We only analyze $\alpha$-merge sorts with $\alpha > \varphi$.
For $\varphi < \alpha \le 2$, the $\alpha$-merge sorts improve on Timsort,
by virtue of having better run time bounds and by being slightly easier
to implement. In addition, they perform better in the experiments reported
in Section~\ref{sec:experiments}.  It is an open
problem to extend our algorithms to the case of $\alpha < \varphi$; we expect
this will require $k$-aware algorithms with $k > 3$.

Table~\ref{tab:allbounds} does not include two new stable merge sorts
that were developed by Munro-Wild~\cite{MunroWild:MergeSort}
and Jug\'e~\cite{Juge:AdaptiveShivers} subsequently to the circulation
of the first draft of the present paper. Jug\'e's ``adaptive Shivers'' sort
is 3-aware. Munro and Wild's ``powersort'' algorithm is 3-aware except that it requires
knowing the value of~$n$ ahead of time. Both of these sorting alorithms have
asymptotic lower and upper bounds on running time of $n \log m$.

The outline of the paper is as follows. Section~\ref{sec:timsort} describes
Timsort, and proves the lower bound on its merge cost. Section~\ref{sec:ANP}
discusses the $\alpha$-stack sort algorithms, and gives lower bounds on their
merge cost. Section~\ref{sec:shivers} describes the Shivers sort, and gives a
simplified proof of the $n\log n$ upper bound of~\cite{Shivers:mergesort}.
Section~\ref{sec:alphaMerge} is the core of the paper and describes the new
$2$-merge sort and
$\alpha$-merge sort. We first a simple $O(n \log m)$ upper bound on
their merge cost. After that, we prove the lower bounds on their merge costs, and
finally prove the corresponding sharp upper bounds. Section~\ref{sec:experiments}
gives some experimental results on two kinds of randomly generated data. It includes
experimental comparisons also with the recent adaptive Shivers sort of Jug\'e and
powersort of Munro and Wild.  All
these sections can be read independently of each other. The paper concludes with
discussion of open problems.

We thank the referees of~\cite{BussKnop:StableMergeSortSODA}
for useful comments and suggestions, and Vincent Jug\'e
for sharing information on his unpublished work.

\section{Timsort lower bound}\label{sec:timsort}

Algorithm~\ref{alg:timsort} is the Timsort algorithm
as defined by~\cite{GRBBH:TimSortBug} improving
on~\cite{Peters:TimSortPost}. Recall
that $W,X,Y,Z$ are the top four elements
on the stack~$\mathcal \QQ$. A command ``Merge $Y$ and $Z$''
creates a single run which replaces both $Y$ and~$Z$ in the
stack; at the same time, the current third member on the
stack, $X$, becomes the new second member on the stack
and is now designated~$Y$.  Similarly, the current $W$
becomes the new $X$, etc.  Likewise, the command
``Merge $X$ and~$Y$'' merges the second and third
elements at the top of~$\mathcal \QQ$; those two elements
are removed from~$\mathcal \QQ$ and replaced by the result
of the merge.

\begin{algorithm}
\begin{algorithmic}[1]
\Procedure{$\text{TimSort}$}{$S$, $n$}
  \State $\mathcal{R} \gets \text{run decomposition of }S$
  \State $\mathcal \QQ \gets \emptyset$
  \While{$\mathcal{R} \neq \emptyset$}
    \State Remove the next run $R$ from $\mathcal{R}$ and push it onto~$\mathcal \QQ$
    \Loop
      \If{$|X| < |Z|$}              \label{algline:timXZ}
        \State{Merge $X$ and $Y$}
      \ElsIf{$|X| \le |Y| + |Z|$}   \label{algline:timXYZ}
        \State{Merge $Y$ and $Z$}
      \ElsIf{$|W| \le |X| + |Y|$}   \label{algline:timWXY}
        \State{Merge $Y$ and $Z$}
      \ElsIf{$|Y| \le |Z|$}         \label{algline:timYZ}
        \State{Merge $Y$ and $Z$}
      \Else
        \State Break out of the loop
      \EndIf
    \EndLoop
  \EndWhile
  \While{$|\mathcal \QQ| \ge 1$}
    \State{Merge $Y$ and $Z$}
  \EndWhile
\EndProcedure
\end{algorithmic}
\caption{The Timsort algorithm. $W,X,Y,Z$ denote the top four elements
of the stack~$\mathcal \QQ$. A test involving a stack member that does not
exist evaluates as ``False''. For example, $|X| < |Z|$ evaluates as false
when $|\mathcal \QQ| < 3$ and $X$ does not exist.}
\label{alg:timsort}
\end{algorithm}

Timsort was designed so that the stack has size $O(\log n)$,
and the total running time is $O(n\log n)$.
These bounds were first proved by~\cite{ANP:MergeStrategies};
simplified proofs were
given by~\cite{AJNP:Timsort} who also strengthed the
upper bound to $O(n \log m)$.

\begin{theorem}[\cite{AJNP:Timsort,ANP:MergeStrategies}]
    \label{thm:timsortUpper}
    The merge cost of Timsort is $O(n \log n)$, and even $O(n \log m)$.
\end{theorem}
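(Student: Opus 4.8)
The plan is to prove the weaker bound $O(n\log n)$ first and then bootstrap it to $O(n\log m)$. Both rely on the standard template for analyzing stack-based natural merge sorts: (i) identify the structural invariant that the inner loop maintains, (ii) deduce that the stack stays shallow and, more importantly, that Timsort never merges two wildly unbalanced runs except in controlled situations, and (iii) charge the cost $|A|+|B|$ of each \emph{Merge} to a quantity that telescopes. Throughout, $W,X,Y,Z$ denote the top four runs of $\mathcal\QQ$ as in Algorithm~\ref{alg:timsort}.

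\textbf{Step 1 (the stack invariant).} I would prove by induction on the steps of Algorithm~\ref{alg:timsort} that whenever control is at the top of the outer \textbf{while} loop, the runs on the stack $\QQ_1,\dots,\QQ_\ell$ (bottom to top) satisfy $|\QQ_{i}|>|\QQ_{i+1}|+|\QQ_{i+2}|$ for all $i\le\ell-2$ and $|\QQ_{\ell-1}|>|\QQ_\ell|$. The four tests inside the \textbf{Loop} are exactly the negations of the inequalities that can fail just after a new original run is pushed, so the loop exits only once the invariant has been restored; one also checks that a \emph{Merge} of $Y,Z$ or of $X,Y$ executed inside the loop does not create a fresh violation deeper in the stack (this is precisely the point at which the corrected $(4,3)$-aware version of the algorithm matters). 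Since such lengths dominate the Fibonacci sequence reading downward, $|\QQ_1|\ge F_{\ell+1}=\Theta(\varphi^{\ell})$, and $|\QQ_1|\le n$ forces $\ell=O(\log n)$ at all times.

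\textbf{Step 2 (the $O(n\log n)$ bound).} The merge cost equals $\sum_{i=1}^m |R_i|\,d_i$, where $d_i$ is the depth of the $i$-th leaf of the binary tree of merges; equivalently it equals $\sum_x c(x)$ where $c(x)$ counts the merges among whose two arguments $x$ occurs. I would show $c(x)=O(\log n)$ for every element $x$ by following the increasing sequence of runs that have contained $x$ and arguing, with the help of Step 1, that within a bounded number of consecutive merges the run containing $x$ grows by a constant factor: when $x$'s run is combined with the run just below it the invariant makes that neighbour larger, so the run at least doubles; and a run is combined with the (smaller) run just above it only a bounded number of times per batch because, by the Fibonacci invariant, the total length of all runs above a given run is only a constant multiple of that run's length. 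Summing the geometric growth from $|Q|\ge1$ up to $\le n$ gives $c(x)=O(\log n)$, hence merge cost $O(n\log n)$. The genuine work here is handling the lopsided merges — e.g. a \emph{Merge} of $X$ and $Y$ triggered by $|X|<|Z|$ with $|X|\gg|Y|$ — by charging across two or three steps rather than step by step.

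\textbf{Step 3 (the $O(n\log m)$ bound).} This is the main obstacle, since the merge tree may be highly unbalanced and the depths $d_i$ are not individually $O(\log m)$. The goal is to prove instead that $d_i\le c\log(n/|R_i|)+O(1)$ for every original run $R_i$, i.e. a run of relative length $|R_i|/n$ participates in only logarithmically-in-its-inverse-density many merges. This requires strengthening the amortized argument of Step 2 so that the ``constant-factor growth within a bounded number of merges'' is measured against $n$ rather than against the current run length, which in turn means analyzing exactly which neighbour the loop conditions force Timsort to merge with: the test $|X|\le|Y|+|Z|$ can trigger a \emph{Merge} of $Y$ and $Z$ in which $|Z|$ dwarfs $|Y|$, and one must use the invariant $|\QQ_i|>|\QQ_{i+1}|+|\QQ_{i+2}|$ to control the enlarged run over a short window of stack positions. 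Granting the per-run bound, the conclusion follows from convexity: putting $p_i=|R_i|/n$ we get $\sum_i p_i\,(n/|R_i|)=m$, so Jensen's inequality gives $\sum_i p_i\log(n/|R_i|)\le\log m$, whence $\sum_i|R_i|\,d_i\le c\,n\log m+O(n)=O(n\log m)$.
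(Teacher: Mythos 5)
The paper offers no proof of Theorem~\ref{thm:timsortUpper} at all: it is quoted from \cite{ANP:MergeStrategies} and \cite{AJNP:Timsort}, with only the remark that the implicit constant in the first proof is about $3/\log\varphi$. So your proposal can only be measured against those external analyses, whose overall strategy it does mirror: the stable-stack invariant $|Q_i|>|Q_{i+1}|+|Q_{i+2}|$ for the corrected $(4,3)$-aware algorithm, the resulting $O(\log n)$ stack height, an amortization in which an element pays $O(1)$ each time its run grows by a constant factor, and (for the $n\log m$ form) a per-run depth bound converted by concavity of $\log$ into an entropy bound $\le \log m$. The Jensen computation in Step~3 is correct, and the identity ``merge cost $=\sum_i |R_i|d_i$'' is fine.

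However, as a proof the proposal has genuine gaps exactly where the cited papers do their work. In Step~2 the doubling claim is asserted, not established: the invariant holds only at stable configurations (just before a push), while every merge is executed precisely when it is temporarily violated, so ``the neighbour below is larger'' is not available at merge time. Concretely, a merge of $Y$ and $Z$ triggered by $|X|\le|Y|+|Z|$ may add a tiny $Z$ to a huge $Y$, and a merge of $X$ and $Y$ triggered by $|X|<|Z|$ may have $|X|$ much smaller than $|Y|$, since $Y$ can have grown after the last stable state at which $|X|>|Y|$ held; one must show that within the same cascade such a lopsided merge is followed by another merge involving the same run that restores constant-factor growth, and that the cascade cannot instead divert into a $W$--$X$ merge and then stall. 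You acknowledge this (``charging across two or three steps'') but give no argument, and this case analysis is the substance of the theorem. Step~3 is weaker still: the bound $d_i\le c\log(n/|R_i|)+O(1)$ for every original run is strictly stronger than anything Step~2 yields, is stated as ``the goal'' with no proof sketch beyond ``strengthening the amortized argument,'' and without it the Jensen step produces nothing. So the plan has the right shape and is consistent with the literature the paper cites, but the two key amortization lemmas --- constant-factor growth per $O(1)$ merges, and its per-original-run refinement --- remain unproven, so the proposal is an outline rather than a proof.
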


\noindent
The proof in
Auger et al.~\cite{ANP:MergeStrategies} did not compute the constant implicit
in their proof of the upper bound of Theorem~\ref{thm:timsortUpper}; but it is
approximately equal to $3 / \log \varphi \approx 4.321$. The proofs
in~\cite{AJNP:Timsort} also do not quantify the constants in the big-O
notation, but they are comparable or slightly larger.
We prove a
corresponding lower bound.

\begin{theorem}\label{thm:timsortLower}
  The worst-case merge cost of the Timsort algorithm on inputs of length~$n$
  which decompose into $m$ original runs is $\ge (1.5 - o(1))\cdot n \log n$.
  Hence it is also $\ge (1.5 - o(1))\cdot n \log m$.
\end{theorem}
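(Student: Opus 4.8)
The plan is to construct, for infinitely many $n$, an explicit input consisting of $m$ original runs whose lengths force Timsort to perform merges in a highly unbalanced way, so that the total merge cost is at least $(1.5-o(1))n\log n$. The natural idea is to build a family of run-length sequences recursively, so that at some point the stack contains a run $Y$ whose length is roughly half the total, sitting below a sequence of much shorter runs, and Timsort is compelled to merge the short runs into $Y$ one at a time (or in small unbalanced chunks) rather than combining them among themselves first. Each such merge of a small run $Z$ into a large accumulated run $Y$ contributes $|Y|+|Z|\approx |Y|$ to the merge cost, and if there are $\Theta(|Y|)$ such small runs — or more precisely if their total length is comparable to $|Y|$ while each individual length stays bounded away from $|Y|$ — the merge cost incurred just at this level is about $|Y|$ times the number of runs, i.e.\ roughly $\tfrac12 n$ times $\log$ of something. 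Iterating this construction across $\log n$ levels, each contributing an extra $\tfrac12 n$ per level on top of the unavoidable $n\log n$ from a balanced merge tree, should yield the factor $1.5$.

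The key steps, in order, are as follows. First, I would analyze Timsort's loop conditions (lines~\ref{algline:timXZ}--\ref{algline:timYZ} of Algorithm~\ref{alg:timsort}) to identify a ``stable configuration'' of the stack: a pattern of run lengths $|W|,|X|,|Y|,|Z|$ for which none of the four merge tests fires, so that Timsort breaks out of the inner loop and pushes the next original run. Second, I would design the sequence $R_1,\dots,R_m$ so that, after the relevant runs have been pushed and the forced merges have settled, the stack repeatedly returns to a configuration in which the top run $Z$ is small compared to $Y$ but the invariant $|X|\le|Y|+|Z|$ (or $|Y|\le|Z|$, etc.) is just barely violated in the direction that triggers ``Merge $Y$ and $Z$''. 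Third, I would set up the recursion: define $n_k$ and a length sequence for ``level $k$'' in terms of level $k-1$, show the stack behavior is as claimed by induction on~$k$, and track the accumulated merge cost $C(n_k)$, proving a recurrence of the shape $C(n_k)\gtrsim 2\,C(n_k/2) + \tfrac32\cdot\tfrac{n_k}{?}$ — adjusting constants so the solution is $(1.5-o(1))n\log n$. Fourth, I would handle the $o(1)$ bookkeeping: boundary runs, the final cleanup loop, rounding in the run lengths, and the fact that $m$ may be much smaller than $n$ (which is why the bound also reads as $n\log m$, since $m\le n$ makes $n\log m\le n\log n$, so a lower bound in terms of $n\log n$ is formally the stronger-looking one but here we just note $\log n\ge\log m$ is the wrong direction — actually the last sentence follows because we may also arrange $m=\Theta(n)$, or simply because the theorem only claims $\ge(1.5-o(1))n\log m$ as a weaker consequence once $\log m$ can be taken close to $\log n$ in the construction).

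The main obstacle will be step one combined with step three: pinning down exactly which of Timsort's four merge rules governs the behavior in the regime we care about, and proving that the carefully chosen run lengths keep the stack in the desired near-critical configuration throughout the recursion without some other rule firing prematurely and rebalancing the stack. Timsort's rules are designed precisely to prevent pathological imbalance, so the construction must thread the needle — keeping ratios just inside the ``do not merge'' region at the moments a new run is pushed, yet forcing an unbalanced merge immediately afterward. Getting the recursion's constant to come out to exactly $3/2$ rather than something smaller, and verifying that the per-level surplus really is $\tfrac12 n$ and not less, will require a careful choice of the growth rate of the run lengths; I expect the extremal choice to be dictated by the boundary case of the relevant inequality (e.g.\ $|X| = |Y|+|Z|$ or $|Y|=|Z|$), and the bulk of the work will be checking that this choice is self-consistent under the recursion and computing the resulting sum $\sum_{k} \tfrac12 n$ over the $\Theta(\log n)$ levels.
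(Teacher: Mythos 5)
Your plan follows essentially the same route as the paper's proof: a recursive family of run-length sequences that keeps the stack just short of the critical test $|X|\le|Y|+|Z|$ until a tiny trigger run fires it, an induction showing Timsort merges each level's runs into a single run without touching the rest of the stack, and the recurrence $c(n)\approx 2c(n/2)+\tfrac32 n$ solved to give $(1.5-o(1))\,n\log n$, with $m=\Theta(n)$ so the $n\log m$ form follows. The only caveat is that your opening picture of many short runs being absorbed one at a time into a large $Y$ cannot be realized (Timsort's rules block exactly that, as you yourself note); in the construction that works --- two near-equal halves followed by a single run of length $1$ or $2$ --- the surplus of $\tfrac12 n$ per level comes from the second half being merged twice, once with the tiny run and once with the first half, which is precisely what your near-critical $|X|\le|Y|+|Z|$ configuration and your recurrence deliver.
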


In other words, for any $c < 1.5$, there are
inputs to Timsort with arbitrarily large values for
$n$ (and~$m$) so that Timsort has merge cost
$> c \cdot n \log n$.
We conjecture that
Theorem~\ref{thm:timsortUpper} is nearly optimal:
\begin{conjecture}\label{conj:timsortUpper}
  The merge cost of Timsort is bounded by $(1.5 +  o(1))\cdot n \log m$.
\end{conjecture}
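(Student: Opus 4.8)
The plan is to attempt the conjectured $(1.5+o(1))\cdot n\log m$ bound by the amortized potential-function method that produces the sharp constants for $2$-merge and $\alpha$-merge sort in Theorems~\ref{thm:twomergeUpper} and~\ref{thm:alphamergeUpper}, adapted to Timsort's more intricate $(4,3)$-aware rules. Recall that merging $A$ and $B$ into $C$ satisfies $|C|=|A|+|B|$, $m_C=m_A+m_B$, and $w_C=w_A+w_B+|C|$, and that the quantity to be bounded is $\sum_{\text{merges}}(|A|+|B|)=w_{\QQ_1}$ at termination. The goal is to exhibit a potential $\Phi=\sum_i\phi(\QQ_i)$ over the current stack such that (i)~pushing an original run $R$ onto $\mathcal{\QQ}$ raises $\Phi$ by at most $(1.5+o(1))\,|R|\log m$, and (ii)~every $\Merge(A,B)$ has nonpositive amortized cost, $|A|+|B|+\Delta\Phi\le 0$. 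Summing over the whole execution and telescoping then yields merge cost $\le(1.5+o(1))\,n\log m$, since the initial and final potentials are negligible.

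First I would fix the stack invariants that Algorithm~\ref{alg:timsort} restores before each push: once the inner loop breaks, the surviving runs satisfy $|Y|>|Z|$, $|X|>|Y|+|Z|$, and $|W|>|X|+|Y|$, which force the run lengths to grow at least in Fibonacci fashion from the top of the stack downward, so the stack height is $O(\log n)$ and consecutive lengths grow by a ratio bounded below by a quantity approaching~$\varphi$. Next I would classify every merge by which of the four tests on lines~\ref{algline:timXZ}--\ref{algline:timYZ} caused it, since each test certifies a distinct relation among $|W|,|X|,|Y|,|Z|$ and hence a distinct guarantee on how balanced the merged pair is and on the sizes of its neighbors. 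For each merge type I would bound $\Delta\Phi$ by feeding that balance guarantee into the concavity of the logarithm, so that a well-balanced merge releases potential proportional to $|A|+|B|$; the target constant $3/2$ should then emerge as the break-even ratio dictated by the worst imbalance the rules permit, \emph{after} amortization, in place of the pessimistic per-step value $3/\log\varphi\approx4.321$ underlying Theorem~\ref{thm:timsortUpper}.

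The hard part, and the reason the statement is only a conjecture, is that Timsort's ``$|X|<|Z|$'' rule on line~\ref{algline:timXZ} merges the \emph{second and third} runs $X$ and $Y$ rather than the top pair $Y$ and~$Z$, and it is exactly this off-the-top merge that the lower-bound family of Theorem~\ref{thm:timsortLower} exploits to drive the cost toward $1.5\,n\log m$. Any purely local invariant of the shape $w_{\QQ_i}\le c\,|\QQ_i|\log m_{\QQ_i}$ is doomed here: a single lopsided merge of a long, run-rich $\QQ_i$ with a short neighbor can cost $\Theta(|\QQ_i|)$ while contributing only $O(\log|\QQ_i|)$ to the logarithmic budget, so no constant $c$ survives unless the potential also \emph{encodes} the invariant-enforced geometric spacing that prevents such merges from recurring. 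Reaching $3/2$ therefore requires a genuinely global amortized argument: over any maximal block of consecutive off-the-top merges, the surplus potential released by the interleaved balanced merges must cover the deficit, with the break-even ratio computing to exactly $3/2$. I would attack this by a finite case analysis of the admissible transition sequences among the four rules, paired with a charging scheme that routes the surplus from each balanced merge to the nearest later imbalanced one; proving that this scheme never runs a deficit, uniformly over all inputs and stack configurations, is the crux on which any proof of the conjecture rests.
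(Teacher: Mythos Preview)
The statement you are addressing is Conjecture~\ref{conj:timsortUpper}, and the paper does \emph{not} prove it: immediately after stating it, the authors remark that Vincent Jug\'e has since established it in unpublished work, and no argument appears in the paper. So there is no ``paper's own proof'' to compare against.

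Your proposal is candid about this---you correctly identify that the off-the-top $|X|<|Z|$ merge is the obstruction, that a per-run invariant of the form $w_{\QQ_i}\le c\,|\QQ_i|\log m_{\QQ_i}$ cannot survive such merges, and that any successful argument must be a global amortization over blocks of merges. But what you have written is a research plan, not a proof: the ``finite case analysis of the admissible transition sequences among the four rules'' and the charging scheme that ``routes the surplus from each balanced merge to the nearest later imbalanced one'' are named but not carried out, and you explicitly say that verifying the scheme never runs a deficit ``is the crux on which any proof of the conjecture rests.'' That crux is the entire content of the conjecture. Your outline is a reasonable sketch of where to look, and it is in the spirit of the potential-function arguments the paper uses for $2$-merge and $\alpha$-merge sort, but it does not close the gap that makes this a conjecture rather than a theorem in the paper.
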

Vincent Jug\'e [personal communication, November 2018] has recently
succeeded in proving this conjecture.

\begin{proof}[Proof of Theorem~\ref{thm:timsortLower}]
  We must define inputs that cause Timsort to take time close to
  $1.5 n \log n$.  As always, $n\ge 1$ is the length of the input~$S$ to be
  sorted. We define $\Rtim (n)$ to be a sequence of run {\em lengths} so that
  $\Rtim(n)$ equals $\langle n_1, n_2 \ldots, n_m\rangle$ where each $n_i > 0$
  and  $\sum_{i = 1}^m n_i = n$.  Furthermore, we will have $m\le n \le 3m$, so
  that  $\log n = \log m + O(1)$. The notation $\Rtim$ is reminiscent of
  $\mathcal R$, but $\mathcal R$ is a sequence of runs whereas $\Rtim$ is a
  sequence of run lengths. Since the merge cost of Timsort depends only on the
  lengths of the runs, it is more convenient to work directly with the sequence
  of run lengths.

  The sequence $\Rtim(n)$, for $1\le n$, is defined as follows.\footnote{%
  For purposes of this proof, we allow run lengths to equal~1. Strictly
  speaking, this cannot occur since all original runs will have length at
  least~2. This is unimportant for the proof however, as the
  run lengths~$\Rtim(n)$ could be doubled and the asymptotic analysis needed
  for the proof would be essentially unchanged.}
  First, for $n\le 3$, $\Rtim(n)$ is the sequence~$\langle n \rangle$,
  i.e., representing a single run of length~$n$.
  Let $n^\prime = \lfloor n/2\rfloor$.
  For even $n \ge 4$, we have $n = 2n^\prime$ and
  define $\Rtim(n)$ to be the concatenation of
  $\Rtim(n^\prime)$, $\Rtim(n^\prime - 1)$ and $\langle 1 \rangle$.
  For odd $n\ge 4$, we have $n = 2n^\prime +  1$
  and define $\Rtim(n)$ to be the concatenation of $\Rtim(n^\prime)$,
  $\Rtim(n^\prime - 1)$ and~$\langle 2\rangle$.

  We claim that for $n\ge 4$, Timsort operates with run lengths $\Rtim(n)$ as
  follows: The first phase processes the runs from $\Rtim(n^\prime)$
  and merges them into a single run of length $n^\prime$
  which is the only element of the stack~$\mathcal \QQ$.
  The second phase processes the runs from $\Rtim(n^\prime - 1)$
  and merges them also into a single run of length $n^\prime - 1$;
  at this point the stack contains two runs, of lengths $n^\prime$
  and $n^\prime - 1$. Since $n^\prime - 1 < n^\prime$, no further merge
  occurs immediately. Instead, the final run is loaded onto the stack: it
  has length $n^\pprime$ equal to
  either 1 or~2. Now $n^\prime \le n^\prime - 1 + n^\pprime$ and
  the test $|X| \le |Y| + |Z|$ on line~\ref{algline:timXYZ}
  of Algorithm~\ref{alg:timsort} is triggered, so Timsort merges
  the top two elements of the stack, and then the
  test $|Y|\le|Z|$ causes the merge of the final two elements
  of the stack.

  This claim follows from
  Claim~\ref{clm:timsortLower}. We say that the stack~$\mathcal \QQ$ is
  {\em stable} if none of the tests on lines
  \ref{algline:timXZ}, \ref{algline:timXYZ}, \ref{algline:timWXY}, \ref{algline:timYZ},
  of Algorithm~\ref{alg:timsort} hold.

  \begin{claim}\label{clm:timsortLower}
  Suppose that $\Rtim(n)$ is the initial subsequence of
  a sequence~$\mathcal R^\prime$ of run lengths,
  and that Timsort is initially started with
  run lengths~$\mathcal R^\prime$ either
  (a)~with the stack~$\mathcal \QQ$ empty or
  (b)~with the top element of~$\mathcal \QQ$ a run of length $n_0>n$
  and the second element of~$\mathcal \QQ$ (if it exists)
  a run of length $n_1 > n_0 +  n$.
  Then Timsort will start by processing exactly the runs whose lengths
  are those of $\Rtim(n)$, merging them into a single run
  which becomes the new top element of~$\mathcal \QQ$.
  Timsort will do this without performing any merge of runs
  that were initially in~$\mathcal \QQ$ and without
  (yet) processing any of the remaining runs in~$\mathcal R^\prime$.
  \end{claim}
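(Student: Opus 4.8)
The plan is to prove Claim~\ref{clm:timsortLower} by strong induction on~$n$, following the recursive definition of $\Rtim(n)$. It is convenient to carry along as part of the inductive hypothesis the extra information that, when Timsort finishes processing the runs of $\Rtim(n)$, the stack~$\mathcal Q$ is \emph{stable} in the sense defined just before the claim; and correspondingly to assume, in case~(b), that the incoming stack is stable in addition to satisfying $n_0>n$ and (when it exists) $n_1>n_0+n$. This strengthening is harmless, since Timsort always returns the stack to a stable configuration between runs, and it is what makes the induction close: stability of the top four runs of the incoming stack is exactly what guarantees the third run down has length $n_2>n_1+n_0$, which is needed to keep Timsort from touching the original stack in the final step. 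One checks separately (using that the runs newly pushed by $\Rtim$ are small relative to the fixed large runs $n_0,n_1$ below) that each recursive invocation lands legitimately in case~(a) or case~(b) and that the produced stacks are again stable.

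For the base case $n\le 3$, the sequence $\Rtim(n)$ is a single run of length~$n$ which is simply pushed onto~$\mathcal Q$; one then checks that none of the tests on lines~\ref{algline:timXZ}--\ref{algline:timYZ} of Algorithm~\ref{alg:timsort} holds, using $n_0>n$, $n_1>n_0+n$, and (for the line~\ref{algline:timWXY} test) stability of the incoming stack. For even $n=2n'\ge 4$ (the odd case $n=2n'+1$ being entirely analogous, with a final run of length~$2$ instead of~$1$) I would argue in three phases. First, apply the inductive hypothesis to $\Rtim(n')$: its hypotheses hold because $n_0>n>n'$ and $n_1>n_0+n>n_0+n'$ (or trivially in case~(a)); this leaves a single new run of length~$n'$ on top of the untouched stack, still stable. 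Second, apply the inductive hypothesis to $\Rtim(n'-1)$ with this new stack: the top run has length $n'>n'-1$, and the run below it (the old~$n_0$, if any) has length $n_0>n=2n'>n'+(n'-1)$, so case~(b) applies; this leaves runs of lengths $n'-1$ and $n'$ on top of the original stack. Third, push the final run~$\langle 1\rangle$ and trace the inner \textbf{Loop}: the test $|X|\le|Y|+|Z|$ fires first ($n'\le (n'-1)+1$), merging the top two runs into one of length~$n'$; then $|Y|\le|Z|$ fires ($n'\le n'$), merging the two runs of length~$n'$ into a single run of length $n=2n'$; and then every test fails --- $|X|<|Z|$ and $|X|\le|Y|+|Z|$ fail because $n_1>n_0+n$, the line~\ref{algline:timWXY} test fails because the original stack was stable (so $n_2>n_1+n_0$), and $|Y|\le|Z|$ fails because $n_0>n$ --- so the loop breaks. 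The result is exactly one new run of length~$n$ on top of the untouched original stack, and that stack is again stable.

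The routine part is the bookkeeping of run lengths through the three phases. The one place that needs care, and where I expect the actual content to lie, is the third phase: verifying that Timsort halts immediately after forming the run of length~$n$ rather than continuing to merge it with runs already on the stack. This is precisely what forces the growth conditions $n_0>n$ and $n_1>n_0+n$ in the statement of the claim (and, for the line~\ref{algline:timWXY} test, the stability of the incoming stack), and one must also verify that these conditions are preserved along every recursive call. Once that is in hand, concatenating the three phases and invoking the inductive hypothesis three times yields the claim, and hence the displayed description of Timsort's behavior on $\Rtim(n)$ in the surrounding proof.
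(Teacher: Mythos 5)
Your proof is correct and follows essentially the same route as the paper's: induction on $n$, two applications of the induction hypothesis to $\Rtim(n')$ and $\Rtim(n'-1)$, followed by tracing the two final merges triggered by the tests on lines~\ref{algline:timXYZ} and~\ref{algline:timYZ} and checking that the loop then breaks. Your explicit strengthening of the induction hypothesis by stability of the incoming stack (needed to make the line~\ref{algline:timWXY} test fail, since the stated hypotheses on $n_0,n_1$ alone do not control the third element down) is precisely the bookkeeping the paper carries implicitly when it asserts that $\mathcal \QQ$ is ``stable'' at each stage.
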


  Claim~\ref{clm:timsortLower} is proved by induction on~$n$. The base case, where
  $n\le 3$, is trivial since with $\mathcal \QQ$ stable, Timsort immediately reads
  in the first run from~$\mathcal R^\prime$. The case of $n\ge 4$ uses the
  induction hypothesis twice, since $\Rtim(n)$ starts off with $\Rtim(n^\prime)$
  followed by $\Rtim(n^\prime  -  1)$. The induction hypothesis applied to
  $n^\prime$ implies that the runs of $\Rtim(n^\prime)$ are first processed and
  merged to become the top element of~$\mathcal \QQ$. The stack elements $X,Y,Z$
  have lengths $n_1,n_0,n^\prime$ (if they exist), so the stack is now stable. Now
  the induction hypothesis for $n^\prime - 1$ applies, so Timsort next loads and
  merges the runs of $\Rtim(n^\prime  -  1)$. Now the top stack elements $W,X,Y,Z$ have
  lengths $n_1,n_0,n^\prime,n^\prime  -  1$ and $\mathcal \QQ$~is again stable.
  Finally, the single run of length~$n^\pprime$ is loaded onto the stack. This
  triggers the test $|X|\le |Y| + |Z|$, so the top two elements are merged. Then the
  test $|Y| \le |Z|$ is triggered, so the top two elements are again merged. Now
  the top elements of the stack (those which exist) are runs of length
  $n_1,n_0,n$, and Claim~\ref{clm:timsortLower} is proved.

  Let $c(n)$ be the merge cost of the Timsort algorithm on the sequence~$\Rtim(n)$
  of run lengths. The two merges described at the end of the proof of
  Claim~\ref{clm:timsortLower} have merge cost $(n^\prime - 1) + n^\pprime$ plus
  $n^\prime + (n^\prime - 1) + n^\pprime = n$. Therefore, for $n > 3$, $c(n)$
  satisfies
  \begin{equation} \label{eq:ctimrecur}
      c(n)
      ~=~
      \begin{cases}
          c(n^\prime) + c(n^\prime - 1) + \frac{3}{2} n ~
          & \hbox{if $n$ is even} \\[0.2ex]
          c(n^\prime) + c(n^\prime - 1) + \frac{3}{2} n + \frac 12 ~
          & \hbox{if $n$ is odd. } \\
      \end{cases}
  \end{equation}
  Also, $c(1) = c(2) = c(3) = 0$ since no merges are needed.
  Equation~(\ref{eq:ctimrecur}) can be summarized as
  \[
      c(n) ~=~
          c(\lfloor n / 2 \rfloor) + c(\lfloor n/2 \rfloor  -  1)
          + {\textstyle \frac 32} n + {\textstyle \frac 12} (n \bmod 2).
  \]
  The function $n \mapsto \frac 32 n + \frac 12 (n \bmod 2)$ is strictly
  increasing. So, by induction, $c(n)$ is strictly increasing for $n \ge 3$.
  Hence $c(\lfloor n/2 \rfloor) > c(\lfloor n/2 \rfloor  -  1)$, and
  thus $c(n) \ge 2 c(\lfloor n/2 \rfloor  -  1) + {\textstyle \frac 32}n$ for
  all $n > 3$.

  For $x\in \mathbb R$, define $b(x) = c(\lfloor x - 3 \rfloor)$. Since $c(n)$ is
  nondecreasing, so is $b(x)$. Then
  \begin{eqnarray*}
      b(x) &=& c(\lfloor x  -  3 \rfloor)
        ~\ge~ 2 c(\lfloor (x  -  3)/2 \rfloor - 1 ) + \frac 32 (x  -  3) \\
        &\ge& 2 c(\lfloor x / 2 \rfloor - 3 ) + \frac 32 (x  -  3)
        ~=~ 2 b(x / 2) + \frac 32 (x  -  3).
  \end{eqnarray*}
  \begin{claim} \label{clm:timsortLower2}
      For all $x\ge 3$,
      $b(x) \ge \frac 32\cdot [x(\lfloor \log x \rfloor-2) - x + 3 ]$.
  \end{claim}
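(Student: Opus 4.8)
The plan is to prove Claim~\ref{clm:timsortLower2} by induction on $k := \lfloor \log x \rfloor$, using only the recursive inequality $b(x) \ge 2\,b(x/2) + \frac32(x-3)$ derived just above, together with the trivial fact that $b(x) \ge 0$ for all $x \ge 3$ (merge costs are sums of nonnegative quantities). The point of having massaged the recurrence into this clean form is that essentially all the work is now a one-line algebraic check; the floor-of-logarithm bookkeeping takes care of itself.

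\medskip

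\noindent\emph{Base cases.} Suppose $k = \lfloor \log x \rfloor \le 2$, i.e.\ $3 \le x < 8$. Then $\lfloor \log x \rfloor - 2 \le 0$ and $3 - x \le 0$, so the claimed lower bound $\frac 32\big[x(\lfloor \log x\rfloor - 2) - x + 3\big]$ is $\le 0 \le b(x)$, and there is nothing to prove. (In particular this covers the degenerate range $x \in [3,4)$ where $b(x) = c(\lfloor x-3\rfloor) = c(0)$, which we may read as $0$.)

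\medskip

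\noindent\emph{Inductive step.} Suppose $k = \lfloor \log x \rfloor \ge 3$, i.e.\ $x \ge 8$, and that the claim holds for all $y \ge 3$ with $\lfloor \log y \rfloor < k$. Since $x/2 \ge 4 \ge 3$ and $\lfloor \log(x/2)\rfloor = k-1 < k$, the induction hypothesis applies to $x/2$. Substituting it into the recurrence gives
\begin{equation*}
  b(x) \;\ge\; 2\,b(x/2) + \tfrac32(x-3)
  \;\ge\; 2\cdot\tfrac32\Big[\tfrac x2\big((k-1)-2\big) - \tfrac x2 + 3\Big] + \tfrac32(x-3),
\end{equation*}
and expanding the bracket on the right collapses to $\tfrac32\big[x(k-2) - x + 3\big]$, which is exactly the desired bound with $k = \lfloor \log x\rfloor$. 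This is the only computation in the argument, and it is routine. This completes the induction and proves the claim.

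\medskip

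I do not anticipate any real obstacle here: the substitution $\lfloor\log(x/2)\rfloor = \lfloor\log x\rfloor - 1$ automatically propagates the logarithmic factor, and the base case is free because the target bound is nonpositive for $x < 8$. The only points requiring a word of care are (i) checking that the induction ``on $\lfloor\log x\rfloor$'' over the reals is legitimate — it is, since for $\lfloor\log x\rfloor = k \ge 3$ the argument $x/2$ has strictly smaller value $\lfloor\log(x/2)\rfloor = k-1$ — and (ii) noting that $b$ is nonnegative so the base range needs no separate estimate. (The claim will then be applied with $x = n+3$, so that $b(n+3) = c(n)$, to conclude the $(1.5-o(1))\,n\log n$ lower bound of Theorem~\ref{thm:timsortLower}.)
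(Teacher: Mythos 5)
Your proof is correct and follows essentially the same route as the paper: the base case $3\le x<8$ is dispatched because the bound is nonpositive while $b(x)\ge 0$, and the inductive step substitutes the hypothesis at $x/2$ (using $\lfloor\log(x/2)\rfloor=\lfloor\log x\rfloor-1$) into the recurrence $b(x)\ge 2b(x/2)+\frac32(x-3)$. The only difference is bookkeeping — you induct on $\lfloor\log x\rfloor$ while the paper inducts on an integer $n$ with the claim asserted for all $x<n$ — which is immaterial.
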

  We prove the claim by induction, namely by induction on~$n$ that it holds for
  all $x<n$. The base case is when $3 \le x < 8$ and is trivial since the lower
  bound is negative and $b(x)\ge0$. For the induction step, the claim is known to
  hold for $x/2$. Then, since $\log (x / 2) = (\log x) - 1$,
  \begin{eqnarray*}
    b(x) &\ge& 2 \cdot b(x/2) + \frac 32 (x - 3) \\
         &\ge& 2 \cdot \Bigl( \frac 32 \cdot \bigl[ (x/2)(\lfloor \log x \rfloor - 3)
                               - x/2 + 3 \bigr] \Bigr)
               + \frac 32 ( x - 3 ) \\
         &=& \frac 32 \cdot [ x(\lfloor \log x \rfloor-2) - x + 3 ]
  \end{eqnarray*}
  proving the claim.

  Claim~\ref{clm:timsortLower2} implies that $c(n) = b(n +  3) \ge
  (\frac 32 - o(1))\cdot n \log n$.
  This proves Theorem~\ref{thm:timsortLower}.
\end{proof}

\section{The \texorpdfstring{$\alpha$}{a}-stack sort}\label{sec:ANP}

Augur-Nicaud-Pivoteau~\cite{ANP:MergeStrategies} introduced the $\alpha$-stack
sort as a $2$-aware stable merge sort; it was inspired by Timsort and designed
to be simpler to implement and to have a simpler analysis. (The algorithm~(e2)
of~\cite{ZML:SortRace} is the same as $\alpha$-stack sort with $\alpha=2$.)
Let $\alpha > 1$ be a constant.  The $\alpha$-stack sort is shown in
Algorithm~\ref{alg:alphaStackSort}. It makes less effort than Timsort to
optimize the order of merges: up until the run decomposition is exhausted, its
only merge rule is that $Y$ and $Z$ are merged whenever $|Y| \le \alpha |Z|$. An
$O(n \log n)$ upper bound on its runtime is given by the next theorem.

\begin{algorithm}
    \begin{algorithmic}[1]
        \Procedure{$\alpha\text{-stack}$}{$S$, $n$}
          \State $\mathcal{R} \gets \text{run decomposition of }S$
          \State $\mathcal \QQ \gets \emptyset$
          \While{$\mathcal{R} \neq \emptyset$}
            \State Remove the next run $R$ from $\mathcal{R}$ and push it onto~$\mathcal \QQ$
            \While{$|Y| \le \alpha |Z|$} \label{algline:astackYZ}
              \State{Merge $Y$ and $Z$}  \label{algline:astackMergeYZ1}
            \EndWhile                    \label{algline:astackinnerloopB}
          \EndWhile
          \While{$|\mathcal \QQ| \ge 1$}  \label{algline:astackfinalloopA}
            \State{Merge $Y$ and $Z$}
          \EndWhile                      \label{algline:astackfinalloopB}
        \EndProcedure
    \end{algorithmic}
    \caption{The $\alpha$-stack sort. $\alpha$ is a constant $>1$.}
    \label{alg:alphaStackSort}
\end{algorithm}

\begin{theorem}[\cite{ANP:MergeStrategies}]\label{thm:alphaUpper}
    Fix $\alpha > 1$. The merge cost for $\alpha$-stack sort is
    $O(n \log n)$.
\end{theorem}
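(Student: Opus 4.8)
The plan is to bound the cost of the merges performed in the main loop (while $\mathcal{R}\neq\emptyset$) and the cost of the merges performed in the final draining loop separately, each by $O(n\log n)$. I would start from the standard structural invariant: whenever the stack is \emph{stable} (the test $|Y|\le\alpha|Z|$ fails) its run lengths decrease geometrically from bottom to top, $|Q_i|>\alpha|Q_{i+1}|$ for every $i$. This follows by induction on the iterations of the outer loop, the point being that an execution of the inner ``while'' loop — call it a \emph{cascade} — merges the top two runs repeatedly and halts exactly when $|Q_{\ell-1}|>\alpha|Q_\ell|$ is restored, never disturbing the runs below it. Consequently the stack never holds more than $1+\log_\alpha n$ runs, so the draining loop performs fewer than that many merges, each of cost at most $n$, contributing only $O(n\log n)$; likewise the initial run decomposition costs $O(n)$.

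For the main loop I would charge its merges cascade by cascade, where the cascade of a push of a run $R$ is the block of merges the inner loop performs immediately after $R$ is pushed. Suppose such a cascade performs $k\ge 1$ merges. Since the stack was stable just before $R$ was pushed, the cascade absorbs the former top runs $Q_\ell,Q_{\ell-1},\dots,Q_{\ell-k+1}$ in that order, and their lengths $a_i:=|Q_{\ell-i+1}|$ grow geometrically, $a_{i+1}>\alpha a_i$, so that $\sum_{i\le j}a_i<\tfrac{\alpha}{\alpha-1}a_j$. The $j$-th merge of the cascade has cost $S_{j+1}:=|R|+a_1+\dots+a_j$ (it combines a run of length $S_j$ with $Q_{\ell-j+1}$), and summing over $j$ bounds the total cost of the cascade by $k\,|R|+\tfrac{\alpha^2}{(\alpha-1)^2}\,a_k\le k\,|R|+\tfrac{\alpha^2}{(\alpha-1)^2}\,\Lambda$, where $\Lambda=S_{k+1}$ is the length of the single run the cascade produces. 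Now sum over all cascades. Each cascade has $k=O(\log n)$ merges, and the pushed runs partition the input so their lengths sum to $n$; hence the $k\,|R|$ terms contribute $O(n\log n)$ in total, and it remains only to bound $\sum_{\text{cascades}}\Lambda$. Exchanging the order of summation, this equals $\sum_e \#\{\text{cascades whose produced run contains }e\}$, the sum over the $n$ input elements $e$.

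So it suffices to show that each element $e$ lies in the run produced by only $O(\log n)$ cascades. The key observation is that cascades merge runs strictly from the top of the stack downward, so the run $G$ currently containing $e$ is merged only after every run above it has first coalesced into a single run $Z$; that merge takes place only if $|G|\le\alpha|Z|$, i.e.\ $|Z|\ge|G|/\alpha$, and in it $G$ plays the role of $Y$, so the resulting run containing $e$ has length at least $(1+\tfrac1\alpha)|G|$. Moreover, right after a cascade that merges $e$'s run, that run sits on top of the stack and is not altered at all until the next cascade reaches down to it from above. Hence, with the possible exception of the first cascade that ever touches $e$'s run, each successive one multiplies the length of $e$'s run by at least $1+\tfrac1\alpha$; since this length starts at $\ge 1$ and never exceeds $n$, at most $1+\log_{1+1/\alpha}n$ cascades touch it. Plugging this into the bound above gives main-loop cost $O(n\log n)$ for fixed $\alpha>1$, and adding the draining-loop and run-decomposition costs proves the theorem.

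The step I expect to demand the most care is the ``top-down'' growth argument: making precise that a run lying below others on the stack is merged only once everything above it has merged into a single run of length at least $1/\alpha$ times its own (so that it enters its next merge in the role of $Y$ and grows by the factor $1+\tfrac1\alpha$), treating the very first merge of each original run $R_j$ — where $R_j$ may instead enter as $Z$ and grow by less — as a harmless single exception, and verifying the equivalence ``a cascade's produced run contains $e$'' $\iff$ ``the cascade merges the run then containing $e$'' that justifies the exchange of summation. The remaining estimates are routine geometric-series bookkeeping.
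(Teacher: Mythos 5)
Your proof is correct, but there is nothing in the paper to match it against: Theorem~\ref{thm:alphaUpper} is quoted from Auger--Nicaud--Pivoteau \cite{ANP:MergeStrategies}, and the text only records that their argument yields a constant of roughly $(1+\alpha)/\log\alpha$. The closest proof the paper actually carries out is the counter/potential argument for Theorem~\ref{thm:alphaMergeUpperEasy} (for $\alpha$-merge sort), which maintains the invariant $C\ge\sum_i(2+\alpha)\,i\,|Q_i|$, together with the stack-height bound of Theorem~\ref{thm:mergeStackSize}; your stable-state invariant $|Q_i|>\alpha|Q_{i+1}|$ is the $\alpha$-stack analogue of that height bound. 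Your route is instead a per-element amortization, and I checked its key steps: the cascade bookkeeping (the $j$-th merge of a cascade costs $S_{j+1}$, and geometric decay of the absorbed lengths gives the per-cascade bound $k|R|+\tfrac{\alpha^2}{(\alpha-1)^2}\Lambda$), the exchange of summation, and the growth argument --- a run buried in the stack is absorbed only in the role of $Y$, forcing the run above it to have length at least $|G|/\alpha$, so every absorbing cascade multiplies the length of the run containing $e$ by at least $1+\tfrac1\alpha$, with the single exception of the cascade at the push of $e$'s original run --- are all sound; the two halting modes of a cascade (stability restored, or the stack collapsing to one run) and the draining loop are handled correctly by the height bound. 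Comparing the approaches: the potential-function style used in Theorem~\ref{thm:alphaMergeUpperEasy} is shorter and localizes the accounting to individual steps, and ANP's proof gives the sharper constant $(1+\alpha)/\log\alpha$, whereas your charging yields roughly $\tfrac{1}{\log\alpha}+\tfrac{\alpha^2}{(\alpha-1)^2}\cdot\tfrac{1}{\log(1+1/\alpha)}$ --- larger, but entirely adequate for the stated $O(n\log n)$ bound --- while having the expository virtue of explaining \emph{why} the cost is logarithmic: each element's run grows geometrically between the cascades that touch it.
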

\cite{ANP:MergeStrategies}~did not explicitly mention the
constant implicit in this upper bound, but their proof establishes
a constant equal to approximately $(1 +  \alpha)/\log \alpha$. For instance,
for $\alpha=2$, the merge cost is bounded by
$(3 +  o(1))n \log n$. The constant is minimized at
$\alpha\approx 3.591$, where is it approximately 2.489.

\begin{theorem}\label{thm:alphaLower}
    Let $1 < \alpha$.
    The worst-case merge cost of $\alpha$-stack sort
    on inputs of length~$n$
    is $\ge (c_\alpha - o(1))\cdot n \log n$,
    where
    $c_\alpha$ equals
    $\frac{\alpha + 1}{(\alpha + 1) \log(\alpha + 1) - \alpha \log(\alpha)}$.
\end{theorem}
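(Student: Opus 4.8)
The plan is to build, for each $n$, a sequence $\Rastack(n)$ of run lengths summing to $n$ that forces $\alpha$-stack sort to do merge cost $(c_\alpha-o(1))n\log n$. Since $c_\alpha=1/H$, where $H=\frac{(\alpha+1)\log(\alpha+1)-\alpha\log\alpha}{\alpha+1}$ is the binary entropy of the split $(\tfrac{\alpha}{\alpha+1},\tfrac1{\alpha+1})$, the right construction is the recursion dividing in exactly that ratio: put $n_2=\lceil n/(\alpha+1)\rceil$, $n_1=n-n_2$, set $\Rastack(n)=\langle n\rangle$ for $n\le N_0$ with $N_0=N_0(\alpha)$ a fixed constant, and $\Rastack(n)=\Rastack(n_1)\frown\Rastack(n_2)$ for $n>N_0$. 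Two inequalities drive everything: $n_1\le\alpha n_2$ (so once both halves are on the stack the rule $|Y|\le\alpha|Z|$ fires and merges them, cost $n$), and $n_1>\tfrac{\alpha^2}{\alpha+1}n_2$ for $n>N_0$ (so while the second half is assembled on top of the first-half run, the largest run ever sitting directly on it has size $\approx\tfrac{\alpha}{\alpha+1}n_2$, and $\alpha\cdot\tfrac{\alpha}{\alpha+1}n_2<n_1$ means no premature merge of that first-half run). Both follow from the ceiling choice once $N_0$ exceeds a fixed function of $\alpha$; as in the Timsort lower bound, replacing each leaf $\langle k\rangle$ by $\langle 2k\rangle$ makes all original runs have length $\ge2$ without changing the asymptotics, and the recursion tree has $\Theta(n)$ leaves, so $m=\Theta(n)$.

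The heart of the argument is a context lemma, proved by strong induction on $n$. Fix a stack $Q_1,\dots,Q_\ell$ (necessarily with $|Q_i|>\alpha|Q_{i+1}|$ for $i<\ell$) and an ``absorption list'' $(q_1,\dots,q_j)$ with $q_t=|Q_{\ell+1-t}|$ and $0\le j\le\ell$; writing $s_0=n$ and $s_t=s_{t-1}+q_t$, assume the compatibility conditions $q_t\le\alpha s_{t-1}$ for $1\le t\le j$, and $|Q_{\ell-j}|>\alpha s_j$ when $\ell>j$, and $q_1>\tfrac{\alpha^2}{\alpha+1}n$ when $j\ge1$. Then, starting from this stack with $\Rastack(n)$ as the next input block, $\alpha$-stack sort reads exactly the runs of $\Rastack(n)$, all of its merges are among those runs together with absorbing merges of $Q_\ell,\dots,Q_{\ell-j+1}$, the stack becomes $Q_1,\dots,Q_{\ell-j},P$ with $|P|=s_j$, and the incurred merge cost equals $c(n)+\sum_{t=1}^{j}s_t$, where $c(n)=0$ for $n\le N_0$ and $c(n)=c(n_1)+c(n_2)+n$ for $n>N_0$. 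In the induction step one applies the lemma first to $\Rastack(n_1)$ with empty absorption list (legitimate because $q_1>\tfrac{\alpha^2}{\alpha+1}n\ge\alpha n_1$ when $j\ge1$, and $|Q_\ell|>\alpha n>\alpha n_1$ when $j=0$ and $\ell\ge1$), producing a run $P_1$ of length $n_1$ on top, and then to $\Rastack(n_2)$ with the lengthened list $(n_1,q_1,\dots,q_j)$; its compatibility conditions are precisely those inherited from the outer call plus the two ceiling inequalities above, and the cost and final stack come out correctly because $c(n_2)+(n_1+n_2)=c(n)-c(n_1)$ and the shifted partial sums telescope. Taking $\ell=j=0$ shows $\alpha$-stack sort on input $\Rastack(n)$ has merge cost exactly $c(n)$ (the final cleanup merges nothing, a single run remaining).

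It remains to lower-bound $c(n)$. Unrolling, $c(n)=\sum_{v}s_v$ over internal nodes of the recursion tree ($s_v$ the total leaf value below $v$), equivalently $c(n)=\sum_u\ell_u\,\mathrm{depth}(u)$ over leaves, so $c(n)/n$ is the leaf depth averaged with weights $\ell_u/n$ — the expected length of a root-to-leaf descent moving to each child with probability proportional to its size. Along such a descent $\log(\text{size})$ drops by an expected $h(p):=-p\log p-(1-p)\log(1-p)$ per step, where $p$ is the local split fraction, and $p\to\tfrac{\alpha}{\alpha+1}$ with $h(\tfrac{\alpha}{\alpha+1})=H$, while the total drop is $\log n\pm O(1)$, so an optional-stopping (Wald-type) estimate gives expected depth $\tfrac{\log n}H-O(1)=c_\alpha\log n-O(1)$. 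To make this rigorous without the walk: fix $\tilde c<c_\alpha$, use the exact identity $n_1\log n_1+n_2\log n_2=n\log n-n\,h(n_1/n)$ and the fact $h(n_1/n)\to H<1/\tilde c$, and prove $c(n)\ge\tilde c\,n\log n-Cn$ for all $n$ by the same strong induction, with $N_1$ chosen so large that $\tilde c\,h(n_1/n)\le1$ for $n\ge N_1$ and $C=\tilde c\log N_1$ (the cases $n<N_1$ being immediate from $c(n)\ge0$). Letting $\tilde c\uparrow c_\alpha$ yields $c(n)\ge(c_\alpha-o(1))\,n\log n$ (and, since $m=\Theta(n)$, also $\ge(c_\alpha-o(1))\,n\log m$), which is the theorem.

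The step I expect to be the main obstacle is the context lemma. One must pin down exactly which merges $\alpha$-stack sort performs on $\Rastack(n)$ — in particular that the single smallest, last-read run triggers a cascade that climbs the entire right spine of the recursion tree, absorbing a chain of shelf runs whose length equals the current depth — and then verify, uniformly for $n>N_0$, that every inequality keeping this picture valid (no premature merge, every intended merge fires, nothing below the absorbed chain is touched) survives the floor/ceiling rounding. Everything afterward, including the emergence of the constant $c_\alpha$, is routine entropy bookkeeping.
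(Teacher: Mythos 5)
Your proposal is correct and is essentially the paper's own argument: the paper proves this theorem with the construction $\Ramerge(n)$ from the proof of Theorem~\ref{thm:alphamergeLower}, which splits $n$ recursively in the ratio $\alpha:1$ exactly as your $\Rastack(n)$ does (the three-way split into $n',n'',n'''$ is just your binary split unrolled one level), with Claim~\ref{clm:Ramerge}(d) playing the role of your context lemma and Claims \ref{clm:amergeRecur1}--\ref{clm:bamergeLowerBd} the role of your recurrence analysis. The remaining differences are presentational: your absorption-list bookkeeping and entropy-identity induction with $\tilde c < c_\alpha$ replace the paper's simpler hypothesis that the run below has length $\ge \alpha(n-3)$ and its explicit-constant induction on $b(x)=c(\lfloor x\rfloor-\delta)$, and both routes produce the same constant $c_\alpha$.
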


The proof of Theorem~\ref{thm:alphaLower} is postponed
until Theorem~\ref{thm:alphamergeLower} proves a
stronger lower bound for
$\alpha$-merge sorts; the same construction works to
prove both theorems. The value $c_\alpha$ is quite
small, e.g., $c_2 \approx 1.089$; this is
is discussed more in Section~\ref{sec:alphaMerge}.

The lower bound of Theorem~\ref{thm:alphaLower} is not very strong since the
constant is close to~1.  In fact, since a binary tree of merges gives a merge
cost of $n \lceil\log m \rceil$, it is more relevant to give upper bounds in terms of
$O(n \log m)$ instead of $O(n \log n)$.  The next theorem shows that
$\alpha$-stack sort can be very far from optimal in this respect.

\begin{theorem}\label{thm:alphaLowerLogm}
  Let $1 < \alpha$. The worst-case merge cost of $\alpha$-stack sort on
  inputs of length~$n$ which decompose into $m$ original runs is
  $\omega(n \log m)$.
\end{theorem}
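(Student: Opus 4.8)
The plan is to exhibit, for each fixed $\alpha > 1$, a family of inputs on which $\alpha$-stack sort has merge cost larger than $C \cdot n \log m$ for any prescribed constant~$C$. Since Theorem~\ref{thm:alphaUpper} already caps the merge cost at $O(n \log n)$, such a family is forced to have $\log m = o(\log n)$, i.e. the number $m$ of original runs must be subpolynomial in the total length~$n$; the goal is therefore inputs that consist of relatively few, long runs but on which $\alpha$-stack sort nevertheless performs close to $n \log n$ work. As in Section~\ref{sec:timsort} it is convenient to work directly with sequences of run \emph{lengths}, since the merge cost of $\alpha$-stack sort depends only on the lengths.

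The construction I would use is recursive. Its point is to exploit the fact that, because $\alpha$-stack sort is only $2$-aware, a long run sitting low on the stack is ignored until the material that has been pushed and merged above it has grown to roughly $1/\alpha$ of its length; and when that material is itself produced by the same strategy, the long run is then absorbed by it in a single merge. One builds a gadget $\Rastack$ of run lengths, defined by recursion on a size parameter, that concatenates a small ``seed'' block with a sequence of further sub-blocks whose target lengths increase by the ratio $1 + 1/\alpha$ and each of which is a recursively generated gadget of the appropriate size. The scale factors are chosen so that, from one recursion level to the next, the total length~$n$ grows far faster than the run count~$m$, guaranteeing $\log m = o(\log n)$; the merge cost then satisfies a recurrence whose solution grows like $n \log n$ (or at any rate like $n$ times an unbounded multiple of $\log m$). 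Setting up the recurrences for length, run count, and merge cost, checking the base case, and reading off the asymptotics is routine once the gadget is fixed.

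The main obstacle is proving that $\alpha$-stack sort really does execute the gadget in the intended way: one must show that no \emph{premature} merge --- a merge of a low, long run with a block above it that has not yet been completely assembled --- is ever triggered. This has to be established by induction on the recursion depth, maintaining an invariant strong enough to pin down the lengths of the top stack elements at each stage and checking the test $|Y| \le \alpha|Z|$ both where a merge is wanted and where it must be suppressed, much as Claim~\ref{clm:timsortLower} does for Timsort. A secondary complication is that $\alpha$ is a free real parameter, so the inequalities separating ``merge'' from ``no merge'' are tight rather than comfortable and the constructed lengths must be rounded to integers; I expect it will be cleanest either to treat $\alpha \ge 2$ and $1 < \alpha < 2$ separately, or to introduce a little slack via strict inequalities and absorb the rounding error into the $o(1)$ terms, exactly as in the proof of Theorem~\ref{thm:timsortLower}.
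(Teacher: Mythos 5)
Your opening observation is correct (any witnessing family must have $\log m = o(\log n)$, so one needs few but long runs doing superlogarithmically many merges each), but the recursive gadget you then sketch cannot deliver the bound, and the step you call routine is exactly where it breaks. In your design each completed sub-block immediately absorbs the pile below it, and the ratio $1+1/\alpha$ is tuned so that the pile is about $\alpha$ times the absorbing block; consequently the merges attributable to one level of the recursion cost only $O(L)$ for a gadget of length $L$ (the re-merging of the earlier, geometrically smaller blocks is a convergent geometric series), so the total merge cost is $\Theta\bigl(\sum_v L_v\bigr)$, i.e.\ $n$ times the length-weighted depth of the recursion tree. But if every sub-block is itself a recursively generated gadget with the same profile, each recursively built child occupies at most a constant fraction $f<1$ of its parent (the largest child is about a $\tfrac1{\alpha+1}$ fraction, and making it larger forces its own mass into \emph{its} last child, degenerating the construction). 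Writing $M(L)=\sum_j M(f_jL)$ and $C(L)=\Theta(L)+\sum_j C(f_jL)$ with $\sum_j f_j\approx 1$ and $\max_j f_j\le f<1$, one finds that the length-weighted depth is $O(\log m)$ no matter where the recursion is cut off: raising the cutoff to force $\log m=o(\log n)$ lowers the depth in the same proportion. So your gadget has merge cost $\Theta(n\log m)$, and the claim that its recurrence solves to ``$n$ times an unbounded multiple of $\log m$'' is false; it can never witness $\omega(n\log m)$. What is needed is a run of length $\Theta(n)$ that participates in $\Theta(m)$ merges, i.e.\ an extremely lopsided merge tree, which is precisely what the ``comparable blocks, each recursive'' shape forbids.

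The paper's proof uses a one-line, non-recursive family achieving exactly that: with $s$ least such that $2^s\ge\alpha$, take $\Rastack(m)=\langle 2^{(m-1)s}-1,\,2^{(m-2)s}-1,\ldots,2^s-1,\,2^{ms}\rangle$. Because the first $m-1$ runs \emph{decrease} by a factor $\ge\alpha$, the test $|Y|\le\alpha|Z|$ never fires while they are loaded, so there is no delicate ``no premature merge'' induction at all (your main anticipated obstacle simply does not arise). When the final run of length $2^{ms}>n/2$ is pushed, it triggers a right-to-left cascade of $m-1$ merges, each containing that run, for total cost $>(m-1)2^{ms}>\tfrac n2(m-1)=\omega(n\log m)$. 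If you want to keep a recursive picture, the construction that works is the degenerate caterpillar in which each level contributes one \emph{short} run plus one block holding almost all of the remaining mass, with the short runs shrinking geometrically; the block sizes are wildly unequal, not within the ratio $1+1/\alpha$, and the deep object is a single run rather than a recursively assembled gadget.
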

\noindent
In other words, for any $c > 0$, there are
inputs with arbitrarily large values for
$n$ and~$m$ so that $\alpha$-stack sort has merge cost
$> c \cdot n \log m$.

\begin{proof}
  Let $s$ be the least integer such that $2^s\ge \alpha$. Let $\Rastack(m)$ be
  the sequence of run lengths
  \[
    \langle \, 2^{(m - 1)\cdot s}  -  1, \,
             2^{(m - 2)\cdot s}  -  1, \,
             \ldots, 2^{3s} - 1, \, 2^{2s} - 1, \, 2^s - 1, \,
             2^{m\cdot s}
    \rangle.
  \]
  $\Rastack(m)$ describes $m$ runs whose lengths sum to
  $n = 2^{m\cdot s} + \sum_{i = 1}^{m - 1} (2^{i \cdot s} - 1)$,
  so $ 2^{m\cdot s} < n < 2^{m\cdot s+1}$.
  Since $2^s\ge \alpha$, the test $|Y| \le \alpha |Z|$ on
  line~\ref{algline:astackYZ} of
  Algorithm~\ref{alg:alphaStackSort} is triggered
  only when the run of length $2^{ms}$ is loaded onto
  the stack~$\mathcal \QQ$; once this happens the runs are all
  merged in order from right-to-left.
  The total cost of the merges is
  $(m - 1)\cdot 2^{ms} + \sum_{i=1}^{m-1} i\cdot (2^{i\cdot s}  -  1)$
  which is certainly greater than $(m - 1)\cdot 2^{ms}$.  Indeed,
  that comes from the fact that the final run in~$\Rastack(m)$ is
  involved in $m-1$ merges.  Since $n<2\cdot 2^{m\cdot s}$,
  the total merge cost is greater than $\frac n2(m - 1)$, which
  is $\omega(n\log m)$.
\end{proof}

\section{The Shivers merge sort}\label{sec:shivers}

The 2-aware Shivers sort~\cite{Shivers:mergesort}, shown in
Algorithm~\ref{alg:shiversSort},
is similar to $2$-stack sort, but
with a modification that makes a surprising improvement
in the bounds on its merge cost. Although never published,
this algorithm was presented in 1999.

\begin{algorithm}
\begin{algorithmic}[1]
\Procedure{ssort}{$S$, $n$}
  \State $\mathcal{R} \gets \text{run decomposition of }S$
  \State $\mathcal \QQ \gets \emptyset$
  \While{$\mathcal{R} \neq \emptyset$}     \label{algline:shiversLoopA}
    \State Remove the next run $R$ from $\mathcal{R}$ and push it onto~$\mathcal \QQ$
    \While{$2^{\lfloor{\log |Y|}\rfloor} \le |Z|$}  \label{algline:shiversYZ}
      \State{Merge $Y$ and $Z$}
    \EndWhile                               \label{algline:ShiversInnerLoopB}
  \EndWhile                                \label{algline:shiversLoopB}
  \While{$|\mathcal \QQ| \ge 1$} \label{algline:shiversEndA}
    \State{Merge $Y$ and $Z$}
  \EndWhile                     \label{algline:shiversEndB}
\EndProcedure
\end{algorithmic}
\caption{The Shivers sort.}
\label{alg:shiversSort}
\end{algorithm}

The only difference between the Shivers sort
and $2$-stack sort is the test used to decide
when to merge. Namely, line~\ref{algline:shiversYZ}
tests $2^{\lfloor \log |Y|\rfloor} \le |Z|$ instead
of $|Y| \le 2\cdot|Z|$. Since $2^{\lfloor \log |Y|\rfloor}$
is $|Y|$ rounded down to the nearest power of two,
this is somewhat like an $\alpha$-sort with
$\alpha$ varying dynamically in the range $[1,2)$.

The Shivers sort has the same
undesirable lower
bound as $2$-stack sort in terms of $\omega(n \log m)$:

\begin{theorem}\label{thm:ShiversLowerBound}
  The worst-case merge cost of the Shivers sort
  on inputs of length~$n$ which decompose into $m$
  original runs is $\omega(n \log m)$.
\end{theorem}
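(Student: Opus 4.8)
The plan is to mirror the proof of Theorem~\ref{thm:alphaLowerLogm}: I will construct, for each~$m$, a sequence $\Rshiver(m)$ of $m$ run lengths (summing to some $n = n(m) \to \infty$) on which the run-decomposition phase pushes all $m$ runs onto the stack $\mathcal{Q}$ without performing a single merge, and on which the very last run pushed is so large that it is then merged, one at a time, with each of the preceding $m-1$ runs, every such merge costing at least a constant fraction of~$n$. This immediately yields total merge cost of order $m \cdot n$, which is $\omega(n\log m)$.

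Concretely I would take the run lengths to be exact powers of two,
\[
  \Rshiver(m) ~=~ \langle\, 2^{m-1},\ 2^{m-2},\ \ldots,\ 2^{2},\ 2^{1},\ 2^{m} \,\rangle,
\]
so the first $m-1$ runs have strictly decreasing lengths $2^{m-1} > \cdots > 2^1$, the last run has length $2^m$, and $n = (2^m - 2) + 2^m = 2^{m+1}-2$; the smallest run length is $2$, so this is a legal run decomposition (or one invokes the doubling remark from the Timsort footnote). First I would check the building phase. The Shivers merge test on line~\ref{algline:shiversYZ} is $2^{\lfloor \log |Y|\rfloor} \le |Z|$, and for $|Y| = 2^a$, $|Z| = 2^b$ this reads simply $a \le b$. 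Since the exponents $m-1, m-2, \ldots, 1$ of the first $m-1$ runs strictly decrease, after each push we have $Y = 2^{a}$, $Z = 2^{a-1}$, and the test fails; hence no merge is triggered and all $m-1$ runs remain on the stack. (Shivers sort is $2$-aware, so once the top two runs fail the test the configuration is stable.)

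Next I would trace the cascade. When the final run of length $2^m$ is pushed we get $Z = 2^m$, $Y = 2^1$, and the test $2^{\lfloor\log 2\rfloor} = 2 \le 2^m$ holds, so $Y$ and $Z$ are merged. Thereafter $Z$ is the run accumulating the large run (its length is always $\ge 2^m$) while $Y$ ranges over $2^2, 2^3, \ldots, 2^{m-1}$; since $2^k < 2^m \le |Z|$ the test $2^{\lfloor\log|Y|\rfloor} = 2^k \le |Z|$ keeps holding, so the inner loop runs to completion, performing $m-1$ merges, after which the stack holds the single run of length~$n$ and the algorithm terminates. Each of these $m-1$ merges has one argument of length $\ge 2^m > n/2$, hence costs more than $n/2$; the total merge cost is therefore $> (m-1)\,n/2$. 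Since $(m-1)/2 \to \infty$ strictly faster than $\log m$, for every constant $c$ one eventually has $(m-1)\,n/2 > c\cdot n\log m$, i.e.\ the merge cost is $\omega(n\log m)$, proving Theorem~\ref{thm:ShiversLowerBound}.

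I do not expect a genuine obstacle here; the whole content is choosing the run lengths. The two points needing a line of care are: (i) the building phase must create no merges, which forces the run lengths, viewed as powers of two, to have strictly decreasing exponents — exact powers of two are cleanest because then the floor in the Shivers test is harmless; and (ii) the last run must remain larger than whatever run it sits above throughout the cascade, so the inner loop does not stall before consuming all $m-1$ earlier runs, while simultaneously being a constant fraction of~$n$ so each of its $m-1$ merges costs $\Omega(n)$. Taking the last run equal to $2^m$ handles both at once, and the only remaining triviality is that original runs must have length $\ge 2$, which this construction already satisfies.
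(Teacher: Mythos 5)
Your proof is correct and follows essentially the same route as the paper: a sequence of $m-1$ runs of decreasing lengths that trigger no merges during the build phase, followed by one huge run ($2^m$, more than half of $n$) that forces a cascade of $m-1$ merges each costing $\Omega(n)$, exactly mirroring the proof of Theorem~\ref{thm:alphaLowerLogm}. The only (cosmetic) difference is that you use exact powers of two $2^{m-1},\ldots,2^1$ where the paper uses $2^{m-1}-1,\ldots,3,1$; both make the Shivers test fail during loading, and yours has the minor advantage that all runs have length at least~$2$.
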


\begin{proof}
  This is identical to the proof of Theorem~\ref{thm:alphaLowerLogm}.
  We now let $\Rshiver(m)$ be the sequence of run lengths
  \[
    \langle \, 2^{m - 1}  -  1, \, 2^{m - 2}  -  1,
      \, \dots, 7, \, 3, \, 1, \, 2^m \rangle,
  \]
  and argue as before.
\end{proof}

\begin{theorem}[\cite{Shivers:mergesort}]\label{thm:ShiversUpperBound}
  The merge cost of Shivers sort is $(1  +   o(1)) n \log n$.
\end{theorem}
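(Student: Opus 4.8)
The plan is to treat the two phases of Algorithm~\ref{alg:shiversSort} separately: the main loop (lines~\ref{algline:shiversLoopA}--\ref{algline:shiversLoopB}), and the final ``clean-up'' loop (lines~\ref{algline:shiversEndA}--\ref{algline:shiversEndB}). First I would record the structural invariant, maintained whenever control returns to the top of the outer loop (call such a stack state \emph{stable}): $|\QQ_{i+1}| < 2^{\lfloor\log|\QQ_i|\rfloor}$ for every $i<\ell$. For the top two entries this is exactly the negation of the test on line~\ref{algline:shiversYZ}; for lower entries it holds because a run below the top is never modified once it stops being the top, so the inequality that made line~\ref{algline:shiversYZ} fail when that run was the top still holds. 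Writing $k_i := \lfloor\log|\QQ_i|\rfloor$, this gives $k_1 > k_2 > \cdots > k_\ell$, hence $\ell = O(\log n)$, and in fact $|\QQ_i| < 2^{k_1-i+2}$, i.e.\ run lengths decay geometrically down a stable stack.

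The core claim is that \emph{during the main loop, the elements of any one original run $R$ take part in at most $\lfloor\log n\rfloor$ merges}. Track $\kappa(R)$, the value of $\lfloor\log\rfloor$ of the length of the run currently containing $R$; it is non-decreasing, since a merge $\Merge(Y,Z)$ producing $\QQ$ satisfies $k_\QQ\ge k_Y+1$ and $k_\QQ\ge k_Z$. Now consider a single pass of the inner loop after a run is pushed: it performs merges $M_1,\dots,M_p$, each combining the current top two entries, so the ``blob'' at the top only grows, successively absorbing the entries $\QQ_{\ell-1},\QQ_{\ell-2},\dots$, which by the (still valid) stability invariant for the untouched part of the stack have \emph{strictly increasing} levels. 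If $R$'s elements first enter the blob at merge $M_s$ --- either because $R$ lies in the just-pushed run, or because $M_s$ is the merge absorbing the entry containing $R$ --- then after the remaining $p-s$ merges the blob's level has risen by at least $p-s$ up to a bounded additive slack, precisely because those merges absorb runs of strictly increasing level and hence eventually force the blob's level upward. Summing over the passes that touch $R$ and telescoping the per-pass level increases, the total number of merges touching $R$ is at most the final value of $\kappa(R)$, which is at most $\lfloor\log n\rfloor$. Since the cost $|A|+|B|$ of a single merge equals $\sum_{R\subseteq A\cup B}|R|$, the total merge cost incurred in the main loop is $\sum_R|R|\cdot(\#\text{merges containing }R)\le n\lfloor\log n\rfloor$.

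For the clean-up loop, the stack is stable when it starts, so $|\QQ_i|<2^{k_1-i+2}$ and $2^{k_1}\le|\QQ_1|\le n$. The loop merges the top two entries $\ell-1$ times, so its $j$-th merge costs $\sum_{i=\ell-j}^{\ell}|\QQ_i|$, and the total clean-up cost is $\sum_{i=1}^{\ell-1}i\,|\QQ_i|+(\ell-1)|\QQ_\ell| < 4\cdot 2^{k_1}\sum_{i\ge1}i\,2^{-i}+O(\log n)=O(n)$. Adding the two phases, the merge cost of Shivers sort is at most $n\lfloor\log n\rfloor+O(n)=(1+o(1))\,n\log n$, as claimed.

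The step I expect to be the real work is the telescoping claim in the second paragraph: one must show that each pass of the inner loop contributes to $R$'s merge count at most the net rise of $\kappa(R)$ during that pass. This is \emph{not} a merge-by-merge estimate --- a pass can perform many merges while $\kappa(R)$ barely moves (a large blob swallowing several much shorter runs) --- and it is recovered only from the fact that the absorbed runs have strictly increasing $\lfloor\log\rfloor$-levels. Getting the multiplicative constant to be $1$ rather than $2$ depends on telescoping these per-pass bounds across all passes touching $R$ so that the bounded slacks cancel. A secondary subtlety is that the clean-up merges do not obey this increasing-level structure, which is exactly why that phase is bounded separately by the geometric-decay estimate instead of being folded into the merge-count argument.
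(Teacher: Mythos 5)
Your route is genuinely different from the paper's. The paper never counts merges per element: it proves, by induction on stack operations, the aggregate weight invariant $w_{\QQ_i}\le k_{\QQ_i}|\QQ_i|$ for all $i$ together with the refinement $w_Z\le k_Y|Z|$ for the top run, and then sums $\sum_i k_i|\QQ_i|\le n\log n$; your proposal instead proves the stronger per-original-run statement that each run participates in at most $\log n+O(1)$ merges during the main loop, and charges cost run by run. Your treatment of the clean-up loop is the same geometric-decay bound as the paper's. The per-run claim is in fact true, and your overall plan (levels $\kappa(R)$ are non-decreasing, absorbed entries within a pass have strictly increasing levels, telescope over passes) can be completed, so this is a viable alternative proof; the paper's invariant argument is shorter because conditions a.\ and b.\ can be checked locally at each stack operation, whereas your argument needs the global structure of how a given run's ``blob'' interacts with the fixed stack beneath it.

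The step that needs repair is exactly the one you flag, and your stated form of it is not correct as written. The per-pass estimate ``after the remaining $p-s$ merges the blob's level has risen by at least $p-s$ up to a bounded additive slack'' fails in the pass in which $R$ lies in the just-pushed run: if that run is huge (level near $\log n$) and swallows small entries of levels $\mu_1<\mu_2<\cdots$, the number of merges can exceed the level rise by $k_R-\mu_1$, which is not $O(1)$ but can be of order $\log n$; so ``bounded slacks cancelling in the telescoping'' is not the right accounting, and if the slack really were a nonzero constant per pass you would lose the leading constant $1$ anyway. The correct bookkeeping separates the two cases. (i) In any pass where $R$ sits in an entry that gets absorbed at merge $M_s$: at that moment the blob has length $\ge 2^{\mu_s}$, so after $M_s$ its level is $\ge\mu_s+1$, and since the subsequently absorbed entries have strictly increasing levels, after $M_t$ the level is $\ge\mu_t+1$; hence the rise seen by $R$ in that pass is at least the number of merges $R$ takes part in, with no slack at all. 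Moreover the level of $R$'s run is unchanged between the passes that touch it, so these contributions telescope to $\kappa_f(R)-\kappa(R)$ over all such passes. (ii) The exceptional case occurs only once, in the pass where $R$ is pushed: there you must instead bound the merge count by the number of distinct levels of the absorbed entries, which are strictly increasing and at most the blob's level at the end of that pass, so that pass contributes at most $\lambda+1$ merges where $\lambda$ is the level of $R$'s run when the pass ends; this is paid for by the level interval $[0,\lambda]$ and then (i) covers the rest, giving a total of at most $\kappa_f(R)+1\le\lfloor\log n\rfloor+1$ merges per original run, which still yields $(1+o(1))\,n\log n$. With the argument rewritten along these lines (and noting that once $R$'s blob is buried, the only merge that can involve it again is one that absorbs the run above it, which raises its level by at least $1$ --- this is what guarantees the clean case split), your proof goes through.
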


We present a proof which is simpler than that of~\cite{Shivers:mergesort}.
The proof of Theorem~\ref{thm:ShiversUpperBound}
assumes that at a given point in time, the stack~$\mathcal \QQ$ has $\ell$
elements $\QQ_1, \ldots, \QQ_\ell$, and uses $w_{\QQ_i}$ to denote the merge cost of~$\QQ_i$.
We continue to use the convention that $W,X,Y,Z$ denote $\QQ_{\ell - 3},
\QQ_{\ell - 2}, \QQ_{\ell - 1}, \QQ_\ell$ if they exist.

\begin{proof}
  Define $k_{\QQ_i}$ to equal $\lfloor \log |\QQ_i| \rfloor$.
  Obviously, $|\QQ_i| \ge 2^{k_{\QQ_i}}$. The
  test on line~\ref{algline:shiversYZ} works to
  maintain the invariant that each $|\QQ_{i+1}| < 2^{k_{\QQ_i}}$
  or equivalently $k_{i+1}<k_i$.
  Thus, for $i<\ell - 1$, we always
  have $|\QQ_{i+1}| < 2^{k_{\QQ_i}}$ and $k_{i+1}<k_i$.
  This condition can be momentarily violated for $i=\ell - 1$,
  i.e.\ if $|Z|\ge 2^{k_Y}$ and $k_Y\le k_Z$,
  but then the Shivers sort immediately merges $Y$ and~$Z$.

  As a side remark, since each $k_{i+1}<k_i$ for $i\le \ell - 1$,
  since $2^{k_1}\le|\QQ_1|\le n$,
  and since $\QQ_{\ell - 1}\ge 1 = 2^0$, the stack height $\ell$
  is $\le 2+\log n$. (In fact, a better analysis shows
  it is $\le 1+\log n$.)

  \begin{claim}\label{clm:ShiversUpperBound}
    Throughout the execution of the main loop (lines
    {\rm \ref{algline:shiversLoopA}-\ref{algline:shiversLoopB}}), the Shivers sort
    satisfies
    \begin{itemize}
      \setlength{\parsep}{0pt}
      \setlength{\itemsep}{0pt}
      \item[\rm a.] $w_{\QQ_i} \le k_{\QQ_i}\cdot |\QQ_i|$, for all $i\le\ell$,
      \item[\rm b.] $w_Z \le k_Y \cdot |Z|$ i.e.,~$w_{\QQ_\ell} \le
        k_{\QQ_{\ell - 1}} \cdot |\QQ_\ell|$, if $\ell > 1$.
    \end{itemize}
  \end{claim}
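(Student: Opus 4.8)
The plan is to prove both (a) and (b) simultaneously by induction on the number of stack operations (pushes and merges) performed in the main loop. Initially the stack is empty, so both statements hold vacuously. There are two kinds of steps to analyze: a push of a new original run, and a merge of $Y$ and $Z$ triggered by the test $2^{\lfloor \log|Y|\rfloor}\le|Z|$ on line~\ref{algline:shiversYZ}.

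For a \textbf{push} of an original run $R$: the new top $\QQ_\ell$ is $R$ with $w_R=0$, so (a) holds trivially for $i=\ell$, and for $i<\ell$ the runs and their merge costs are unchanged, so (a) persists; (b) also holds since $w_{\QQ_\ell}=0\le k_{\QQ_{\ell-1}}\cdot|\QQ_\ell|$. (There is a subtlety: the new $Y$ is the old $Z$, and we need the old inductive hypothesis to guarantee (a) for this element — which it does.) For a \textbf{merge} of $Y$ and $Z$: the merge was triggered because $2^{k_Y}=2^{\lfloor\log|Y|\rfloor}\le|Z|$, i.e.\ $|Z|\ge 2^{k_Y}$. Let $\QQ'$ be the merged run, so $|\QQ'|=|Y|+|Z|$ and its merge cost is $w_{\QQ'}=w_Y+w_Z+(|Y|+|Z|)$. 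Using (a) for $Y$ we have $w_Y\le k_Y|Y|$, and using (b) for $Z$ we have $w_Z\le k_Y|Z|$; hence
\[
  w_{\QQ'}\ \le\ k_Y|Y|+k_Y|Z|+|Y|+|Z|\ =\ (k_Y+1)\bigl(|Y|+|Z|\bigr).
\]
The key arithmetic point is that $|\QQ'|=|Y|+|Z|\ge 2^{k_Y}+2^{k_Y}=2^{k_Y+1}$ (using $|Y|\ge 2^{k_Y}$ and $|Z|\ge 2^{k_Y}$), so $k_{\QQ'}=\lfloor\log|\QQ'|\rfloor\ge k_Y+1$, and therefore $w_{\QQ'}\le (k_Y+1)|\QQ'|\le k_{\QQ'}|\QQ'|$. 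This gives (a) for the new top element. For the remaining elements below the top, their run lengths and merge costs are untouched, so (a) still holds for them.

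The \textbf{main obstacle} is verifying (b) after a merge, since the element that plays the role of $Z$ changes. After merging $Y$ and $Z$ into $\QQ'$, the new $Y$ is the old $X$ (call its rounded-log value $k_X$) and the new $Z$ is $\QQ'$; we must show $w_{\QQ'}\le k_X\cdot|\QQ'|$. Here I would use the invariant, noted in the paragraph before the claim, that $k_{i+1}<k_i$ holds for all $i\le\ell-1$ \emph{before} the merge — in particular $k_Y<k_X$, i.e.\ $k_X\ge k_Y+1$. Combined with $w_{\QQ'}\le(k_Y+1)|\QQ'|$ from above, this yields $w_{\QQ'}\le(k_Y+1)|\QQ'|\le k_X|\QQ'|$, which is exactly (b) for the new configuration. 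One must handle the boundary cases where $X$ does not exist (then (b) is vacuous or $\ell$ drops to~1) and where the main loop has just pushed a run that immediately triggers several merges in succession; the latter is handled automatically because each individual merge step is covered by the inductive argument above. Once the claim is established, Theorem~\ref{thm:ShiversUpperBound} follows by applying (a) to the final single run after the cleanup loop, together with the bound $\ell\le 1+\log n$ on the stack height to control the extra merges performed in lines~\ref{algline:shiversEndA}--\ref{algline:shiversEndB}.
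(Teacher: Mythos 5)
Your proposal is correct and follows essentially the same route as the paper: induction on stack operations, using hypothesis (a) for $Y$ and (b) for $Z$ to get $w_{Y Z}\le(k_Y+1)(|Y|+|Z|)$, then $|Y|,|Z|\ge 2^{k_Y}$ to conclude $k_{YZ}\ge k_Y+1$ for (a), and the standing invariant $k_Y<k_X$ to conclude (b). The only nitpick is that the invariant holds for $i<\ell-1$ (it is precisely the pair $(Y,Z)$ that is momentarily violated when a merge triggers), but since you only invoke it for the pair $(X,Y)$ this does not affect the argument.
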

  \noindent
  When $i = \ell$, a.\ says $w_Z \le k_Z |Z|$.
  Since $k_Z$ can be less than or greater than~$k_Y$, this
  neither implies, nor is implied by,~b.

  The lemma is proved by induction on the number of updates to the
  stack~$\mathcal \QQ$ during the loop. Initially $\mathcal \QQ$ is empty,
  and a.\ and~b.\ hold trivially.
  There are two induction cases to consider. The first case is when
  an original run is
  pushed onto~$\mathcal \QQ$. Since this run, namely~$Z$, has never
  been merged, its weight is $w_Z=0$. So b.\ certainly holds.
  For the same reason and using the induction hypothesis, a.\ holds.
  The second case is when $2^{k_Y} \le |Z|$, so $k_Y\le k_Z$, and
  $Y$ and~$Z$ are merged; here $\ell$ will decrease by~1.
  The merge cost~$w_{\YZ}$ of the combination of $Y$ and $Z$ equals
  $|Y| + |Z|+w_Y+w_Z$, so
  we must establish two things:
  \begin{itemize}
      \item[\rm a${}^\prime$.] $|Y| + |Z|+w_Y+w_Z ~\le~ k_{\YZ} \cdot (|Y| +  |Z|)$, where
          $k_{\YZ}= \lfloor \log(|Y| +  |Z|)\rfloor$.
      \item[\rm b${}^\prime$.] $|Y| + |Z|+w_Y+w_Z ~\le~ k_X \cdot (|Y| +  |Z|)$, if $\ell>2$.
  \end{itemize}
  By induction hypotheses $w_Y \le k_Y |Y|$ and
  $w_Z \le k_Y|Z|$.  Thus the lefthand sides of a${}^\prime$.\ and~b${}^\prime$.\
  are $\le (k_Y+1)\cdot(|Y| + |Z|)$.
  As already discussed, $k_Y< k_X$, therefore condition~b.\ implies that
  b${}^\prime$.\ holds. And since
  $2^{k_Y} \le |Z|$, $k_Y < k_{\YZ}$, so condition~a.\ implies that
  a${}^\prime$.\ also holds.  This
  completes the proof of Claim~\ref{clm:ShiversUpperBound}.

  Claim~\ref{clm:ShiversUpperBound} implies that
  the total merge cost incurred at the end of the main loop incurred is
  $\le \sum_i w_{\QQ_i}  \le \sum_i k_i |\QQ_i|$. Since $\sum_i \QQ_i = n$
  and each $k_i \le \log n$, the
  total merge cost is $\le n \log n$.

  We now upper bound the total merge cost incurred during the
  final loop on lines \ref{algline:shiversEndA}-\ref{algline:shiversEndB}.
  When first reaching line~\ref{algline:shiversEndA},
  we have $k_{i+1}<k_i$ for all $i\le \ell - 1$ hence
  $k_i<k_1+1-i$ and $|\QQ_i|<2^{k_1+2-i}$ for all $i\le \ell$. The final loop then
  performs $\ell - 1$ merges from
  right to left. Each $\QQ_i$ for $i < \ell$ participates
  in $i$ merge operations and $\QQ_\ell$ participates
  in $\ell - 1$ merges. The total merge cost of this is less than
  $\sum_{i=1}^\ell i\cdot|\QQ_i|$. Note that
  \[
      \sum_{i=1}^\ell i\cdot|\QQ_i|
      ~<~
      2^{k_1+2}\cdot \sum_{i=1}^\ell i \cdot 2^{-i}
      ~<~
      2^{k_1+2} \cdot 2 ~=~ 8\cdot 2^{k_1} ~\le~ 8n,
  \]
  where the last inequality follows by $2^{k_1} \le |\QQ_1| \le n$. Thus, the
  final loop incurs a merge cost $O(n)$, which is $o(n \log n)$.

  Therefore the total merge cost for the Shivers sort is bounded by $n \log n
  + o(n\log n)$.
\end{proof}

\section{The 2-merge and \texorpdfstring{$\alpha$}{a}-merge sorts}
\label{sec:alphaMerge}

This section introduces our new merge sorting algorithms, called the
``2-merge sort'' and the
``$\alpha$-merge sort'', where $\varphi < \alpha < 2$ is a fixed parameter.
These sorts are 3-aware, and this enables us to get algorithms with merge costs
$(c_\alpha \pm o(1)) n \log m$. The idea of the $\alpha$-merge sort is to
combine the construction of $2$-stack sort, with the idea from
Timsort of merging $X$ and~$Y$ instead of $Y$ and~$Z$ whenever $|X| < |Z|$.  But
unlike the Timsort algorithm shown in Algorithm~\ref{alg:timsort}, we are able
to use a 3-aware algorithm instead of a 4-aware algorithm. In addition, our
merging rules are simpler, and our provable upper bounds are tighter. Indeed,
our upper bounds for $\varphi < \alpha\le 2$ are of the form
$(c_\alpha  +   o(1))\cdot n \log m$ with $c_\alpha \le c_2 \approx 1.089$, but
Theorem~\ref{thm:timsortLower} proves a lower bound $1.5 \cdot n \log m$ for
Timsort.

Algorithms~\ref{alg:twoMergeSort} and~\ref{alg:alphaMergeSort} show
the $2$-merge sort and $\alpha$-merge
sort algorithms. Note that $2$-merge sort is almost, but not quite, the
specialization of $\alpha$-merge sort to the case $\alpha=2$.
The difference is that line~\ref{algline:2mergeXYZ} of
$2$-merge sort has a simpler {\bf while} test than the
corresponding line in the $\alpha$-merge sort algorithm.
As will be shown by the proof of Theorem~\ref{thm:twomergeUpper},
the fact that Algorithm~\ref{alg:twoMergeSort} uses this
simpler {\bf while} test makes no difference to which merge operations
are performed; in other words, it would be redundant to test the
condition $|X| < 2|Y|$.

The $2$-merge sort can also be compared to the
$\alpha$-stack sort shown in Algorithm~\ref{alg:alphaStackSort}.
The main difference is that the merge of $Y$ and~$Z$ on
line~\ref{algline:astackMergeYZ1}
of the $\alpha$-stack sort algorithm has been replaced
by the lines
lines~\ref{algline:2mergeXZ}-\ref{algline:2mergeXZend}
of the $2$-merge sort algorithm which conditionally merge~$Y$ with either $X$ or~$Z$.
For $2$-merge sort
(and $\alpha$-merge sort), the run~$Y$ is
never merged with~$Z$ if it could instead be merged with
a shorter~$X$.  The other, perhaps less crucial, difference is that the weak
inequality test on line~\ref{algline:astackYZ}
in the $\alpha$-stack sort algorithm has been replaced with a
strict inequality test on line~\ref{algline:astackYZ}
in the $\alpha$-merge sort algorithm. We have made this change since
it seems to make $2$-merge sort more efficient, for instance when
all original runs have the same length.

\begin{algorithm}
    \begin{algorithmic}[1]
        \Procedure{$2$-merge}{$S$, $n$}
          \State $\mathcal{R} \gets \text{run decomposition of }S$
          \State $\mathcal \QQ \gets \emptyset$
          \While{$\mathcal{R} \neq \emptyset$}
            \State Remove the next run $R$ from $\mathcal{R}$ and push it
            onto~$\mathcal \QQ$
            \While{$|Y| < 2 |Z|$}\label{algline:2mergeXYZ}
              \If{$|X| < |Z|$}                              \label{algline:2mergeXZ}
                \State{Merge $X$ and $Y$}
              \Else
                \State{Merge $Y$ and $Z$}
              \EndIf                                        \label{algline:2mergeXZend}
            \EndWhile                       %\label{algline:2mergeinnerloopB}
          \EndWhile
          \While{$|\mathcal \QQ| \ge 1$}     \label{algline:2mergefinalloopA}
            \State{Merge $Y$ and $Z$}
          \EndWhile                         \label{algline:2mergefinalloopB}
        \EndProcedure
    \end{algorithmic}
    \caption{The $2$-merge sort.}
    \label{alg:twoMergeSort}
\end{algorithm}

\begin{algorithm}
    \begin{algorithmic}[1]
        \Procedure{$\alpha\text{-merge}$}{$S$, $n$}
          \State $\mathcal{R} \gets \text{run decomposition of }S$
          \State $\mathcal \QQ \gets \emptyset$
          \While{$\mathcal{R} \neq \emptyset$}
            \State Remove the next run $R$ from $\mathcal{R}$ and push it
            onto~$\mathcal \QQ$
            \While{$|Y| < \alpha |Z|$ or $|X| < \alpha |Y|$}\label{algline:amergeXYZ}
              \If{$|X| < |Z|$}                              \label{algline:amergeXZ}
                \State{Merge $X$ and $Y$}
              \Else
                \State{Merge $Y$ and $Z$}
              \EndIf
            \EndWhile                       \label{algline:amergeinnerloopB}
          \EndWhile
          \While{$|\mathcal \QQ| \ge 1$}     \label{algline:amergefinalloopA}
            \State{Merge $Y$ and $Z$}
          \EndWhile                         \label{algline:amergefinalloopB}
        \EndProcedure
    \end{algorithmic}
    \caption{The $\alpha$-merge sort.  $\alpha$ is a constant such that $\varphi<\alpha<2$.}
    \label{alg:alphaMergeSort}
\end{algorithm}

We will concentrate mainly on the
cases for $\varphi < \alpha\le 2$ where $\varphi \approx 1.618$
is the golden ratio.  Values for $\alpha>2$ do not seem to give
useful merge sorts; our upper bound proof does not
work for $\alpha \le \varphi$.

\begin{definition}
    Let $\alpha\ge 1$, the constant $c_\alpha$ is defined by
    \begin{equation}\label{eq:cAlphaDef}
        c_\alpha ~=~
            \frac{\alpha + 1}{(\alpha  +   1) \log(\alpha  +   1) -
            \alpha \log(\alpha)}.
    \end{equation}
\end{definition}
For $\alpha = 2$, $c_2 = 3/\log (27/4) \approx 1.08897$.
For $\alpha = \varphi$, $c_\varphi \approx 1.042298$.
For $\alpha > 1$, $c_\alpha$ is strictly increasing as a function
of~$\alpha$. Thus, $1.042 < c_\alpha < 1.089$ when $\varphi < \alpha \le 2$.

Sections~\ref{sec:alphaMergeLower}-\ref{sec:alphaMergeUpperPf}
give nearly matching upper and lower bounds for the
worst-case running time of $2$-merge sort and $\alpha$-merge sort for
$\varphi < \alpha < 2$.  As discussed earlier
(see also Table~\ref{tab:allbounds}) the upper and lower bounds are
asymptotically equal to $c_2\cdot n \log m$. The tight upper bound proofs for
Theorems~\ref{thm:alphamergeUpper} and~\ref{thm:twomergeUpper} involve a lot of
technical details, so as a warmup, Section~\ref{sec:simpleUpperBounds} gives
a weaker upper bound with a much simpler proof.

\subsection{Simple upper bounds for 2-merge sort and
  \texorpdfstring{$\alpha$}{a}-merge sort}
\label{sec:simpleUpperBounds}

This section proves $O(n \log n)$ upper bounds on the merge costs for
the $2$-merge and $\alpha$-merge sorts. The proofs are in spirit of
techniques used by~\cite{AJNP:Timsort,Juge:AdaptiveShivers}. First,
Theorem~\ref{thm:mergeStackSize} establishes an upper bound on the
size of the stack.

\begin{theorem}\label{thm:mergeStackSize}
  Let $\varphi < \alpha < 2$. For $\alpha$-merge sort on inputs of length~$n$,
  the size $|\mathcal \QQ|$ of the stack is always
  $< 1 + \lfloor \log_\alpha n\rfloor =
    1 + \lfloor (\log n) / (\log \alpha)\rfloor$.
  For $2$-merge sort, the size of the stack is always
  $\le 1 + \lfloor \log n\rfloor$.
\end{theorem}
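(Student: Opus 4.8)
The plan is to show that whenever a new original run is pushed onto the stack and the inner \textbf{while} loop has terminated, the runs on the stack grow geometrically from the top down, with ratio at least $\alpha$ (for $\alpha$-merge sort) or at least $2$ (for $2$-merge sort). Concretely, I would prove by induction on the number of stack operations that, at any moment when control is at the top of the outer \textbf{while} loop (i.e., the inner loop has just exited or the stack is freshly built), the stack $\QQ_1,\dots,\QQ_\ell$ satisfies $|\QQ_i| \ge \alpha\,|\QQ_{i+1}|$ for all $i < \ell-1$, together with a slightly weaker guarantee relating $|\QQ_{\ell-1}|$ and $|\QQ_\ell|$ coming from the fact that the loop test $|Y| < \alpha|Z|$ (and, in the $\alpha$-merge case, $|X| < \alpha|Y|$) has just failed. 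The key point is that the exit condition of the inner loop is exactly the negation of these inequalities for the top few runs, so once a run settles into position $\QQ_i$ with $i+1 < \ell$ it can never again be the top, the second, or the third element while the loop runs, hence the invariant for that index is frozen.

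The induction has two cases. When an original run $R$ is pushed (case (A)), the invariant among the old runs is untouched, and the inner loop then performs merges of the top two or three runs until its test fails; I would argue that each such merge either strictly decreases $\ell$ or re-establishes the top-of-stack inequalities, and that lower indices are never disturbed, so upon loop exit the invariant holds for all $i$. When a merge $\Merge(\QQ_i,\QQ_{i+1})$ is performed inside the loop, $i \in \{\ell-1,\ell-2\}$, so again only the top handful of entries change; the replaced run has length $|\QQ_i|+|\QQ_{i+1}|$, which is at least as large as either summand, so any inequality of the form $|\QQ_{i-1}| \ge \alpha|\QQ_i|$ that held before continues to hold (since $|\QQ_{i-1}|$ was at least $\alpha$ times the larger of the two old top runs when $i-1$ was still buried), and the remaining top inequalities are precisely what the loop keeps iterating to enforce. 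The one delicate spot is the $|X| < |Z|$ branch, which merges $X$ and $Y$ rather than $Y$ and $Z$: here I must check that merging the second and third runs still respects $|\QQ_{i-2}| \ge \alpha|\QQ_{i-1}|$, which follows because $|\QQ_{i-2}| \ge \alpha|\QQ_{i-1}| \ge \alpha(|\QQ_{i-1}|)$ and the new run has length $|\QQ_{i-1}|+|\QQ_i| = |X|+|Y|$, so I need $|\QQ_{i-2}| \ge \alpha(|X|+|Y|)$; this is where the hypothesis $\alpha < 2$ (and, for the tighter $2$-merge bound, a separate direct argument) gets used, since $|X|+|Y| < 2|Y| \le$ something controlled by the frozen invariant — this comparison is the main obstacle and will need the golden-ratio-type inequality $|Y| \ge \alpha|Z|$ combined with $|X| < |Z|$ to bound $|X|+|Y|$ against $|W|$.

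Once the invariant is established, the bound on $|\mathcal{Q}|$ is immediate: from $|\QQ_i| \ge \alpha\,|\QQ_{i+1}|$ for $i \le \ell-2$ and $|\QQ_{\ell-1}|,|\QQ_\ell| \ge 1$ we get $|\QQ_1| \ge \alpha^{\ell-2}$, and since $|\QQ_1| \le n$ this forces $\alpha^{\ell-2} \le n$, i.e. $\ell \le 2 + \log_\alpha n$, which I would sharpen to the stated $\ell < 1 + \lfloor \log_\alpha n\rfloor$ by being careful that $|\QQ_\ell| \ge 1$ and $|\QQ_{\ell-1}| \ge \alpha$ (the latter because $\QQ_{\ell-1}$ was at some point merged or is the second run, so it is at least $\alpha$ times the length of something $\ge 1$, or else it is an original run of length $\ge 1$ — I will need the convention that original run lengths are $\ge 1$). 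For the $2$-merge case the same chain with ratio $2$ gives $2^{\ell-2} \le |\QQ_1| \le n$; a one-line refinement using $|\QQ_{\ell-1}| \ge 1$ and $|\QQ_\ell| \ge 1$ together with the \emph{strict} inequalities in the invariant yields the cleaner bound $\ell \le 1 + \lfloor \log n\rfloor$. I expect the entire argument to mirror the structure of the Shivers-sort analysis (Claim~\ref{clm:ShiversUpperBound}), with the loop-exit condition playing the role there played by the test $2^{\lfloor\log|Y|\rfloor} \le |Z|$.
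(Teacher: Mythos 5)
Your overall strategy (a geometric-growth invariant on the stack, followed by a counting argument) is the same as the paper's, but two steps of your plan have genuine gaps. First, in the $|X|<|Z|$ branch you set out to prove that after merging $X$ and $Y$ the run above still dominates by a factor $\alpha$, i.e.\ $|W|\ge\alpha(|X|+|Y|)$, and you call this the main obstacle, hoping to use $|Y|\ge\alpha|Z|$ together with $|X|<|Z|$. That inequality is not available at that moment (the merge fires precisely because $|Y|<\alpha|Z|$ or $|X|<\alpha|Y|$), and the statement you are trying to prove is false in general: if $|W|=\alpha|X|$ exactly, then $|W|<\alpha(|X|+|Y|)$. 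Also, your ``freezing'' rationale is not quite right, since during the inner loop the stack shrinks and buried runs do rise into the top three. The paper sidesteps all of this by maintaining, during the merging phase, only the inequalities $|\QQ_{i-1}|\ge\alpha|\QQ_i|$ for $i<\ell-1$, i.e.\ excluding the top two pairs; these are trivially preserved because every merge replaces only the top or the second run, and the missing top pairs are supplied at loop exit by the exit test ($|X|\ge\alpha|Y|$ and $|Y|\ge\alpha|Z|$ for $\alpha$-merge). For $2$-merge sort the exit test gives only $|Y|\ge 2|Z|$, and the ``separate direct argument'' you defer is exactly the nontrivial step: the paper shows that during merging either $|\QQ_{\ell-2}|\ge 2|\QQ_{\ell-1}|$, or else $|\QQ_{\ell-2}|>|\QQ_{\ell-1}|$ and $|\QQ_{\ell-1}|<2|\QQ_\ell|$, and that the loop cannot terminate while the second alternative holds.

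Second, your counting step is too weak. From the invariant you use only $|\QQ_1|\le n$, which gives $\alpha^{\ell-2}\le n$, i.e.\ $\ell\le 2+\log_\alpha n$, one worse than the theorem; and the proposed sharpening $|\QQ_{\ell-1}|\ge\alpha$ is false, since $\QQ_{\ell-1}$ can be an original run of length $1$ (you concede this yourself). The paper instead sums the lengths of all runs on the stack at the moment a new run is pushed: $\sum_{i=1}^{\ell-1}|\QQ_i|\ge\frac{\alpha^{\ell-1}-1}{\alpha-1}\,|\QQ_{\ell-1}|$, so with $|\QQ_\ell|\ge 1$ the total length is at least $\frac{\alpha^{\ell-1}-1}{\alpha-1}+1\ge\alpha^{\ell-1}$, because $\alpha\le 2$ makes $\frac{1}{\alpha-1}\ge 1$. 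Since the total is at most $n$, this yields $\ell\le 1+\lfloor\log_\alpha n\rfloor$, with strict inequality when $\alpha<2$. Bounding the whole-stack sum by $n$, rather than bounding only $|\QQ_1|$, is the ingredient your proposal is missing.
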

\begin{proof}
  The main loop of the $2$-merge and $\alpha$-merge sorts first pushes
  a new run onto the stack and then does zero or more merges.
  Since merges reduce the stack height,
  the maximum stack height is obtained when a
  run is pushed onto the stack. Such a newly pushed
  run is denoted either $Z$ or $\QQ_\ell$, where the
  stack height is~$\ell$.

  We claim that for both $2$-merge sort and $\alpha$-merge sort,
  when a new run is pushed onto the
  stack, $|\QQ_{i - 1}| \ge \alpha |\QQ_i|$ holds for
  all $i < \ell$ (taking $\alpha=2$ in the inequality for $2$-merge sort).

  For $\alpha$-merge sort, this claim follows from the following
  three observations:
  \begin{description}
    \setlength{\itemsep}{0pt}
    \setlength{\topsep}{0pt}
    \item[\rm (a)] Each merge acts on two of the top
    three elements of the stack, i.e., either $X$ and $Y$
    are merged or $Y$ and~$Z$ are merged. Each merge decreases
    the height of the stack by~1; this decreases the value
    of~$\ell$ by~1.
    \item[\rm (b)] The condition that $|\QQ_{i - 1}| \ge \alpha |\QQ_i|$ holds
      for all $i < \ell - 1$ is maintained after merging two of the top three
      stack members. This fact is essentially trivial: If $Y$ and~$Z$ are merged
      they become the new top element of the stack; and if $X$ and~$Y$ are
      merged they become the second element of the stack. But the top
      two elements are $\QQ_{\ell - 1}$ and $\QQ_\ell$, and thus
      do not affect the conditions $|\QQ_{i - 1}| \ge \alpha |\QQ_i|$
      for $i < \ell - 1$.
    \item[\rm (c)] Merging continues until both
      $|X| \ge \alpha |Y|$ and $|Y| \ge \alpha |Z|$ hold.
  \end{description}

  Proving the claim for $2$-merge sort requires an extra argument
  since line~\ref{algline:2mergeXYZ} of Algorithm~\ref{alg:twoMergeSort}
  lacks a test for $|X|<2|Y|$. Instead, we argue that as $2$-merge sort
  performs merges, we have $|\QQ_{i-1}| \ge 2 |\QQ_i|$ for all $i<\ell - 1$
  and either
  \begin{itemize}
  \item[\rm (i)] $|\QQ_{\ell - 2}| \ge 2 |\QQ_{\ell - 1}|$, or
  \item[\rm (ii)] $|\QQ_{\ell - 2}| > | Q_{\ell - 1}|$ and
                  $|\QQ_{\ell - 1}| < 2 |\QQ_\ell|$.
  \end{itemize}
  The condition $|\QQ_{i-1}| \ge 2 |\QQ_i|$ for $i<\ell - 1$ is an invariant just
  because each step either merges $X$ and~$Y$ or merges $Y$ and~$Z$ (just as
  argued in (b) above).  For conditions (i) and~(ii), consider separately
  whether $X$ and~$Y$ are merged, or $Y$ and~$Z$ are merged. In either case,
  $\ell$~is decreased by one by the merge. When $Y$ and~$Z$ are merged, we
  trivially have that condition~(i) holds after the merge (as argued in (b)
  above). When $|X| < |Z|$, and $X$ and~$Y$ are merged, we have
  $|\QQ_{\ell - 3}| \ge 2 |\QQ_{\ell - 2}| >
    |\QQ_{\ell - 2}| + |\QQ_{\ell - 1}|$,
  so the first part of condition~(ii) still holds after the merge.
  In addition, $|Z|>|X|$ means the same as $|\QQ_\ell| > |\QQ_{\ell - 2}|$,
  so $2 |\QQ_\ell| > |\QQ_{\ell - 2}| + |\QQ_{\ell - 1}|$ so the second part
  of condition~(ii) also holds after the merge.  The merging stops
  when $|\QQ_{\ell - 1}| \ge2 |\QQ_\ell|$. Merging
  cannot stop while condition~(ii) holds, so it must stop while
  condition~(i) holds. The claim for $2$-merge sort follows immediately.

  For both $2$-merge sort and $\alpha$-merge sort, the claim implies
  that when a new run~$\QQ_\ell$ is pushed onto the stack, the other
  $\ell - 1$ runs $\QQ_1, \ldots, \QQ_{\ell - 1}$ have run lengths totaling
  \[
    \sum_{i = 1}^{\ell - 1} |\QQ_i|
     ~\ge~ |\QQ_{\ell - 1}| \cdot \sum_{i=0}^{\ell - 2} \alpha^i
     ~=~ \frac{\alpha^{\ell - 1} - 1}{\alpha - 1} \cdot |\QQ_{\ell - 1}|
  \]
  Therefore, the total run length of all $\ell$ runs on the stack
  at least $\frac{\alpha^{\ell - 1} - 1}{\alpha - 1}+1$.
  This must be $\le n$. Therefore, since $\alpha\le 2$, we have
  $\alpha^{\ell - 1} \le n$. Thus $\ell \le 1 + \lfloor \log_\alpha n\rfloor$.
  The inequalities are strict if $\alpha < 2$ since then $\alpha-1<1$.
\end{proof}

\begin{theorem}\label{thm:alphaMergeUpperEasy}
  Let $1 < \alpha \le 2$. The merge cost of $\alpha$-merge sort on
  inputs of length~$n$ is $O(n \log n)$.
\end{theorem}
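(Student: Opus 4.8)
The plan is to charge the merge cost of each run against its length, times a logarithmic factor, by tracking a suitable "potential" for each stack element, in the same spirit as the proof of Theorem~\ref{thm:ShiversUpperBound} for the Shivers sort. Concretely, for each run $\QQ_i$ currently on the stack I would try to prove an invariant of the form $w_{\QQ_i} \le c\cdot |\QQ_i|\cdot \log|\QQ_i| + (\text{lower order})$, or perhaps the slightly cleaner $w_{\QQ_i} \le C\cdot |\QQ_i|\cdot \log|\QQ_i|$ for a constant $C$ depending on $\alpha$; summing over the $O(\log n)$ stack elements and using $\sum_i |\QQ_i| = n$ together with $|\QQ_i|\le n$ then yields the $O(n\log n)$ bound. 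Since Theorem~\ref{thm:mergeStackSize} already bounds the stack height by $1+\lfloor\log_\alpha n\rfloor = O(\log n)$, the merge cost incurred in the final cleanup loop (lines \ref{algline:amergefinalloopA}--\ref{algline:amergefinalloopB}) is at most $O(\log n)$ merges, each of cost $\le n$, hence $O(n\log n)$; so it suffices to bound the cost incurred during the main loop.

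First I would set up the induction on the number of stack updates (pushes and merges) during the main loop, exactly as in the Shivers proof: a push creates a run of weight $0$, so any invariant of the form above is trivially preserved; the work is in the merge step. When $Y$ and $Z$ are merged, the merge condition $|Y|<\alpha|Z|$ guarantees $|Y|+|Z| < (1+\alpha)|Z| \le (1+1/\alpha)|Y|\cdot\alpha/\ldots$ — more usefully, both $|Y|$ and $|Z|$ are within a bounded ratio of $|Y|+|Z|$, so $\log|\text{(merged run)}|$ exceeds each of $\log|Y|$, $\log|Z|$ by at most $O(1)$, and the induction hypotheses $w_Y\le C|Y|\log|Y|$, $w_Z\le C|Z|\log|Z|$ give $w_{YZ} = |Y|+|Z|+w_Y+w_Z \le C(|Y|+|Z|)\log(|Y|+|Z|)$ provided $C$ is chosen large enough to absorb the additive $|Y|+|Z|$ against the gain $C(|Y|+|Z|)\cdot\Omega(1)$ in the log term. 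When instead $X$ and $Y$ are merged, the triggering condition is $|X|<|Z|$, hence $|X|<|Z|\le|Y|$ (the latter because we are inside the loop only when $|Y|<\alpha|Z|$ is false would be the wrong case — I must be careful here, but the point is $|X|$ is comparable to $|Y|$), so again $|X|$ and $|Y|$ are within a bounded ratio and the same estimate applies to $w_{XY}$.

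The main obstacle, I expect, is the $X$--$Y$ merge case: when $|X|<|Z|$ we know $|X|$ is small relative to $|Z|$, but a priori $|X|$ could be much smaller than $|Y|$ as well, in which case $\log(|X|+|Y|)$ barely exceeds $\log|Y|$ and there is no logarithmic "gain" to pay for the additive $|X|+|Y|$ term — so the naive invariant $w_{\QQ_i}\le C|\QQ_i|\log|\QQ_i|$ may not be directly inductive. The fix is to use the structural fact, established in the proof of Theorem~\ref{thm:mergeStackSize}, that on the stack the runs grow geometrically downward: $|\QQ_{i-1}|\ge\alpha|\QQ_i|$ for $i<\ell$ (with the mild exception at the very top handled as in that proof). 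This means that the $X$ in an $X$--$Y$ merge satisfies $|X| < |Z| \le |Y| \le |X|/\alpha$ is impossible — rather, $|X|$ cannot be too small because the geometric bound ties $|X|=|\QQ_{\ell-2}|$ to $|Y|=|\QQ_{\ell-1}|$ from above while the condition $|X|<|Z|$ ties it from below through $|Z|$. Combining these, $|X|$ is within a constant factor of $|Y|$ after all, so the logarithmic gain is $\Omega(1)$ and the induction closes. Alternatively, and perhaps more robustly, I would follow \cite{AJNP:Timsort} and assign to each stack run a "level" $\lfloor\log_\alpha(\text{lower bound on what must sit below it})\rfloor$ and bound $w_{\QQ_i}$ by (level) $\times |\QQ_i|$ times a constant; the geometric stacking from Theorem~\ref{thm:mergeStackSize} makes the level at most $O(\log n)$, and each merge raises the relevant level by at least $1$, paying for the $|A|+|B|$ additive cost. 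Either way the only genuinely new ingredient beyond the Shivers-style bookkeeping is invoking the geometric growth on the stack, which is already in hand.
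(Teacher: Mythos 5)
There is a genuine gap, and it lies exactly where you locate the ``main obstacle.'' Your induction needs every merge performed by the algorithm to combine two runs whose lengths are within a constant ratio, and for the $X$--$Y$ merges you claim this follows from the geometric growth of the stack. But that growth, established in the proof of Theorem~\ref{thm:mergeStackSize}, is one-sided: at the relevant moments it gives $|X|\ge\alpha|Y|$, i.e.\ it bounds $|X|$ \emph{from below} in terms of $|Y|$, never from above; and the branch condition $|X|<|Z|$ compares $X$ only with the (possibly enormous, freshly pushed) run $Z$, not with $Y$. So nothing prevents $|X|\gg|Y|$: a stack with $|X|$ huge and $|Y|$ tiny is perfectly consistent with the stack invariants, and pushing a run $Z$ with $|Z|>|X|$ triggers (via $|Y|<\alpha|Z|$ and $|X|<|Z|$) a merge of $X$ and $Y$ of cost about $|X|$, while the headroom gained by the invariant $w_{\QQ_i}\le C|\QQ_i|\log|\QQ_i|$ is only $C|X|\log\frac{|X|+|Y|}{|X|}+C|Y|\log\frac{|X|+|Y|}{|Y|}=O\bigl(C|Y|(1+\log(|X|/|Y|))\bigr)$, which cannot cover $|X|+|Y|$ for any fixed $C$ once $|X|/|Y|$ is large. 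Hence the per-run invariant is not inductive, no matter how large you take $C$. A second unbalanced case, which your case analysis omits entirely, is the merge of $Y$ and $Z$ triggered by the \emph{second} disjunct $|X|<\alpha|Y|$ on line~\ref{algline:amergeXYZ} when $|X|\ge|Z|$ and $|Z|\ll|Y|$: the cost is about $|Y|$ but the logarithmic gain is again negligible. Unbalanced merges are unavoidable for $\alpha$-merge sort, which is precisely why the Shivers-style bookkeeping (where the test $2^{\lfloor\log|Y|\rfloor}\le|Z|$ supplies the inequality $k_Y\le k_Z$ needed at each merge) does not transfer.

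The paper's proof sidesteps per-run potentials: it charges each run $(2+\alpha)\cdot\ell\cdot|\QQ_\ell|$ at the moment it is pushed, where $\ell$ is the current stack height ($O(\log n)$ by Theorem~\ref{thm:mergeStackSize}), subtracts every merge cost from this counter, and maintains the global invariant~(\ref{eq:Cinvariant}) that the counter dominates $\sum_i(2+\alpha)\,i\,|\QQ_i|$; a merge is paid for by the decrease in stack depth below the merged material, e.g.\ in the $X$--$Y$ merge the elements of $Y$ and $Z$ drop a level, releasing $(2+\alpha)(|Y|+|Z|)\ge|X|+|Y|$ precisely because $|X|<|Z|$. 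Even with this accounting, the unbalanced $Y$--$Z$ case above cannot be handled one merge at a time: the paper allows the invariant to be violated momentarily and restores it by analyzing the immediately forced next merge (of $W$ with $X$, or of $X$ with the new $YZ$) together with it. Your alternative ``level'' sketch is much closer in spirit to this and to \cite{AJNP:Timsort}, but the assertion that each merge raises the relevant level by at least one is exactly the step that requires this two-merge lookahead, and it is left unsubstantiated. So the statement is true, but the proposal as written does not prove it.
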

\begin{proof}
  We introduce a special real-valued counter~$C$
  which initially is equal to~$0$. Every
  time we load a run $\QQ_\ell$ to the stack $\QQ$, we add
  $(2+\alpha) \cdot \ell\cdot |\QQ_\ell|$ to $C$. Every time we merge
  two runs $\QQ_i$ and $\QQ_{i + 1}$, we subtract $|\QQ_i| + |\QQ_{i + 1}|$ from~$C$.

  We claim that the following invariant
  holds throughout the execution of $\alpha$-merge sort:
  \begin{equation} \label{eq:Cinvariant}
    C ~\ge~ \sum\limits_{i = 1}^\ell (2+\alpha) i |\QQ_i|.
  \end{equation}
  We also claim this holds with $\alpha=2$ for $2$-merge sort.
  This claim will suffice to prove Theorem~\ref{thm:alphaMergeUpperEasy}.
  First, by Theorem~\ref{thm:mergeStackSize}, the total of the increases in~$C$
  caused by adding new runs is $\le (2 + \alpha)(1+\penalty10000 \log_\alpha(n))$.
  Second, the total decreases in~$C$ equal the total merge cost.
  By~(\ref{eq:Cinvariant}), $C$~is always non-negative,
  whence Theorem~\ref{thm:alphaMergeUpperEasy} follows.

  The invariant~(\ref{eq:Cinvariant}) holds initially, and it clearly
  is maintained by loading a new run to the top of the stack.
  Let us consider what happens when we
  make a merge. We need to consider several cases below. Only
  the first case is needed for $2$-merge sort as $2$-merge sort does not
  check the condition $|X|<2|Y|$ to decide whether to perform a merge.
  Cases 2.\ and 3.\ below apply only to $\alpha$-merge sort.
\begin{enumerate}
\item If $|Y| < \alpha |Z|$ then we merge $Y$ with the smaller of
$Z$ and~$X$, and subtract at most $|Y| + |Z|$ from~$C$. The merge
reduces the stack height by~1, so considering just the elements
of~$Z$, the righthand side of the inequality~(\ref{eq:Cinvariant})
decreases by at least $(2+\alpha)|Z|$. This decrease is
$\ge(2+\penalty10000\alpha)|Z| \ge |Y| +\penalty10000 |Z|$
by the assumption that $|Y| < \alpha |Z|$.
Thus the condition~(\ref{eq:Cinvariant}) is preserved in this case.
\item If a merge is triggerd by $|X| < \alpha |Y|$, there are several cases to
consider. Consider first the case where $|X| < |Z|$ so
we merge $X$ and~$Y$. In this case, the merge cost is
$|X| + |Y|$, so $C$ decreases by $|X| + |Y|$.
On the other hand, the righthand side of the
inequality~(\ref{eq:Cinvariant}) decreases by
$(2+\alpha)(|Y| + |Z|)$ since the elements of the runs
$Y$ and~$Z$ both have the stack height below them decrease by~1.
Since $|Z| < |X|$, the condition~(\ref{eq:Cinvariant}) is maintained.
\item Now suppose a merge is triggered by $|X| < \alpha |Y|$ and $|X|\ge|Z|$. In
this case we merge $Y$ and~$Z$ to form a run, denoted $YZ$, of length
$|Y| + |Z|$, and $C$~decreases by $|Y| + |Z|$. Now this step might
not preserve the condition~(\ref{eq:Cinvariant}). However,
after merging $Y$ and~$Z$, there must immediately be
another merge due to the fact that $X\le \alpha(|Y| + |Z|)$.
This next merge will either (a)~merge $W$ with~$X$ if $|W| < |Y| + |Z|$
or (b)~merge $X$ with $YZ$ if $|W|\ge |Y| + |Z|$.

Consider the case (b) where $X$ is merged with $YZ$. In this
case, $C$~has decreased by a total of $|X| + 2|Y| + 2|Z|$
during the two merges. At the same time, the stack heights
below the members of $Y$ and~$Z$ decreased by 1 and~2, respectively.
Therefore, the righthand side of~(\ref{eq:Cinvariant})
decreased by $(2+\alpha)(|Y|+2|Z|)$. Since $|X|\le \alpha|Y|$,
this is clearly greater than the decrease $|X| + 2|Y| + 2|Z|$
in~$C$. Thus the inequality~(\ref{eq:Cinvariant}) is
maintained in this case.

Now consider the case~(a) where $W$ and $X$ are merged.
In this case, $C$~decreases by $|W| + |X| + |Y| + |Z|$
from the two merges. Since $|W| < |Y| + |Z|$, this
means that $C$ decreases by less than $|X|+2|Y|+2|Z|$.
On the other hand, the reductions
in stack height means that the righthand side
of~(\ref{eq:Cinvariant}) decreases by
$(2+\alpha)(|X| + |Y|+2|Z|)$. This again implies that
the inequality~(\ref{eq:Cinvariant}) is maintained.
\end{enumerate}
\end{proof}

\subsection{Lower bound for 2-merge sort and \texorpdfstring{$\alpha$}{a}-merge sort}
\label{sec:alphaMergeLower}

\begin{theorem}\label{thm:alphamergeLower}
  Fix $\alpha>1$. The worst-case merge cost of $\alpha$-merge sort
  is $\ge (c_\alpha - o(1)) n \log n$.
\end{theorem}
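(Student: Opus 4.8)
The plan is to construct, for each target length~$n$, a sequence of run lengths that forces $\alpha$-merge sort into a highly unbalanced pattern of merges, in which one long run is repeatedly merged against successively longer accumulated runs. The key design principle is to exploit the merge rule: $\alpha$-merge sort merges $Y$ and~$Z$ whenever $|Y|<\alpha|Z|$ (or $|X|<\alpha|Y|$), but it merges $X$ and~$Y$ instead of $Y$ and~$Z$ when $|X|<|Z|$. So I would build the input so that at every stage the newly pushed run is just barely large enough (a factor of roughly $\alpha$) to trigger a cascade, and the merges always combine a small accumulated block with a large ``spine'' run. Concretely, I would define a recursively-specified sequence $\Ramerge(n)$ (analogous to $\Rtim(n)$ and $\Rastack(m)$ appearing earlier in the paper): split $n$ in a ratio determined by $\alpha+1$, namely into parts of size about $\frac{\alpha}{\alpha+1}n$ and $\frac{1}{\alpha+1}n$, so that the recursion tree has branching controlled by the two ``costs'' $\log(\alpha+1)$ and $\log(\alpha/(\alpha+1)) = \log\alpha-\log(\alpha+1)$; the constant $c_\alpha$ in~(\ref{eq:cAlphaDef}) is exactly the reciprocal of the per-element entropy of this split, i.e.\ $c_\alpha = (\alpha+1)/\bigl((\alpha+1)\log(\alpha+1)-\alpha\log\alpha\bigr)$.

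The main steps, in order, would be: (1) define $\Ramerge(n)$ by the recursion above, with small base cases, being careful (as the paper does for $\Rtim$) that run lengths stay $\ge 1$ and that the asymptotics are unaffected by rounding and by the harmless use of length-$1$ runs; (2) prove a structural claim by induction on~$n$ — the analogue of Claim~\ref{clm:timsortLower} — stating that when $\alpha$-merge sort is started on $\Ramerge(n)$ followed by any continuation, with the stack either empty or topped by sufficiently large runs, it processes exactly the runs of $\Ramerge(n)$, merging them into one run that becomes the new top of the stack, without touching prior stack contents or the continuation; the induction step invokes the hypothesis on the two sub-blocks and then checks that the merge-triggering tests $|Y|<\alpha|Z|$, $|X|<\alpha|Y|$, and the direction test $|X|<|Z|$ fire in the intended order; (3) set up the merge-cost recurrence $c(n) = c(n_1)+c(n_2)+(\text{cost of the final merges})$, where the final merges contribute $\Theta(n)$ — crucially, one of the two sub-runs gets re-merged so the local cost is (up to lower-order terms) $n\bigl(1+\tfrac{1}{\alpha+1}\bigr)$ or a similar linear term; (4) solve the recurrence asymptotically, exactly as in Claim~\ref{clm:timsortLower2}: pass to a monotone real-variable surrogate $b(x)$, iterate the inequality $b(x)\ge b(n_1)+b(n_2)+\Theta(x)$ down the recursion tree of depth $\sim \log_{(\alpha+1)/\alpha} n$ and width governed by the split ratios, and collect the sum to get $c(n)\ge (c_\alpha-o(1))\,n\log n$; finally note $m\le n$ gives the $n\log m$ consequence as well, and observe that specializing to the cruder merge rule of $\alpha$-stack sort yields Theorem~\ref{thm:alphaLower} (which is why the paper deferred that proof).

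The part I expect to be genuinely delicate is step~(2): verifying that the merge rules of Algorithm~\ref{alg:alphaMergeSort} really do execute the intended schedule on $\Ramerge(n)$. Unlike the $\alpha$-stack sort lower bound (Theorem~\ref{thm:alphaLowerLogm}), where the runs are chosen so large that \emph{no} merge fires until the very end, here the construction must keep merges firing at each level while still steering the ``Merge $X$ and $Y$'' vs.\ ``Merge $Y$ and $Z$'' choice in the desired direction — this requires carefully tracking the three top-of-stack lengths $|X|,|Y|,|Z|$ through each cascade and choosing the split ratio and the small ``tail'' run lengths (the analogues of the $\langle 1\rangle$ and $\langle 2\rangle$ tails in $\Rtim$) so that all inequalities come out as needed, including that the stack is left ``stable'' between blocks so the induction hypothesis applies. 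A secondary bookkeeping obstacle is getting the linear term in the cost recurrence precisely right, since the exact coefficient of~$n$ there (not just its order) feeds into whether the solved recurrence reproduces the optimal constant $c_\alpha$ rather than something smaller; but once the schedule in step~(2) is pinned down, this is a routine, if careful, calculation of the form already carried out for Timsort.
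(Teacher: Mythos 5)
Your proposal follows essentially the same route as the paper's proof: the paper likewise defines a recursively specified sequence $\Ramerge(n)$ of run lengths with splits in ratio governed by $\alpha+1$, proves by induction the structural claim that $\alpha$-merge sort (and, with the same construction, $\alpha$-stack sort, which is how Theorem~\ref{thm:alphaLower} is obtained) merges each block into a single run without disturbing the rest of the stack --- using exactly the device you anticipate, namely forcing the last run of each block to have length $\ge 3$ so that intermediate stack entries stay below the merge threshold, via inequalities of the form $\alpha(n''-3)\le n' <\alpha n''$ --- and then solves the resulting recurrence through a monotone surrogate $b(x)$, just as for Timsort. The only packaging difference is that the paper uses a three-way split $n=n'+n''+n'''$ in proportions $\tfrac{\alpha^2}{(\alpha+1)^2},\tfrac{\alpha}{(\alpha+1)^2},\tfrac1{\alpha+1}$, which is precisely two consecutive levels of your binary $\tfrac{\alpha}{\alpha+1}:\tfrac{1}{\alpha+1}$ split. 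One correction to your step (3): per binary split the schedule incurs additional merge cost exactly $n_1+n_2=n$ (the single merge of the two completed blocks), not $n\bigl(1+\tfrac1{\alpha+1}\bigr)$; the larger coefficient would produce a lower bound of $\tfrac{\alpha+2}{\alpha+1}\,c_\alpha\, n\log n$, which cannot be realized, since these inputs have $m=\Theta(n)$ and for $\varphi<\alpha\le 2$ the matching upper bounds (Theorems~\ref{thm:alphamergeUpper} and~\ref{thm:twomergeUpper}) cap the cost at $c_\alpha n\log n+O(n)$. With coefficient $1$ per binary level the recurrence yields exactly the reciprocal-entropy constant $c_\alpha$, consistent with your own entropy remark and with the paper's recurrence $c(n)=c(n')+c(n'')+c(n''')+n+n'+n''$, whose per-ternary-level coefficient $\tfrac{2\alpha+1}{\alpha+1}$ is just $1$ per binary level.
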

\noindent
The corresponding theorem for $\alpha=2$ is:
\begin{theorem}\label{thm:twomergeLower}
  The worst-case merge cost of $2$-merge sort is
  $\ge (c_2  -  o(1)) n \log n$, where $c_2 = 3 / \log (27 / 4) \approx
  1.08897$.
\end{theorem}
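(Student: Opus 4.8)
The plan is to build a family of inputs whose run-length sequences force $2$-merge sort (resp. $\alpha$-merge sort) into a rigidly predictable merge pattern whose cost can be computed by a recurrence, and then solve that recurrence asymptotically. The natural target is a recursively defined sequence $\Ramerge(n)$ of run lengths, analogous to the $\Rtim(n)$ construction used in the proof of Theorem~\ref{thm:timsortLower}: for small $n$ it is a single run $\langle n\rangle$, and for larger $n$ it is the concatenation of $\Ramerge(n_1)$ and $\Ramerge(n_2)$ (plus possibly a small correction run) for a suitable split $n = n_1 + n_2$. The key design constraint is that the split ratio $n_1 : n_2$ should be chosen so that, once the first block has been merged into a single run of length $\approx n_1$ and the second into a run of length $\approx n_2$, the algorithm's merge test ($|Y| < 2|Z|$ in the $2$-merge case) is on the boundary of firing — i.e. $n_1 \approx 2 n_2$ — so that the merge of the two blocks happens exactly when intended and no earlier. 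One expects the optimal constant $c_2 = 3/\log(27/4)$ to emerge from the ratio $1:2$, since $\log(27/4) = 3\log 3 - 2\log 2 = \log\bigl((2{+}1)^{2+1}/2^2\bigr)$, matching the denominator of~\eqref{eq:cAlphaDef} at $\alpha=2$; for general $\alpha$ the split ratio should be $\alpha:1$ and the constant $(\alpha{+}1)^{\alpha+1}/\alpha^\alpha$ appears analogously.

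The first key step is a structural claim, proved by induction on $n$ exactly in the style of Claim~\ref{clm:timsortLower}: if $2$-merge sort is started on a run-length sequence beginning with $\Ramerge(n)$, with the stack either empty or topped by runs that are large enough (say the top run has length $> Cn$ and the next has length $> C'n$ for appropriate constants) that they are "inert" and will not participate in any merge, then the algorithm processes exactly the runs of $\Ramerge(n)$, merges them into one run of length $n$ which becomes the new top of the stack, and touches nothing below. The inductive step invokes the hypothesis on the two sub-blocks, checks that after the first block is consolidated the stack is "stable" (no merge test fires) because the inert runs below dominate it by the required factor and the block itself dominates what follows, and then verifies that after the second block is consolidated the test $|Y|<2|Z|$ (or $|X|<|Z|$, dictating merge with $X$ vs $Z$) fires and produces precisely the intended final merge. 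One must be careful about the rounding: using $n' = \lfloor n/2\rfloor$ and splitting as $n = \lfloor 2n/3\rfloor + \lceil n/3\rceil$ or similar, and carrying a $\pm O(1)$ correction run (as in the $\langle 1\rangle$ / $\langle 2\rangle$ trick of the Timsort proof) so that $m \le n \le Cm$ and hence $\log n = \log m + O(1)$.

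The second step is to extract the merge-cost recurrence. The final consolidation of $\Ramerge(n)$ costs $n$ (merging the length-$n_1$ and length-$n_2$ runs, whose total is $n$), possibly plus $O(1)$ from the correction run, so the total cost $c(n)$ satisfies $c(n) = c(n_1) + c(n_2) + n + O(1)$ with $n_1 \approx \tfrac{2}{3}n$, $n_2 \approx \tfrac13 n$. The third step is the asymptotic solution: unrolling this recurrence, the total cost is $n$ times the expected depth of a leaf in the infinite $(2/3, 1/3)$-weighted tree, which by the standard entropy computation equals $n \cdot \tfrac{1}{H(1/3)} \log n \pm O(n)$ where $H(p) = -p\log p - (1-p)\log(1-p)$; and $1/H(1/3) = 1/\bigl(\tfrac13\log 3 + \tfrac23\log\tfrac32\bigr) = 3/\log(27/4) = c_2$. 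To make this rigorous with the floors and the $O(1)$ terms, I would define $b(x) = c(\lfloor x - K\rfloor)$ for a suitable constant $K$, show $b$ is nondecreasing and satisfies $b(x) \ge b(2x/3) + b(x/3) + x - O(1)$, and then prove by induction on the integer part of $x$ a clean lower bound of the form $b(x) \ge c_2\, x \log x - O(x)$ — mirroring Claim~\ref{clm:timsortLower2}. Finally, since $\log n = \log m + O(1)$, the bound transfers to $n\log m$.

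The main obstacle I anticipate is the structural claim, specifically pinning down the exact inequalities (the domination factors for the inert stack runs, and the boundary behavior of the merge tests under the floor-induced perturbations) tightly enough that the intended merge sequence is genuinely forced — in particular ruling out the possibility that a slightly-off run length causes $2$-merge sort to merge $X$ with $Y$ when we wanted $Y$ with $Z$, or to fire a merge one step too early. This is exactly where $\alpha > \varphi$ will be needed: one wants $\alpha:1$ split ratios to be self-consistent under the recursion, which requires the geometric-decay rate to be controllable, and $\varphi$ is the threshold below which the three-run window is too narrow to keep the construction rigid with a $3$-aware algorithm. Since the same $\Ramerge$ construction is asserted (in the remark after Theorem~\ref{thm:alphaLower}) to also prove Theorem~\ref{thm:alphaLower} for $\alpha$-stack sort, I would set it up once for general $\alpha$, check that $\alpha$-stack sort's weaker merge rule ($|Y|\le\alpha|Z|$, merge $Y,Z$) follows the same pattern on these inputs, and obtain all three lower bounds uniformly.
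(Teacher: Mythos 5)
Your plan is essentially the paper's own proof: the paper defines $\Ramerge(n)$ recursively using exactly the $\alpha{:}1$ split you propose (it merely unrolls two levels at once, writing $\Ramerge(n)$ as three blocks of sizes $\approx \frac{\alpha^2}{(\alpha+1)^2}n$, $\frac{\alpha}{(\alpha+1)^2}n$, $\frac{1}{\alpha+1}n$, so no correction runs are needed since the splits partition $n$ exactly), proves the forced-behavior claim by induction just as you describe (the hypothesis is that the run below has length $\ge \alpha(n-3)$, and the rigidity issue you worry about is handled by arranging that the last run of every block has length $\ge 3$, so all intermediate consolidated runs have length $\le n-3$), extracts the recurrence $c(n)=c(n')+c(n'')+c(n''')+n+n'+n''$, and solves it with the same floor-shifted device $b(x)=c(\lfloor x\rfloor-\delta)$, recovering the same constant $c_\alpha$ that your entropy calculation gives, and the same construction also serves for $\alpha$-stack sort as you anticipate. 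One small correction: the restriction $\alpha>\varphi$ plays no role in the lower bound (the paper proves Theorem~\ref{thm:alphamergeLower} for every $\alpha>1$); $\varphi$ is needed only for the upper-bound analysis.
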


The proof of Theorem~\ref{thm:alphamergeLower} also establishes
Theorem~\ref{thm:alphaLower}, as the same lower bound construction
works for both $\alpha$-stack sort and $\alpha$-merge sort.
The only difference is
that part~d.\ of Claim~\ref{clm:Ramerge} is used instead
of part~c.  In addition, the proof of Theorem~\ref{thm:alphamergeLower}
also establishes Theorem~\ref{thm:twomergeLower}; indeed, exactly the
same proof applies verbatim, just uniformly replacing ``$\alpha$'' with~``2''.

\begin{proof}[Proof of Theorem~\ref{thm:alphamergeLower}]
  Fix $\alpha > 1$. For $n \ge 1$, we define a sequence $\Ramerge(n)$ of run
  lengths that will establish the lower bound. Define $N_0$ to equal
  $3\cdot\lceil\alpha +  1\rceil$. For $n < N_0$, set $\Ramerge$ to be the
  sequence $\langle n \rangle$, containing a single run of length~$n$.
  For $n \ge N_0$, define $n^\ppprime = \lfloor \frac n{\alpha+1} \rfloor + 1$
  and $n^* = n - n^\ppprime$. Thus $n^\ppprime$ is the least integer greater than $n / (\alpha +  1)$. Similarly define
  $n^\pprime = \lfloor \frac {n^*}{\alpha + 1} \rfloor + 1$ and
  $n^\prime = n^* - n^\pprime$. These four values can be equivalently uniquely
  characterized as satisfying
  \begin{equation}\label{eq:npppnstar}
    \textstyle
    n^\ppprime = \frac 1{\alpha+1} n + \epsilon_1
    \qquad\qquad \hbox{and} \qquad\qquad\qquad
    n^* = \frac {\alpha}{\alpha+1} n - \epsilon_1
  \end{equation}
  \begin{equation}\label{eq:nppnp}
    \textstyle
    n^\pprime = \frac \alpha{(\alpha+1)^2} n
                 - \frac{1}{\alpha+1}\epsilon_1 + \epsilon_2
    \qquad \hbox{and} \qquad\qquad
    n^\prime = \frac {\alpha^2}{(\alpha+1)^2}n
                 - \frac \alpha{\alpha+1}\epsilon_1 - \epsilon_2
  \end{equation}
  for some $\epsilon_1, \epsilon_2 \in (0,1]$.
  The sequence $\Ramerge(n)$ of run
  lengths is inductively defined to be the concatenation
  of $\Ramerge(n^\prime)$, $\Ramerge(n^\pprime)$ and $\Ramerge(n^\ppprime)$.

  \begin{claim} \label{clm:alphemergeLowerNprimes}
    Let $n \ge N_0$.
    \begin{itemize}
      \setlength{\topsep}{0pt}
      \setlength{\itemsep}{0pt}
      \item[\rm a.] $n = n^\prime + n^\pprime + n^\ppprime$ and
        $n^* = n^\prime + n^\pprime$.
      \item[\rm b.] $\alpha (n^\ppprime - 3) \le n^*  < \alpha n^\ppprime$.
      \item[\rm c.] $\alpha (n^\pprime - 3) \le n^\prime < \alpha n^\pprime$.
      \item[\rm d.] $n^\ppprime \ge 3$.
      \item[\rm e.] $n^\prime \ge 1$ and $n^\pprime \ge 1$.
    \end{itemize}
  \end{claim}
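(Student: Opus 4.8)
The plan is to dispatch part~a directly from the definitions, reduce parts~b--e to routine arithmetic via the $\epsilon$-characterizations~(\ref{eq:npppnstar}) and~(\ref{eq:nppnp}), and close part~e with a small integrality observation. Part~a is immediate: $n^* = n - n^\ppprime$ gives $n = n^* + n^\ppprime$, and $n^\prime = n^* - n^\pprime$ gives $n^* = n^\prime + n^\pprime$, so $n = n^\prime + n^\pprime + n^\ppprime$.

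Before the rest I would justify the characterizations. Write $n/(\alpha+1)$ as its integer part plus a fractional part $f\in[0,1)$; then $n^\ppprime = \lfloor n/(\alpha+1)\rfloor + 1 = \frac{n}{\alpha+1} + (1-f)$, so $\epsilon_1 := 1-f \in (0,1]$ and $n^* = n - n^\ppprime = \frac{\alpha}{\alpha+1}n - \epsilon_1$. Since $N_0 = 3\lceil\alpha+1\rceil \ge 3(\alpha+1)$, we have $\frac{\alpha}{\alpha+1}n \ge \frac{\alpha}{\alpha+1}\cdot 3(\alpha+1) = 3\alpha > 1 \ge \epsilon_1$, hence $n^* > 0$; applying the same fractional-part trick to $n^*/(\alpha+1)$ yields $\epsilon_2\in(0,1]$ with $n^\pprime = \frac{n^*}{\alpha+1} + \epsilon_2$, and substituting the formula for $n^*$ and simplifying gives the stated expressions for $n^\pprime$ and for $n^\prime = n^* - n^\pprime$.

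Parts~b and~c are then one-line computations. For~b, $\alpha n^\ppprime - n^* = (\alpha+1)\epsilon_1 > 0$ gives $n^* < \alpha n^\ppprime$, while $n^* - \alpha(n^\ppprime - 3) = 3\alpha - (\alpha+1)\epsilon_1 \ge 3\alpha - (\alpha+1) \ge 0$, using $\epsilon_1\le 1$ and $\alpha \ge \tfrac12$. Part~c is identical with $(\epsilon_1,n^\ppprime,n^*)$ replaced by $(\epsilon_2,n^\pprime,n^\prime)$: $\alpha n^\pprime - n^\prime = (\alpha+1)\epsilon_2 > 0$ and $n^\prime - \alpha(n^\pprime - 3) = 3\alpha - (\alpha+1)\epsilon_2 \ge 0$. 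For~d, $n^\ppprime > \frac{n}{\alpha+1} \ge \frac{N_0}{\alpha+1} \ge 3$, and $n^\ppprime$ is an integer. For the first half of~e, $n^\pprime = \lfloor n^*/(\alpha+1)\rfloor + 1 \ge 1$ since $n^* > 0$. For the second half,
\[
  n^\prime \;=\; \tfrac{\alpha^2}{(\alpha+1)^2}n - \tfrac{\alpha}{\alpha+1}\epsilon_1 - \epsilon_2
  \;\ge\; \tfrac{\alpha^2}{(\alpha+1)^2}\cdot 3(\alpha+1) - \tfrac{\alpha}{\alpha+1} - 1
  \;=\; \frac{3\alpha^2 - 2\alpha - 1}{\alpha+1}
  \;=\; \frac{(3\alpha+1)(\alpha-1)}{\alpha+1} \;>\; 0,
\]
using $\alpha > 1$; since $n^\prime$ is an integer, this forces $n^\prime \ge 1$.

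All of these computations are routine; the one point needing care---the step I would flag---is the last one: the crude estimate on $n^\prime$ only lands in the open interval $(0,1)$ when $\alpha$ is close to~$1$, so the argument genuinely relies both on the integrality of $n^\prime$ and on the strictness of the hypothesis $\alpha > 1$ (at $\alpha=1$ the bound degenerates to $n^\prime \ge 0$). Choosing $N_0$ as a fixed multiple of $\lceil\alpha+1\rceil$ is precisely what makes this go through uniformly in $\alpha$, with no further case split on $\alpha$.
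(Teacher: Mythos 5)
Your proof is correct and follows essentially the same route as the paper: all five parts are read off from the $\epsilon$-characterizations in~(\ref{eq:npppnstar}) and~(\ref{eq:nppnp}) together with $0<\epsilon_1,\epsilon_2\le 1$, $\alpha>1$, and $n\ge N_0\ge 3(\alpha+1)$, which is exactly the paper's (much terser) argument. The extra details you supply---deriving the characterizations via fractional parts and invoking integrality of $n^\prime$ for part~e---are accurate fillings-in of steps the paper leaves implicit.
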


  Part~a.\ of the claim is immediate from the definitions. Part~b.\ is immediate
  from the equalities~(\ref{eq:npppnstar}) since $0< \epsilon_1\le 1$ and
  $\alpha > 1$. Part~c.\ is similarly immediate from~(\ref{eq:nppnp}) since
  also $0 < \epsilon_2 \le 1$. Part~d.\ follows from (\ref{eq:npppnstar}) and
  $n \ge N_0 \ge 3 (\alpha + 1)$. Part~e. follows by (\ref{eq:nppnp}),
  $\alpha > 1$, and $n \ge N_0$.

  \begin{claim}\label{clm:Ramerge}
    Let $\Ramerge(n)$ be as defined above.
    \begin{itemize}
      \setlength{\topsep}{0pt}
      \setlength{\itemsep}{0pt}

      \item[\rm a.] The sums of the run lengths in $\Ramerge(n)$ is $n$.
      \item[\rm b.] If $n \ge N_0$, then the final run length in $\Ramerge(n)$
        is $\ge 3$.
      \item[\rm c.] Suppose that $\Ramerge(n)$ is the initial subsequence of a
        sequence~$\mathcal R^\prime$ of run lengths and that
        $\alpha$-merge sort is initially started with run lengths~$\mathcal
        R^\prime$ and (a)~with the stack~$\mathcal \QQ$ empty or (b)~with the
        top element of~$\mathcal \QQ$ a run of length $\ge \alpha (n - 3)$.
        Then $\alpha$-merge sort will start by processing exactly the runs
        whose lengths are those of $\Ramerge(n)$, merging them into single
        run which becomes the new top element of~$\mathcal \QQ$. This will be
        done without merging any runs that were initially in~$\mathcal \QQ$
        and without (yet) processing any of the remaining runs
        in~$\mathcal{R}^\prime$.
      \item[\rm d.] The property~c.\ also holds for $\alpha$-stack sort.
    \end{itemize}
  \end{claim}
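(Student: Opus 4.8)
The plan is to prove the four parts by induction on~$n$, with parts~a and~b routine and part~c (together with its variant~d) carrying the real weight. For parts~a and~b, strong induction on~$n$ suffices: when $n<N_0$ the sequence $\Ramerge(n)$ is the single run $\langle n\rangle$, so its lengths sum to~$n$ and~b is vacuous; and when $n\ge N_0$, $\Ramerge(n)$ is the concatenation of $\Ramerge(n^\prime)$, $\Ramerge(n^\pprime)$, $\Ramerge(n^\ppprime)$, whose lengths sum by the induction hypothesis to $n^\prime+n^\pprime+n^\ppprime=n$ via Claim~\ref{clm:alphemergeLowerNprimes}a (giving~a), and whose final run is the final run of $\Ramerge(n^\ppprime)$, which is~$\ge 3$ either because $n^\ppprime<N_0$ and then $n^\ppprime\ge 3$ by Claim~\ref{clm:alphemergeLowerNprimes}d, or by the induction hypothesis on $n^\ppprime$ (giving~b).

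For part~c I would use strong induction on~$n$, tracking the exact stack contents. The base case $n<N_0$ is immediate: $\Ramerge(n)=\langle n\rangle$ is pushed and becomes the new top, and with the hypothesis on the top of~$\mathcal \QQ$ (and the auxiliary invariant mentioned below) the inner \textbf{while}-loop triggers no merge reaching into runs already on~$\mathcal \QQ$. For $n\ge N_0$ I would argue in three stages following the decomposition $\Ramerge(n^\prime)\,\Ramerge(n^\pprime)\,\Ramerge(n^\ppprime)$. Stage~1: apply the induction hypothesis to~$n^\prime$ (its hypothesis $\ge\alpha(n^\prime-3)$ follows from $\ge\alpha(n-3)$ since $n\ge n^\prime$), leaving a run of length~$n^\prime$ on top; since the run just below it has length $\ge\alpha(n-3)\ge\alpha n^\prime$ (using $n-n^\prime=n^\pprime+n^\ppprime\ge 3$, which follows from Claim~\ref{clm:alphemergeLowerNprimes}d,e), the loop is stable and the algorithm proceeds to load $\Ramerge(n^\pprime)$. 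Stage~2: apply the induction hypothesis to~$n^\pprime$ with the length-$n^\prime$ run as the stack top (its hypothesis $n^\prime\ge\alpha(n^\pprime-3)$ is Claim~\ref{clm:alphemergeLowerNprimes}c); now $|Y|=n^\prime<\alpha n^\pprime=\alpha|Z|$ (Claim~\ref{clm:alphemergeLowerNprimes}c) triggers the loop, and since $|X|$ is the large run inherited from the original stack the test $|X|<|Z|$ fails, so $Y$ and~$Z$ are merged into a run of length $n^\prime+n^\pprime=n^*$ (Claim~\ref{clm:alphemergeLowerNprimes}a); this is again stable because the run below has length $\ge\alpha(n-3)\ge\alpha n^*$ (as $n-n^*=n^\ppprime\ge 3$). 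Stage~3: apply the induction hypothesis to~$n^\ppprime$ with the length-$n^*$ run as the stack top (its hypothesis $n^*\ge\alpha(n^\ppprime-3)$ is Claim~\ref{clm:alphemergeLowerNprimes}b); then $|Y|=n^*<\alpha n^\ppprime=\alpha|Z|$ (Claim~\ref{clm:alphemergeLowerNprimes}b) triggers one final merge of $Y$ and~$Z$, producing a single run of length $n^*+n^\ppprime=n$. No run originally on~$\mathcal \QQ$ was merged and no run of~$\mathcal R^\prime$ after $\Ramerge(n)$ was touched, which is part~c.

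Two points need care. First, to make the stability claims airtight I would carry a strengthened invariant: whenever the claim is invoked, not only is the top run~$\ge\alpha(n-3)$, but the whole of~$\mathcal \QQ$ below the current block is ``$\alpha$-stacked'', i.e.\ each run there is $\ge\alpha$ times the run above it; this holds at the outermost call (empty stack) and is preserved by all three stages, and it is exactly what rules out a stray merge of~$X$ and~$Y$ dipping into the old stack. Second, ``processing $\Ramerge(n)$'' must be read as terminating the moment the combined run of length~$n$ first appears on top: any subsequent merge of that run into the run below it (possible, since we only know that run is~$\ge\alpha(n-3)$ rather than~$\ge\alpha n$) is attributed to the caller's phase, and one has to check this convention is consistent throughout the recursion. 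Part~d follows by running the same argument for $\alpha$-stack sort, which merges $Y$ and~$Z$ precisely when $|Y|\le\alpha|Z|$ and never merges $X$ and~$Y$: Claim~\ref{clm:alphemergeLowerNprimes}b,c supply the strict inequalities that force the intended merges, while at the stable configurations the top run is strictly more than~$\alpha$ times the one below, so every branch of the computation comes out identically, and the absence of any $X$-test only makes the stability checks easier.

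The hard part is the invariant-maintenance in part~c: one has to pin down the stack configuration carefully enough to be sure that throughout the processing of $\Ramerge(n)$ every merge stays inside the freshly loaded block and never disturbs runs already present, and all the quantitative slack for this comes from the inequalities of Claim~\ref{clm:alphemergeLowerNprimes} together with the $\alpha$-stacked invariant. Once the three stages and that invariant are in place, the remaining verifications are mechanical.
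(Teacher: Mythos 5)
Your proof is correct and follows essentially the same route as the paper's: induction on~$n$, processing the block in three stages according to the decomposition $\Ramerge(n^\prime)$, $\Ramerge(n^\pprime)$, $\Ramerge(n^\ppprime)$, with Claim~\ref{clm:alphemergeLowerNprimes} supplying every quantitative check and with parts a., b., d.\ handled just as in the paper. Your one substantive addition is the explicit ``$\alpha$-stacked'' invariant on the runs below the current block (together with the convention about when the processing of $\Ramerge(n)$ is deemed finished), which the paper leaves implicit even though its hypothesis~(b) constrains only the top stack element while the 3-aware test $|X| < \alpha|Y|$ can reach one element deeper; making that invariant explicit is a sensible tightening of the same argument rather than a different approach.
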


  Part~a.\ is immediate from the definitions using induction on~$n$. Part~b.\
  is a consequence of Claim~\ref{clm:alphemergeLowerNprimes}(d.)\  and the fact
  that the final entry of $\Ramerge$ is a value $n^\ppprime<N_0$ for some~$n$.
  Part~c.\ is proved by induction on~$n$, similarly to the proof of
  Claim~\ref{clm:timsortLower}.  It is trivial for the base case $n<N_0$. For
  $n\ge N_0$, $\Ramerge(n)$~is the concatenation of
  $\Ramerge(n^\prime), \Ramerge(n^\pprime), \Ramerge(n^\ppprime)$. Applying the
  induction hypothesis to $\Ramerge(n^\prime)$ yields that these runs are
  initially merged into a single new run of length~$n^\prime$ at the top of the
  stack. Then applying the induction hypothesis to $\Ramerge(n^\pprime)$ shows
  that those runs are merged to become the top run on the stack. Since the last
  member of $\Ramerge(n^\pprime)$ is a run of length $\ge 3$, every intermediate
  member placed on the stack while merging the runs of $\Ramerge(n^\pprime)$ has
  length $\le n^\pprime - 3$. And, by
  Claim~\ref{clm:alphemergeLowerNprimes}(c.),
  these cannot cause a merge with the run of length~$n^\prime$ already
  in~$\mathcal \QQ$. Next, again by Claim~\ref{clm:alphemergeLowerNprimes}(c.),
  the top two members of the stack are merged to form a run of length~$n^* =
  n^\prime +  n^\pprime$.  Applying the induction hypothesis a third time, and
  arguing similarly with Claim~\ref{clm:alphemergeLowerNprimes}(b.), gives that
  the runs of $\Ramerge(n^\ppprime)$ are merged into a single run of
  length~$n^\ppprime$, and then merged with the run of length~$n^*$ to obtain a
  run of length~$n$.  This proves part~c.\ of Claim~\ref{clm:Ramerge}. Part~d.\
  is proved exactly like part~c.; the fact that $\alpha$-stack sort is only
  2-aware and never merges $X$ and~$Y$ makes the argument slightly easier in
  fact.  This completes the proof of Claim~\ref{clm:Ramerge}.

  \begin{claim}\label{clm:amergeRecur1}
    Let $c(x)$ equal the merge cost of $\alpha$-merge sort on an input
    sequence with run lengths given by $\Ramerge(n)$. Then $c(n) = 0$ for
    $n < N_0$. For $n\ge N_0$,
    \begin{eqnarray}\label{eq:camergeRecur}
      c(n) &=& c(n^\prime) + c(n^\pprime) + c(n^\ppprime)
                 + 2n^\prime + 2n^\pprime + n^\ppprime \\
           &=& c(n^\prime) + c(n^\pprime) + c(n^\ppprime)
                 + n + n^\prime + n^\pprime.
    \end{eqnarray}
    For $n\ge N_0$, $c(n)$ is strictly increasing as a function of~$n$.
  \end{claim}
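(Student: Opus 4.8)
The plan is to establish the three parts of Claim~\ref{clm:amergeRecur1} in order, with Claim~\ref{clm:Ramerge}(c.) doing all the real work. The base case $c(n)=0$ for $n<N_0$ is immediate: $\Ramerge(n)=\langle n\rangle$ is a single run, so $\alpha$-merge sort performs no merges on it. For $n\ge N_0$ I would read the recurrence straight off the inductive proof of Claim~\ref{clm:Ramerge}(c.). That proof shows that $\alpha$-merge sort, run on $\Ramerge(n)$ (the concatenation of $\Ramerge(n^\prime)$, $\Ramerge(n^\pprime)$, $\Ramerge(n^\ppprime)$), performs exactly: it processes the prefix $\Ramerge(n^\prime)$, which by the same claim applied recursively costs $c(n^\prime)$ and leaves a single run of length $n^\prime$ on the stack; then it processes $\Ramerge(n^\pprime)$ for cost $c(n^\pprime)$, leaving a run of length $n^\pprime$ on top (and, by Claim~\ref{clm:alphemergeLowerNprimes}(c.), never prematurely merging with the $n^\prime$ run beneath it); then it merges those two top runs for cost $n^\prime+n^\pprime=n^*$; then it processes $\Ramerge(n^\ppprime)$ for cost $c(n^\ppprime)$; and finally it merges the top two runs, of lengths $n^*$ and $n^\ppprime$, for cost $n^*+n^\ppprime=n$. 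Adding the five contributions gives $c(n)=c(n^\prime)+c(n^\pprime)+c(n^\ppprime)+(n^\prime+n^\pprime)+n$, and substituting $n=n^\prime+n^\pprime+n^\ppprime$ (Claim~\ref{clm:alphemergeLowerNprimes}(a.)) yields the two displayed forms in~(\ref{eq:camergeRecur}). The recursion is well founded since $n^\prime,n^\pprime,n^\ppprime<n$ for $n\ge N_0$ (clear from (\ref{eq:npppnstar})--(\ref{eq:nppnp}) and $\alpha>1$), and $c(n^\prime),c(n^\pprime),c(n^\ppprime)$ are defined because $n^\prime,n^\pprime\ge1$ and $n^\ppprime\ge3$ by Claim~\ref{clm:alphemergeLowerNprimes}(d.,e.).

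For strict monotonicity I would first record an auxiliary fact: each of $n^\ppprime=\lfloor n/(\alpha+1)\rfloor+1$, $n^*=n-n^\ppprime$, $n^\pprime=\lfloor n^*/(\alpha+1)\rfloor+1$, and $n^\prime=n^*-n^\pprime$ is a \emph{non-decreasing} function of $n$, since the maps $x\mapsto\lfloor x/(\alpha+1)\rfloor+1$ and $x\mapsto x-\lfloor x/(\alpha+1)\rfloor-1$ each change by $0$ or $1$ as $x$ increases by $1$ (because $0<1/(\alpha+1)<1$). Then I would prove by strong induction on $n$ that $c(1)\le c(2)\le\cdots\le c(n)$ and that $c(k)>c(k-1)$ for every $k$ with $N_0\le k\le n$. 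For $n<N_0$ this is trivial since $c$ vanishes there; when $n\ge N_0$ and $n-1<N_0$ we have $c(n-1)=0<c(n)$; and when $n\ge N_0$ and $n-1\ge N_0$, apply the recurrence to both $c(n)$ and $c(n-1)$. The linear parts satisfy $(n-1)+(n-1)^\prime+(n-1)^\pprime<n+n^\prime+n^\pprime$ (strict, because $n-1<n$ and the primed quantities are non-decreasing), while $c((n-1)^\prime)\le c(n^\prime)$, $c((n-1)^\pprime)\le c(n^\pprime)$, $c((n-1)^\ppprime)\le c(n^\ppprime)$ by the auxiliary fact together with the induction hypothesis (all arguments involved are $<n$). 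Summing gives $c(n-1)<c(n)$, closing the induction.

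I expect the main obstacle to be a pedantic but necessary point in the recurrence step: to claim that processing the embedded copy of $\Ramerge(n^\prime)$ costs exactly $c(n^\prime)$, one must note that Claim~\ref{clm:Ramerge}(c.) is itself proved by the very same recursion on $n$, so the merges performed on the embedded $\Ramerge(n^\prime)$ are literally the same operations as in the stand-alone run, and hence incur the same cost; the only change is the harmless presence of longer runs at the bottom of the stack. The monotonicity argument is elementary, but it would fail without the observation that $n^\prime,n^\pprime,n^\ppprime,n^*$ are non-decreasing in $n$ — easy to overlook because of the floor functions — which is what lets the strong induction go through.
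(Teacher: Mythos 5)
Your proposal is correct and follows essentially the same route as the paper: the recurrence is read directly off the inductive proof of Claim~\ref{clm:Ramerge}(c.) (costs $c(n^\prime)+c(n^\pprime)+c(n^\ppprime)$ plus the two merges of cost $n^\prime+n^\pprime$ and $n$), and strict monotonicity follows from the observation that $n^\prime$, $n^\pprime$, $n^\ppprime$ are non-decreasing in $n$ together with an easy induction. You simply spell out the details (the floor-function monotonicity and the strong-induction bookkeeping) that the paper leaves as ``an easy proof by induction.''
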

  The first equality of Equation~(\ref{eq:camergeRecur}) is an immediate consequence of the
  proof of part~c.\ of Claim~\ref{clm:Ramerge};
  the second follows from $n = n^\prime + n^\pprime + n^\ppprime$. To see that $c(n)$ is
  increasing for $n\ge N_0$,
  let $(n +\penalty10000 1)^\prime, (n +\penalty10000 1)^\pprime, (n +\penalty10000 1)^\ppprime$
  indicate the three values such that
  $\Ramerge(n +  1)$ is the concatenation of
  $\Ramerge((n +  1)^\prime)$, $\Ramerge((n +  1)^\pprime)$ and
  $\Ramerge((n +  1)^\ppprime)$. Note that
  $(n +  1)^\prime \ge n^\prime$,
  $(n +  1)^\pprime \ge n^\pprime$, and
  $(n +  1)^\ppprime \ge n^\ppprime$.
  An easy proof by induction now shows that $c(n + 1) > c(n)$ for $n \ge N_0$,
  and Claim~\ref{clm:amergeRecur1} is proved.

  Let
  $\delta = \lceil 2(\alpha + 1)^2/(2 \alpha + 1)\rceil$.
  (For $1 < \alpha \le 2$, we have $\delta \le 4$.)
  We have $\delta \le N_0 - 1$
  for all $\alpha > 1$.
  For real $x\ge N_0$,
  define $b(x) = c(\lfloor x \rfloor  -  \delta)$.
  Since $c(n)$ is increasing, $b(x)$ is nondecreasing.

  \begin{claim}\label{clm:bamergeRecur}
      \begin{itemize}
          \setlength{\parsep}{0pt}
          \setlength{\itemsep}{0pt}
          \item[\rm a.]
            $\frac 1{\alpha + 1} n - \delta \le (n - \delta)^\ppprime$.
          \item[\rm b.]
            $\frac \alpha{(\alpha + 1)^2} n - \delta \le (n - \delta)^\pprime$.
          \item[\rm c.]
            $\frac {\alpha^2}{(\alpha + 1)^2} n - \delta \le
              (n - \delta)^\prime$.
          \item[\rm d.] If $x \ge N_0 + \delta$, then
                $b(x) \, \ge\,
                b({\textstyle \frac{\alpha^2}{(\alpha + 1)^2} x }) +
                b({\textstyle \frac\alpha{(\alpha + 1)^2} x }) +
                b({\textstyle \frac1{\alpha + 1} x }) +
                {\textstyle \frac{2 \alpha + 1}{\alpha + 1} }
                  (x  -  \delta - 1) - 1$.
      \end{itemize}
  \end{claim}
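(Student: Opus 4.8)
The plan is to establish parts~a--c by direct algebra from the closed forms~(\ref{eq:npppnstar}) and~(\ref{eq:nppnp}), and then to obtain the recursive inequality in part~d by unwinding the definition $b(x)=c(\lfloor x\rfloor-\delta)$ using the exact recurrence of Claim~\ref{clm:amergeRecur1} together with the fact that $c$, and hence $b$, is nondecreasing.

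For parts~a--c I would substitute $n-\delta$ for $n$ in~(\ref{eq:npppnstar}) and~(\ref{eq:nppnp}) (legitimate once $n-\delta\ge N_0$, which is the only regime needed in part~d), obtaining
\begin{align*}
  (n-\delta)^\ppprime &= \tfrac1{\alpha+1}(n-\delta)+\epsilon_1,\\
  (n-\delta)^\pprime &= \tfrac\alpha{(\alpha+1)^2}(n-\delta)-\tfrac1{\alpha+1}\epsilon_1+\epsilon_2,\\
  (n-\delta)^\prime &= \tfrac{\alpha^2}{(\alpha+1)^2}(n-\delta)-\tfrac\alpha{\alpha+1}\epsilon_1-\epsilon_2
\end{align*}
for some $\epsilon_1,\epsilon_2\in(0,1]$. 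After cancelling the leading $n$-terms, each claimed inequality rearranges to one of the form $\delta\,(1-\gamma)\ge(\text{a quantity}\le\epsilon_1+\epsilon_2\le2)$, where $\gamma$ is the relevant coefficient; parts~a and~b then hold for every $\delta\ge1$, and part~c rearranges to $\delta\ge\alpha+1$, which is immediate from the definition of $\delta$ since $2(\alpha+1)^2/(2\alpha+1)\ge\alpha+1$ is equivalent to $2(\alpha+1)\ge2\alpha+1$. These are all one-line checks.

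For part~d, I would write $m:=\lfloor x\rfloor$ and $n_1:=(m-\delta)^\prime$, $n_2:=(m-\delta)^\pprime$, $n_3:=(m-\delta)^\ppprime$. The hypothesis $x\ge N_0+\delta$ forces $m-\delta\ge N_0$, so Claim~\ref{clm:amergeRecur1} gives
\[
  b(x)\;=\;c(m-\delta)\;=\;c(n_1)+c(n_2)+c(n_3)+(m-\delta)+n_1+n_2.
\]
By Claim~\ref{clm:alphemergeLowerNprimes}(d,e) the $n_j$ are positive integers, so $c(n_j)=b(n_j+\delta)$ (because $\delta$ is an integer and $b(z+\delta)=c(z)$ for every positive integer $z$). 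For the additive term, $n_1+n_2=(m-\delta)^*=\tfrac\alpha{\alpha+1}(m-\delta)-\epsilon_1$ by~(\ref{eq:npppnstar}), so $(m-\delta)+n_1+n_2=\tfrac{2\alpha+1}{\alpha+1}(m-\delta)-\epsilon_1\ge\tfrac{2\alpha+1}{\alpha+1}(m-\delta)-1>\tfrac{2\alpha+1}{\alpha+1}(x-\delta-1)-1$, using $m>x-1$. For the three recursive terms, since $b$ is nondecreasing it is enough to verify $n_1+\delta\ge\tfrac{\alpha^2}{(\alpha+1)^2}x$, $n_2+\delta\ge\tfrac\alpha{(\alpha+1)^2}x$ and $n_3+\delta\ge\tfrac1{\alpha+1}x$; plugging in the closed forms above and using $m>x-1$, the last two collapse to $\delta(\alpha^2+\alpha+1)\ge2\alpha+1$ and $\alpha\delta\ge1$ (both trivial), while the first collapses to $\delta(2\alpha+1)\ge3\alpha^2+3\alpha+1$. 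Adding the four bounds yields part~d.

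The hard part is precisely this last inequality and the floor/$\epsilon$ bookkeeping surrounding it: the ``clean'' slack $-\delta$ recorded in part~c is just barely too small to absorb the fractional loss $x-\lfloor x\rfloor$, so part~d must go through the finer closed forms rather than invoke part~c as a black box, and this is the single place where the precise value of $\delta$ is used. The inequality $\delta(2\alpha+1)\ge3\alpha^2+3\alpha+1$ holds because $2(\alpha+1)^2-(3\alpha^2+3\alpha+1)=-(\alpha^2-\alpha-1)$ is $\ge0$ exactly when $\alpha\le\varphi$ (so for such $\alpha$ it follows from $\delta\ge2(\alpha+1)^2/(2\alpha+1)$), while for $\varphi<\alpha\le2$ one has $\delta=4$ and $(3\alpha^2+3\alpha+1)/(2\alpha+1)\le19/5<4$. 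All remaining manipulations are routine arithmetic with the $\epsilon_i$ and the floor function.
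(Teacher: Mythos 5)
Your handling of parts a.--c.\ and the overall architecture of part d.\ (unwinding $b(x)=c(\lfloor x\rfloor-\delta)$ via the recurrence of Claim~\ref{clm:amergeRecur1}, then using that $b$ and $c$ are nondecreasing) is essentially the paper's argument, and the arithmetic there is correct. The gap is in how you bound the three recursive terms in part d. You demand the real-valued conditions $n_1+\delta\ge\frac{\alpha^2}{(\alpha+1)^2}x$, $n_2+\delta\ge\frac{\alpha}{(\alpha+1)^2}x$, $n_3+\delta\ge\frac{1}{\alpha+1}x$, i.e.\ you throw away the floor in $b(y)=c(\lfloor y\rfloor-\delta)$ at the target points. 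That lost unit of slack is exactly what forces your binding inequality $\delta(2\alpha+1)\ge 3\alpha^2+3\alpha+1$, which you verify only for $1<\alpha\le 2$. But Claim~\ref{clm:bamergeRecur} is used inside the proofs of Theorems~\ref{thm:alphaLower} and~\ref{thm:alphamergeLower}, which fix an arbitrary $\alpha>1$, and for $\alpha>2$ your inequality is simply false: at $\alpha=3$ one has $\delta=\lceil 32/7\rceil=5$ and $\delta(2\alpha+1)=35<37=3\alpha^2+3\alpha+1$ (it already fails at, say, $\alpha=2.2$, where $\delta=4$ and $4\cdot 5.4=21.6<22.12$). So as written your argument establishes part d.\ only for $\alpha\le 2$, which is a genuine gap relative to what the claim must deliver.

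The repair is to keep the floor: since $b(\gamma x)=c(\lfloor\gamma x\rfloor-\delta)$ and $n_j+\delta$ is an integer, it suffices to check $n_j+\delta>\gamma x-1$ for $\gamma\in\{\frac{\alpha^2}{(\alpha+1)^2},\frac{\alpha}{(\alpha+1)^2},\frac{1}{\alpha+1}\}$. Because each $\gamma<1$ and $x<\lfloor x\rfloor+1$, this follows at once from parts a.--c.\ applied with $n=\lfloor x\rfloor$; so, contrary to your closing remark, parts a.--c.\ \emph{can} be invoked as black boxes --- the slack you were missing is the floor inside $b$, not finer control of the $\epsilon_i$. Equivalently, redoing your computation with the extra $-1$ turns the binding constraint for the first term into $\delta\ge\alpha$, which holds for every $\alpha>1$ since $\delta\ge 2(\alpha+1)^2/(2\alpha+1)>\alpha$. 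With that single change (the additive term is handled exactly as you did) your proof is correct for all $\alpha>1$ and coincides with the paper's.
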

  For~a.,
  (\ref{eq:npppnstar}) implies that
  $(n - \delta)^\ppprime \ge \frac{n-\delta}{\alpha+1}$,
  so a.~follows from $-\delta \le -\delta/(\alpha +  1)$.
  This holds as $\delta > 0$ and $\alpha>1$. For~b.,
  (\ref{eq:nppnp})~implies that
  $(n - \delta)^\pprime \ge
  \frac{\alpha}{(\alpha + 1)^2}(n - \delta) - \frac{1}{\alpha + 1}$,
  so after simplification, b.~follows from
  $\delta \ge (\alpha + 1) / (\alpha^2 +  \alpha +  1)$;
  it is easy to verify that this holds by choice of~$\delta$.
  For~c., (\ref{eq:nppnp})~also implies that
  $(n - \delta)^\prime \ge \frac{\alpha^2}{(\alpha + 1)^2}(n - \delta) - 2$,
  so after simplification, c.~follows from
  $\delta \ge 2(\alpha +  1)^2 / (2\alpha +  1)$.

  To prove part~d., letting $n = \lfloor x \rfloor$
  and using parts a., b.\ and c., equations
  (\ref{eq:npppnstar}) and (\ref{eq:nppnp}),
  and the fact that $b(x)$ and $c(n)$ are nondecreasing, we have
  \begin{eqnarray*}
    b(x) &=& c( n - \delta ) \\
      &=& c( (n - \delta)^\prime ) +
         c( (n - \delta)^\pprime ) +
         c( (n - \delta)^\ppprime ) +
         (n -\delta) + (n - \delta)^\prime + (n - \delta)^\pprime \\
      &\ge& \textstyle
         c( \lfloor \frac{\alpha^2}{(\alpha+1)^2} n \rfloor - \delta ) +
         c( \lfloor \frac \alpha {(\alpha+1)^2} n \rfloor - \delta ) +
         c( \lfloor \frac 1 {\alpha+1} n \rfloor - \delta ) +
      \\ & & \quad\quad\quad +
         \textstyle
         \bigl( 1 + \frac{\alpha^2}{(\alpha+1)^2} +
                \frac {\alpha}{(\alpha+1)^2} \bigr) \cdot (n - \delta)
      \\ & & \quad\quad\quad
         \textstyle
        - \frac {\alpha}{\alpha+1} \epsilon_1 -  \epsilon_2
        - \frac 1{\alpha+1}\epsilon_1 +  \epsilon_2 \\
      &\ge& \textstyle
        b({\textstyle \frac{\alpha^2}{(\alpha+1)^2} x }) +
        b({\textstyle \frac\alpha{(\alpha+1)^2} x }) +
        b({\textstyle \frac1{\alpha+1} x }) +
        \frac{ 2 \alpha+1 }{\alpha+1} (x - \delta - 1) - 1.
  \end{eqnarray*}
  Claim~\ref{clm:bamergeRecur}(d.) gives us the basic recurrence needed
  to lower bound $b(x)$ and hence $c(n)$.

  \begin{claim}\label{clm:bamergeLowerBd}
    For all $x \ge \delta + 1$,
    \begin{equation}\label{eq:bamergeLower}
      b(x)~\ge~ c_\alpha\cdot x \log x -Bx +A,
    \end{equation}
    where $A = \frac {2\alpha + 1}{2\alpha + 2}(\delta +  1) + \frac12$ and
    $B = \frac A {\delta + 1} +
    c_\alpha \log (\max\{
      N_0 + \delta +  1,
      \left\lceil \frac{(\delta + 1)(\alpha + 1)^2}{\alpha} \right\rceil
    \})$.
  \end{claim}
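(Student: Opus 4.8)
\noindent\textbf{Proof plan for Claim~\ref{clm:bamergeLowerBd}.}
The plan is to argue by strong induction on $\lfloor x\rfloor$, using the recurrence of Claim~\ref{clm:bamergeRecur}(d.) for the inductive step. Throughout, write $p_1 = \frac{\alpha^2}{(\alpha+1)^2}$, $p_2 = \frac{\alpha}{(\alpha+1)^2}$, $p_3 = \frac{1}{\alpha+1}$ for the three scaling factors in that recurrence, and let $M$ abbreviate the quantity $\max\{N_0+\delta+1,\ \lceil(\delta+1)(\alpha+1)^2/\alpha\rceil\}$ appearing inside the logarithm in the definition of~$B$, so that $B = \frac{A}{\delta+1} + c_\alpha\log M$. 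Two elementary facts carry the whole argument. First, $p_1 + p_2 + p_3 = 1$, since $\alpha^2 + \alpha + (\alpha+1) = (\alpha+1)^2$. Second, the identity
\[
  c_\alpha\bigl(p_1\log p_1 + p_2\log p_2 + p_3\log p_3\bigr) ~=~ -\,\frac{2\alpha+1}{\alpha+1}.
\]
This second identity is the crux: expanding each $\log p_i$ as a combination of $\log\alpha$ and $\log(\alpha+1)$ and collecting coefficients turns the left-hand side into $c_\alpha\cdot\frac{2\alpha+1}{\alpha+1}\cdot\frac{\alpha\log\alpha-(\alpha+1)\log(\alpha+1)}{\alpha+1}$, and substituting the definition~(\ref{eq:cAlphaDef}) of $c_\alpha$ collapses this to $-\frac{2\alpha+1}{\alpha+1}$. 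I expect this verification to be the main (though purely mechanical) obstacle, since it is the only place where the precise value of $c_\alpha$ enters, and it is exactly what forces that value.

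For the base case I would dispatch the entire range $\delta+1 \le x \le M$ at once. Since $b(x) = c(\lfloor x\rfloor-\delta)$ is a merge cost it is nonnegative, so it suffices to check $f(x) := c_\alpha x\log x - Bx + A \le 0$ on $[\delta+1,M]$. The function $f$ is convex on $x>0$ (the second derivative of $x\log x$ is positive), so on that interval $f$ is bounded by $\max\{f(\delta+1),f(M)\}$. Substituting $B(\delta+1) = A + c_\alpha(\delta+1)\log M$ gives $f(\delta+1) = c_\alpha(\delta+1)\bigl(\log(\delta+1)-\log M\bigr)\le 0$, and substituting $BM = \frac{AM}{\delta+1} + c_\alpha M\log M$ gives $f(M) = A\bigl(1 - \tfrac{M}{\delta+1}\bigr)\le 0$; both hold because $\delta+1\le M$ and $A>0$.

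For the inductive step, assume $x > M$; then $x \ge N_0+\delta+1 > N_0+\delta$, so Claim~\ref{clm:bamergeRecur}(d.) applies and
\[
  b(x) ~\ge~ b(p_1 x) + b(p_2 x) + b(p_3 x) + \tfrac{2\alpha+1}{\alpha+1}(x-\delta-1) - 1.
\]
Each of $p_1 x, p_2 x, p_3 x$ is at least $p_2 x > p_2 M \ge \delta+1$ (this is precisely why the threshold $\lceil(\delta+1)(\alpha+1)^2/\alpha\rceil$ was built into $M$), and each has floor strictly below $\lfloor x\rfloor$ (each $p_i<1$, and by the lower bound $\delta\ge 2(\alpha+1)^2/(2\alpha+1)$ the gap $x-p_i x$ exceeds~$1$), so the induction hypothesis applies to all three arguments. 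Summing the three instances $b(p_i x)\ge c_\alpha p_i x\log(p_i x) - Bp_i x + A$ and using $\sum_i p_i = 1$: the $-Bp_i x$ terms add to $-Bx$; the constants add to $3A$; and, writing $\log(p_i x)=\log p_i+\log x$, the $x\log x$ contributions add to $c_\alpha x\log x$ while the $\log p_i$ contributions, by the identity above, sum to exactly $-\frac{2\alpha+1}{\alpha+1}x$, which cancels the $+\frac{2\alpha+1}{\alpha+1}x$ coming from the penultimate term of the recurrence. What remains is
\[
  b(x) ~\ge~ c_\alpha x\log x - Bx + 3A - \tfrac{2\alpha+1}{\alpha+1}(\delta+1) - 1.
\]
Hence the claimed bound follows as long as $3A - \frac{2\alpha+1}{\alpha+1}(\delta+1) - 1 \ge A$, i.e.\ $2A \ge \frac{2\alpha+1}{\alpha+1}(\delta+1) + 1$, and this holds with equality by the definition $A = \frac{2\alpha+1}{2\alpha+2}(\delta+1) + \frac12$. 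Once Claim~\ref{clm:bamergeLowerBd} is established, Theorem~\ref{thm:alphamergeLower} follows by taking $x = n+\delta$, so that $c(n) = b(n+\delta) \ge c_\alpha(n+\delta)\log(n+\delta) - B(n+\delta) + A = (c_\alpha - o(1))\,n\log n$.
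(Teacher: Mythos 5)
Your proposal is correct and follows essentially the same route as the paper's proof: induction using Claim~\ref{clm:bamergeRecur}(d.), with the base case disposed of because the choice of $B$ makes the right-hand side of~(\ref{eq:bamergeLower}) nonpositive below the threshold $M$, and the inductive step hinging on the weights summing to $1$, the identity $c_\alpha\sum_i p_i\log p_i = -\frac{2\alpha+1}{\alpha+1}$ forced by the definition of $c_\alpha$, and the equality $2A=\frac{2\alpha+1}{\alpha+1}(\delta+1)+1$. Your only departures are cosmetic refinements (the explicit convexity check of the base case on $[\delta+1,M]$ and organizing the induction on $\lfloor x\rfloor$ with the floor-decrease verification), which merely make explicit what the paper asserts tersely.
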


  The claim is proved by induction, namely we prove by
  induction on~$n$ that (\ref{eq:bamergeLower})
  holds for all $x < n$. The base case is for
  $x < n = \max\{ N_0 +  \delta +  1,
                  \left\lceil \frac{(\delta + 1)(\alpha + 1)^2}{\alpha} \right\rceil
               \}$.
  In this case $b(x)$ is non-negative, and the righthand side
  of~(\ref{eq:bamergeLower}) is $\le 0$ by choice of~$B$.
  Thus (\ref{eq:bamergeLower}) holds trivially.

  For the induction step, we may assume $n-1\le x < n$ and have
  \[
  \textstyle
  \delta + 1 \le \frac\alpha{(\alpha+1)^2} x < \frac{\alpha^2}{(\alpha + 1)^2} x
      < \frac1{\alpha + 1} x < n -1.
  \]
  The first of these inequalities follows from $x \ge \frac{(\delta + 1)(\alpha + 1)^2}{\alpha}$;
  the remaining inequalities follow from $1<\alpha<2$ and $x < n$ and $n\ge N_0 \ge 6$.
  Therefore, the induction hypothesis implies that the bound~(\ref{eq:bamergeLower})
  holds for
  $b(\frac{\alpha^2}{(\alpha+1)^2} x )$,
  $b(\frac\alpha{(\alpha+1)^2} x )$,
  and $b(\frac1{\alpha+1} x)$. So by Claim~\ref{clm:bamergeRecur}(d.),
  \begin{eqnarray*}
  b(x) &\ge& \textstyle
       c_\alpha \frac{\alpha^2}{(\alpha+1)^2}
           x \log \frac {\alpha^2 x}{(\alpha+1)^2}
           - B \frac {\alpha^2}{(\alpha+1)^2} x
           + \frac {2\alpha+1}{2\alpha+2}(\delta +  1) + \frac12 \\
       && \quad \textstyle
       + c_\alpha \frac{\alpha}{(\alpha+1)^2}
           x \log \frac {\alpha x}{(\alpha+1)^2}
           - B \frac {\alpha}{(\alpha+1)^2} x
           + \frac {2\alpha+1}{2\alpha+2}(\delta +  1) + \frac12 \\
       && \quad \textstyle
       + c_\alpha \frac1{\alpha+1}
           x \log \frac{x}{\alpha+1}
           - B \frac{1}{\alpha+1} x
           + \frac {2\alpha+1}{2\alpha+2}(\delta +  1) + \frac12 \\
       && \quad \textstyle
           + \frac{2 \alpha + 1}{\alpha+1} x
           - \frac{2 \alpha + 1}{\alpha+1} (\delta +  1) - 1 \\
   &=& \textstyle
       c_\alpha x \log x - B x + A \\
   && \quad \textstyle
        + c_\alpha x \left[
              \frac{\alpha^2}{(\alpha+1)^2} \log\frac{\alpha^2}{(\alpha+1)^2}
              + \frac{\alpha}{(\alpha+1)^2} \log\frac{\alpha}{(\alpha+1)^2}
              + \frac1{\alpha+1} \log\frac1{\alpha+1} \right]
              + \frac{2 \alpha + 1}{\alpha+1} x .
  \end{eqnarray*}
  The quantity in square brackets is equal to
  \begin{eqnarray*}
  \lefteqn{ \textstyle
        \left( \frac{2 \alpha^2}{(\alpha+1)^2}
               + \frac{\alpha}{(\alpha+1)^2} \right)\log \alpha
        - \left( \frac{2 \alpha^2}{(\alpha+1)^2}
               + \frac{2 \alpha}{(\alpha+1)^2}
               + \frac1{\alpha+1} \right) \log(\alpha +  1)} \\
  &=& \textstyle
        \frac {\alpha(2\alpha+1)}{(\alpha+1)^2} \log \alpha
        - \frac {2\alpha^2+3\alpha+1}{(\alpha+1)^2} \log(\alpha +  1)
        \hspace*{1in} % Stupid centering trick!
        \\
  &=& \textstyle
        \frac {\alpha(2\alpha+1)}{(\alpha+1)^2} \log \alpha
        - \frac {2\alpha+1}{\alpha+1} \log(\alpha +  1) \\
  &=& \textstyle
        \frac {\alpha \log \alpha - (\alpha +  1) \log(\alpha +  1) } {\alpha+1}
        \cdot \frac{2 \alpha + 1}{\alpha+1}.
  \end{eqnarray*}
  Since
  $c_\alpha = (\alpha +  1) /
    ((\alpha +  1) \log(\alpha +  1) - \alpha \log \alpha)$,
  we get that $b(x) \ge c_\alpha x \log x - B x + A$.
  This completes the induction step and proves Claim~\ref{clm:bamergeLowerBd}.

  Since $A$ and $B$ are constants, this gives
  $b(x) \ge (c_\alpha - o(1)) x \log x$.
  Therefore $c(n) = b(n +  \delta) \ge (c_\alpha - o(1)) n \log n$.
  This completes the proofs of Theorems \ref{thm:alphaLower},
  \ref{thm:alphamergeLower} and~\ref{thm:twomergeLower}.
\end{proof}

\subsection{Upper bound for 2-merge sort and \texorpdfstring{$\alpha$}{a}-merge sort --- preliminaries}
\label{sec:mergeUpperPrelim}

We next prove upper bounds on the worst-case runtime
of $2$-merge sort and $\alpha$-merge sort for $\varphi<\alpha < 2$.
The upper bounds will have the form
$n \cdot ( d_\alpha + c_\alpha \log n)$, with no hidden or missing constants.
$c_\alpha$~was already defined in~(\ref{eq:cAlphaDef}). For $\alpha =2$, $c_\alpha \approx 1.08897$
and the constant~$d_\alpha$ is
\begin{equation}\label{eq:dtwoDef}
  d_2 ~=~ 6 - c_2\cdot( 3 \log 6 -2 \log 4)
      ~=~ 6 - c_2\cdot( (3 \log 3) - 1) ~\approx~ 1.91104.
\end{equation}
For $\varphi<\alpha<2$, first define
\begin{equation}\label{eq:kalphaDef}
  k_0(\alpha) ~=~ \min \{
    {
      \textstyle \ell \in N ~:~
      \frac{\alpha^2 - \alpha - 1}{\alpha - 1} \ge \frac{1}{\alpha^\ell}
    }
  \}.
\end{equation}
Note $k_0(\alpha)\ge 1$. Then set, for $\varphi<\alpha<2$,
\begin{equation}\label{eq:dalphaDef}
  d_\alpha ~=~
  \frac{2^{k_0(\alpha)+1} \cdot \max\{ (k_0(\alpha) +  1), 3 \} \cdot
    (2\alpha - 1)}{\alpha - 1} + 1 .
\end{equation}
Our proof for $\alpha = 2$ is substantially simpler than the
proof for general $\alpha$: it also gives the better constant~$d_2$.
The limits $\lim_{\alpha\rightarrow \varphi^+} k(\alpha)$
and $\lim_{\alpha\rightarrow \varphi^+} d_\alpha$ are both equal
to~$\infty$; we suspect this is not optimal.\footnote{%
Already the term $2^{k_0(\alpha)+1}$ is not optimal as the
proof of Theorem~\ref{thm:alphamergeUpper} shows that the base~2 could be
replaced by $\sqrt{\alpha +  1}$; we conjecture however, that
in fact it is not necessary for the limit of~$d_\alpha$ to be
infinite.} However, by Theorems \ref{thm:alphamergeLower} and~\ref{thm:twomergeLower}, the constant~$c_\alpha$ is optimal.

\begin{theorem}\label{thm:alphamergeUpper}
  Let $\varphi < \alpha < 2$. The merge cost of $\alpha$-merge sort on
  inputs of length~$n$ composed of $m$ runs is
  $\le n \cdot (d_\alpha + c_\alpha \log m)$.
\end{theorem}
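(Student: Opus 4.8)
The plan is an amortized analysis maintaining, throughout the run of $\alpha$-merge sort, an invariant that bounds the accumulated merge cost $w_{\QQ_i}$ of every run on the stack~$\mathcal \QQ$ in terms of its length $|\QQ_i|$ and the number $m_{\QQ_i}$ of original runs it is built from. The target invariant is the bound $w_{\QQ_i} \le |\QQ_i| \cdot (d_\alpha + c_\alpha \log m_{\QQ_i})$ for every run that has become \emph{settled}, i.e.\ that has been covered by a subsequent push and therefore sits at least $\max\{k_0(\alpha)+1,3\}$ positions below the top, together with a cruder linear bound $O(|\QQ_i|)$ (with implied constant depending on~$\alpha$, as in~(\ref{eq:dalphaDef})) on the $O(k_0(\alpha))$ runs near the top that may still be merged. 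When the algorithm terminates, $\mathcal \QQ$ holds the single run $\QQ_1$ with $m_{\QQ_1}=m$ and $|\QQ_1|=n$; applying the settled-run bound to it gives Theorem~\ref{thm:alphamergeUpper}. The constant $c_\alpha$ of~(\ref{eq:cAlphaDef}) is precisely the value that balances the key algebraic step in a merge, matching the cost recurrence~(\ref{eq:camergeRecur}) of Claim~\ref{clm:amergeRecur1} and the entropy computation driving the lower bound in Claim~\ref{clm:bamergeLowerBd}; this is why the two bounds are asymptotically equal.

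First I would collect the structural facts available for reuse. From the proof of Theorem~\ref{thm:mergeStackSize}, right after a push (hence whenever the inner loop test fails) the stack lengths satisfy $|\QQ_{i-1}| \ge \alpha|\QQ_i|$ below the top, so the runs decrease geometrically with ratio at least~$\alpha$ except near the very top; in particular the ``$|X|<\alpha|Y|$'' trigger of line~\ref{algline:amergeXYZ} can fire only \emph{after} some earlier merge within the same iteration of the inner loop, since immediately after a push the new $X$ and $Y$ are the old $\QQ_{\ell-1}$ and~$\QQ_\ell$, which were geometrically spaced. Next, for a merge $\Merge(A,B)$ producing $C$ one has $w_C = w_A + w_B + |A| + |B|$, $|C|=|A|+|B|$, and $m_C = m_A + m_B$, so preserving the settled bound across a merge reduces to the inequality
\[
  \frac{|A| + |B|}{c_\alpha} ~\le~ |A|\log\frac{m_A+m_B}{m_A} + |B|\log\frac{m_A+m_B}{m_B},
\]
which I would verify case by case from the merge-trigger conditions: when $Y$ and $Z$ are merged because $|Y| < \alpha|Z|$, the ratio $|A|/|B|$ is at most~$\alpha$ and the estimate follows essentially as in the proof of Theorem~\ref{thm:alphaMergeUpperEasy}; this clean sub-case is the only one needed when $\alpha = 2$, which is why that case is simpler and yields the smaller constant $d_2$ of~(\ref{eq:dtwoDef}). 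The rule of merging the two smaller of the top three runs whenever $|X|<|Z|$ is what keeps these ratios controlled in the other sub-cases.

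The main obstacle is the cascade of forced merges triggered by ``$|X|<\alpha|Y|$'' when $|Y|$ is much larger than $|Z|$: here the per-merge bookkeeping genuinely fails, since a short but original-run-heavy $Z$ can be swallowed by a long $Y$ at an arbitrarily bad length ratio. As already noted in the proof of Theorem~\ref{thm:alphaMergeUpperEasy}, such a merge forces another merge immediately; for $\varphi < \alpha < 2$ one must unwind $k_0(\alpha)$ levels of the stack --- this is exactly what the threshold~(\ref{eq:kalphaDef}) records --- to see that after a bounded number of merges the geometric regime of Theorem~\ref{thm:mergeStackSize} is restored. The plan is therefore to treat an entire such cascade as a single unit: charge its total cost against the linear ``$d_\alpha|\QQ|$'' budget of the run it produces --- this is the source of the factor $2^{k_0(\alpha)+1}\cdot\max\{k_0(\alpha)+1,3\}\cdot(2\alpha-1)/(\alpha-1)$ in~(\ref{eq:dalphaDef}) --- and only afterwards re-establish the settled invariant for runs that have sunk deep enough. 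The remaining work is bookkeeping: bounding the $O(k_0(\alpha))$ top-of-stack runs by the crude linear bound and checking that the transition is clean at the moment a run becomes settled, plus the $O(n)$ cost of the final cleanup loop. Since $k_0(\alpha)\to\infty$ as $\alpha\to\varphi^+$, this is also the step responsible for the (conjecturally unnecessary) blow-up of $d_\alpha$, and it is why the argument does not extend below~$\varphi$ without increasing the awareness~$k$; by Theorems~\ref{thm:alphamergeLower} and~\ref{thm:twomergeLower}, however, the leading constant $c_\alpha$ obtained this way is already optimal.
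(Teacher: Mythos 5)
Your plan has the same broad shape as the actual proof (a per-run potential of the form $|\QQ_i|(d_\alpha+c_\alpha\log m_{\QQ_i})$, the constant $c_\alpha$ absorbing balanced merges, and the long-run cascades charged against the linear $d_\alpha$ budget with $k_0(\alpha)$ controlling the bad phase), but two of your stated invariants are wrong, and the steps you defer as ``bookkeeping'' are where the proof actually lives. First, the claim that the $O(k_0(\alpha))$ runs near the top of the stack satisfy a crude bound $w_{\QQ_i}=O(|\QQ_i|)$ is false: already for $m$ equal-length original runs, the top run at many moments is itself the product of a full binary-tree-like history and has merge cost $\Theta(|\QQ_\ell|\log m_{\QQ_\ell})$. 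Moreover, under your depth-based notion of ``settled'' (at least $\max\{k_0(\alpha)+1,3\}$ positions below the top), the final run $\QQ_1$ at termination is never settled, and on inputs where the stack stays shallow nothing ever is, so Theorem~\ref{thm:alphamergeUpper} does not follow from the invariant as you state it. The paper's proof instead gives \emph{every} stack run a bound of type $G_\alpha$ or $H_\alpha$ (these differ only by an additive $|\QQ_i|$), with only the top one or two runs in exceptional states; and because of a borrowing step in case (C) (Lemma~\ref{lem:GHalphabounds}(c) and the accompanying footnote), for $\varphi<\alpha<2$ the invariant must be maintained for the \emph{total} $\sum_i w_{\QQ_i}$ rather than run by run --- a point your per-run scheme cannot reproduce.

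Second, the cascade triggered by a long run is not ``a bounded number of merges after which the geometric regime is restored.'' When a long original run $Z$ arrives, the algorithm first repeatedly merges $X$ and $Y$ while $|Z|>|X|$ (case (C) of the paper's proof); the number of these merges is bounded only by the stack height, not by $k_0(\alpha)$, and each of them must still preserve the $\log m$-type bound via the unbalanced inequality of Lemma~\ref{lem:alphamerge-B} (valid only because $c_\alpha(\alpha\log\alpha-(\alpha-1)\log(\alpha-1))\ge 1$). What is bounded by $k_0(\alpha)$ is the number of \emph{later} merges in which the already-absorbed long run swallows further runs (case (D)), and establishing that bound is not a consequence of Theorem~\ref{thm:mergeStackSize}: it needs the decomposition of the top run into $Z_1,Z_2,Z_3$ and the invariants (D-3), (D-4), (D-6), (D-7), which is precisely where $\alpha>\varphi$ and the factor $2^{k}(2\alpha-1)/(\alpha-1)$ in~(\ref{eq:dalphaDef}) enter (Lemma~\ref{lem:alphamerge-CD}). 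Charging ``the whole cascade to the linear budget of the run it produces'' also does not re-establish the $c_\alpha\log m$ part of the bound for that run: the absorbed runs arrive carrying their own $\log m$ budgets at arbitrarily bad length ratios, and recombining them into the target bound is exactly the content of Lemmas \ref{lem:alphamerge-B}--\ref{lem:alphamerge-CD} and the (C)/(D) transition analysis. (A smaller inaccuracy: the balanced case handled by Lemma~\ref{lem:alphamerge-A} is \emph{not} the only one needed for $\alpha=2$; the $2$-merge proof still requires the wrap-up and long-run cases (B) and (C), and only case (D) disappears. Also, the comparable-ratio estimate does not come from Theorem~\ref{thm:alphaMergeUpperEasy}, whose counter argument yields constant $2+\alpha$, not $c_\alpha$.) So as it stands the proposal has genuine gaps at the invariant level and at the cascade analysis; filling them essentially forces the $G_\alpha/H_\alpha$ double potential, the total-cost formulation, and the case-(D) machinery of the paper.
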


\noindent
The corresponding upper bound for $\alpha=2$ is:

\begin{theorem}\label{thm:twomergeUpper}
  The merge cost of $2$-merge sort on inputs of length~$n$ composed of $m$
  runs is $\le n \cdot (d_2 + \penalty10000 c_2\log m) \approx
    n \cdot (1.91104 + 1.08897 \log m)$.
\end{theorem}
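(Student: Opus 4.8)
The proof refines the amortized argument behind Theorem~\ref{thm:alphaMergeUpperEasy}. The plan is to maintain a real-valued potential $\Phi$ on stack configurations with three properties: (i) $\Phi = 0$ at the start; (ii) every merge of cost $|A|+|B|$ decreases $\Phi$ by at least $|A|+|B|$; and (iii) the sum, over the whole run of the algorithm, of the increases of $\Phi$ caused by pushing original runs onto $\mathcal{\QQ}$ is at most $n\cdot(d_2 + c_2\log m)$. These three facts together immediately give the theorem, since the total merge cost equals the total of the decreases of $\Phi$, which is bounded by the total of the increases. I would take $\Phi = \sum_i \phi(\QQ_i)$ for a per-run contribution $\phi$, and the whole difficulty is to design $\phi$ so that two competing demands are met at once: a triggered merge must release at least its own cost (this pins down the logarithmic coefficient as $c_2 = 3/\log(27/4)$), while the push-increments must telescope to $c_2 n\log m$ rather than $c_2 n\log n$ (this is what forces $\phi$ to involve the original-run counts $m_{\QQ_i}$, not just the lengths $|\QQ_i|$), with all the linear slack absorbed into the $d_2 n$ term.

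For property (ii) I would use the stack invariants from the proof of Theorem~\ref{thm:mergeStackSize} --- namely $|\QQ_{i-1}| \ge 2|\QQ_i|$ for all non-top runs, with the weaker alternatives (i)/(ii) there holding for the top two --- together with the two merge triggers of Algorithm~\ref{alg:twoMergeSort}: $Y$ and $Z$ are merged only when $|Y| < 2|Z|$ (the two runs are within a factor of two), and $X$ and $Y$ are merged only when $|X| < |Z|$, in which case the new run $XY$ has $|XY| < 3|Z|$ and is immediately a candidate for a further merge. Thus the merges performed after a single push form a cascade of bounded shape, and one checks that each such cascade restores the $\phi$-invariant while releasing at least its combined cost, the extremal case being three stacked runs in proportion $\approx 4:2:3$ (the same proportions that appear in the lower-bound sequence $\Ramerge$ of Section~\ref{sec:alphaMergeLower} with $\alpha = 2$), merged in the pattern $\Merge(\Merge(X,Y),Z)$ with cost $\tfrac53$ times their total. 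I would organize this as a finite case analysis over the configurations the top few stack entries can be in during the \textbf{while} loop of lines~\ref{algline:2mergeXYZ}--\ref{algline:2mergefinalloopB}, and along the way verify that the two cosmetic differences from $\alpha$-merge sort --- the strict inequality and the omission of the redundant test $|X|<2|Y|$ on line~\ref{algline:2mergeXYZ} --- do not change which merges are performed.

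I expect the main obstacle to be property (iii), specifically getting $\log m$ in place of $\log n$. Because the stack can have height $\Theta(\log n)$ and a freshly pushed run may sit directly below a run almost as large as $n$, a potential that sees only run lengths yields only $c_2 n\log n$ (this is essentially what Theorem~\ref{thm:alphaMergeUpperEasy} already gives); the fix is to let $\phi(\QQ_i)$ measure $|\QQ_i|$ and its logarithmic ``room to grow'' against $m_{\QQ_i}$ and the original-run counts of the neighbouring stack entries, so that pushing $R_k$ contributes at most about $|R_k|(d_2 + c_2\log m)$ and successive contributions telescope correctly --- and making this bookkeeping simultaneously compatible with the release inequality of the previous paragraph is the technical heart of the proof. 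A secondary obstacle is the boundary behaviour: the transient violations of $|\QQ_{i-1}|\ge 2|\QQ_i|$ near the top of the stack during cascades, and the final right-to-left merge loop on lines~\ref{algline:2mergefinalloopA}--\ref{algline:2mergefinalloopB}, both require an extra reservoir of credit carried on the top one or two runs, and calibrating that reservoir so it is always supplied by the push-increments is what drives $d_2$ up to $\approx 1.911$. Once $\Phi$ is designed to satisfy (i)--(iii), the bound $n\cdot(d_2 + c_2\log m)$ follows at once; the same scheme with the parameter $\alpha$ in place of $2$ yields Theorem~\ref{thm:alphamergeUpper}, except that the cascades lengthen as $\alpha\to\varphi^+$, which is why $d_\alpha$ blows up there while $c_\alpha$ stays optimal.
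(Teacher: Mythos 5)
Your plan correctly identifies the two essential ingredients --- the accounting must be in terms of the original-run counts $m_{\QQ_i}$ to get $\log m$ rather than $\log n$, and the constants are pinned by the $4{:}2{:}3$ cascade that also drives the lower bound --- but as written it is a proof outline with the technical core deferred, not a proof. You never construct $\phi$, and you yourself flag that making the push-increments, the release inequality, and the wrap-up/cascade boundary cases compatible is ``the technical heart of the proof.'' That heart is exactly what the paper has to supply: the explicit functions $G_2(n,m)=n(d_2-1+c_2\log m)$ and $H_2(n,m)=n(d_2+c_2\log(m-1))$, the inequalities of Lemmas \ref{lem:alphamerge-A}, \ref{lem:alphamerge-B}, \ref{lem:alphamerge-Ctwo}(a) (packaged as Lemmas \ref{lem:Gbound} and \ref{lem:GHtwobounds}), and a three-mode invariant analysis (normal mode, wrap-up mode, long-run mode) in which each run on the stack carries the bound $w_{\QQ_i}\le G_2(|\QQ_i|,m_{\QQ_i})$, with the top run degraded to $H_2$ while a long run is being absorbed. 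None of these verifications is routine --- the value of $c_2$ is exactly the threshold at which Lemma \ref{lem:alphamerge-A} holds, and $d_2$ is determined by the extremal case $a=5$ in Lemma \ref{lem:alphamerge-Ctwo}(a) --- so a referee cannot accept the bound on the strength of the sketch.

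Beyond incompleteness, the specific accounting you propose has a structural problem. You require that every merge \emph{decrease} $\Phi$ by at least its cost and that all increases come from pushes, totalling at most $n(d_2+c_2\log m)$. But a per-run potential $\phi(\QQ_i)$ that yields $\log m$ must grow with $m_{\QQ_i}$, and the natural such terms (e.g.\ $c_2|\QQ_i|\log m_{\QQ_i}$) \emph{increase} when two runs are merged; this is precisely how the paper pays for merges --- Lemma \ref{lem:Gbound} says $G_2(n_1,m_1)+G_2(n_2,m_2)+n_1+n_2\le G_2(n_1+n_2,m_1+m_2)$, i.e.\ the allowance is generated at merge time, not deposited at push time (a pushed original run only brings $G_2(|R|,1)=(d_2-1)|R|$, and these deposits total only $(d_2-1)n$, far below the $\approx c_2 n\log m$ worst-case merge cost). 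A strict push-pays scheme would need a push to deposit up to $\sim c_2|R|\log m$ of credit, yet on the worst-case inputs (e.g.\ $\Ramerge(n)$ with $\alpha=2$) the earliest runs participate in $\Theta(\log m)$ merges although only $O(1)$ original runs have been seen when they are pushed, so the deposit cannot be computed locally at push time. So either your potential must let merges increase the stored allowance (abandoning your property (ii), and landing you essentially back at the paper's $G_2/H_2$ bookkeeping), or it must shift credit between stack entries at push and merge events in a way your sketch does not specify. As it stands, the argument does not go through.
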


Proving Theorems \ref{thm:alphamergeUpper} and \ref{thm:twomergeUpper}
requires handling three situations:
First, the algorithm may have top stack element~$Z$ which is
not too much larger than~$Y$ (so $|Z| \le \alpha |Y|$): in this
case either $Y$ and~$Z$ are merged or $Z$ is small
compared to~$Y$ and no merge occurs. This first case
will be handled by case~(A) of the proofs. Second,
the top stack element~$Z$ may be much larger than~$Y$
(so $|Z| > \alpha |Y|$): in this case, the algorithm will
repeatedly merge $X$ and~$Y$ until $|Z| \le |X|$. This is the
most complicated case of the argument, and is handled
by cases (C) and (D) of the proofs. (Case~(D)
is not needed when $\alpha = 2$.)
In the third case, the original runs in~$\mathcal R$
have been exhausted and the final loop on lines
\ref{algline:amergefinalloopA}-\ref{algline:amergefinalloopB}
repeatedly merges $Y$ and~$Z$. The third case is handled by
case~(B) of the proofs.

The next four technical lemmas are key for the proofs.
Lemma~\ref{lem:alphamerge-A} is used for case~(A) of
the proofs; the constant~$c_\alpha$ is exactly what
is needed to make this hold. Lemma~\ref{lem:alphamerge-B}
is used for case~(B) of the proofs.
Lemma~\ref{lem:alphamerge-Ctwo}
is used for case~(C),
and Lemma~\ref{lem:alphamerge-CD} is used
in case~(D) when $\alpha < 2$.

\begin{lemma}\label{lem:alphamerge-A}
Let $\alpha>1$.
Let $A,B,a,b$ be positive integers such that
$A \le \alpha B$ and $B \le \alpha A$.  Then
\begin{equation}\label{eq:alphamergebd1}
A \cdot c_\alpha \log a + B \cdot c_\alpha\log b + A + B
~\le~
(A  +   B) \cdot c_\alpha \log(a  +   b).
\end{equation}
\end{lemma}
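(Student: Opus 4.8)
The plan is to reduce (\ref{eq:alphamergebd1}) to a one‑variable statement about the binary entropy function. First I would move everything to one side and divide by $A+B$, rewriting the claim as
\[
  (A+B)\log(a+b) - A\log a - B\log b ~\ge~ \frac{A+B}{c_\alpha},
\]
so that $a,b$ occur only on the left and the hypotheses $A\le\alpha B$, $B\le\alpha A$ only on the right. The left‑hand side is unchanged under the rescaling $(a,b)\mapsto(\lambda a,\lambda b)$, so there is no loss in assuming $a+b=1$; then a one‑line calculus computation (or, equivalently, the nonnegativity of the Kullback--Leibler divergence between the distributions $(a,b)/(a+b)$ and $(A,B)/(A+B)$) shows that over positive reals the left‑hand side is minimized at $a=A$, $b=B$. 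Hence it suffices to prove the reduced inequality $(A+B)\log(A+B) - A\log A - B\log B \ge (A+B)/c_\alpha$.

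The second step is to recognize, after dividing the reduced inequality by $A+B$, that its left‑hand side is exactly $H(x)$ for $x = A/(A+B)$, where $H(x) = -x\log x - (1-x)\log(1-x)$ is the (base‑$2$) binary entropy. The hypotheses $A\le\alpha B$ and $B\le\alpha A$ translate precisely into $\frac{1}{\alpha+1}\le x\le\frac{\alpha}{\alpha+1}$, a closed interval symmetric about $\frac12$. Since $H$ is strictly concave and satisfies $H(x)=H(1-x)$, on this interval it attains its minimum at the two endpoints, whose common value I would compute directly:
\[
  H\!\left(\tfrac{1}{\alpha+1}\right)
  ~=~ \log(\alpha+1) - \tfrac{\alpha}{\alpha+1}\log\alpha
  ~=~ \frac{(\alpha+1)\log(\alpha+1) - \alpha\log\alpha}{\alpha+1},
\]
which by the definition (\ref{eq:cAlphaDef}) of $c_\alpha$ equals $1/c_\alpha$. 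Therefore $H(x)\ge 1/c_\alpha$ on the whole interval, completing the proof.

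There is no genuinely hard step here; the one place that needs a moment's care is the first reduction --- verifying that the worst case over $a,b$ occurs at the ratio $a:b=A:B$ --- and the real content is the final entropy evaluation, which is precisely the reason $c_\alpha$ is defined as it is: $1/c_\alpha$ is the least value the binary entropy $H(A/(A+B))$ can take when $A$ and $B$ differ by at most a factor $\alpha$. I would also note in passing that equality in (\ref{eq:alphamergebd1}) forces both $a/b=A/B$ and $\{A,B\}$ hitting an extreme ratio ($A=\alpha B$ or $B=\alpha A$), which dovetails with the extremal run‑length construction behind Theorem~\ref{thm:alphamergeLower}.
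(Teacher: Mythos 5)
Your proof is correct, and it takes a genuinely different route from the paper's. The paper never separates the roles of $(a,b)$ and $(A,B)$: it rearranges (\ref{eq:alphamergebd1}), substitutes $t=b/(a+b)$, splits into cases according to the signs of $1+c_\alpha\log(1-t)$ and $-1-c_\alpha\log t$, and in the two nontrivial cases uses $A\le\alpha B$ (resp.\ $B\le\alpha A$) to reduce to showing $-1-\alpha\ge c_\alpha\log\bigl(t(1-t)^\alpha\bigr)$, which is checked by maximizing $\log\bigl(t(1-t)^\alpha\bigr)$ at $t=1/(\alpha+1)$ --- the point where the definition of $c_\alpha$ enters. You instead first eliminate $a,b$ entirely by the Gibbs/KL argument (the left side $(A+B)\log(a+b)-A\log a-B\log b$ is minimized, over positive reals, at $a\mathbin{:}b=A\mathbin{:}B$), and then observe that the residual inequality is exactly $H\bigl(A/(A+B)\bigr)\ge 1/c_\alpha$ with $A/(A+B)$ confined to the symmetric interval $[\tfrac1{\alpha+1},\tfrac\alpha{\alpha+1}]$, where concavity and symmetry of $H$ put the minimum at the endpoints, whose value is precisely $1/c_\alpha$. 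Both arguments are elementary one-variable optimizations hinging on the same critical value $1/(\alpha+1)$, but yours decouples the two optimizations and avoids the sign-based case analysis; it also gives a cleaner conceptual reading of the constant ($1/c_\alpha$ is the least binary entropy achievable when $A$ and $B$ differ by at most a factor $\alpha$) and identifies the equality cases, whereas the paper's argument is more of a direct computation tailored to the inequality as stated.
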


\begin{lemma}\label{lem:alphamerge-B}
  Let $\alpha > 1$. Let $A, B, a, b$ be positive integers such that
  $(\alpha - 1)B \le A$.  Then
  \begin{equation}\label{eq:alphamergebd2}
    A \cdot c_\alpha \log a + B \cdot (1 + c_\alpha \log b) + A + B
    ~\le~
    (A + B) \cdot (1 + c_\alpha \log (a  +   b)).
  \end{equation}
\end{lemma}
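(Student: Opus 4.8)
The plan is to move everything to one side. Subtracting the left-hand side of~(\ref{eq:alphamergebd2}) from the right-hand side, almost all of the terms cancel, and what remains to be shown is
\[
  c_\alpha\bigl[(A+B)\log(a+b) - A\log a - B\log b\bigr] ~\ge~ B.
\]
The algebraic step that makes this tractable is the identity
$A\log\tfrac{a+b}{a} + B\log\tfrac{a+b}{b} = (A+B)\log(1+r) - B\log r$, where $r := b/a>0$, which collapses the two free ratios $a,b$ into one. The inequality is homogeneous of degree one in $(A,B)$ and scale-invariant in $(a,b)$, so after dividing by $B$ and writing $\lambda := A/B$ it becomes the two-parameter claim
\[
  c_\alpha\bigl[(\lambda+1)\log(1+r) - \log r\bigr] ~\ge~ 1
  \qquad (r>0,\ \lambda \ge \alpha-1),
\]
and, since the left side will turn out to be increasing in $\lambda$, it is enough to treat $\lambda = \alpha-1$; this is the only place the hypothesis $(\alpha-1)B\le A$ is used.

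Next I would minimise $g(r) = (\lambda+1)\log(1+r) - \log r$ over $r>0$. Its derivative has the same sign as $\lambda r - 1$, so $g$ is minimised at $r = 1/\lambda$, where (using $1 + 1/\lambda = (\lambda+1)/\lambda$) the value is $h(\lambda) := (\lambda+1)\log(\lambda+1) - \lambda\log\lambda$. Since $h'(\lambda) = \log\tfrac{\lambda+1}{\lambda} > 0$, $h$ is increasing, so $h(\lambda)\ge h(\alpha-1) = \alpha\log\alpha - (\alpha-1)\log(\alpha-1)$ for $\lambda\ge\alpha-1$. Hence the whole lemma reduces to the single inequality
\[
  c_\alpha\bigl(\alpha\log\alpha - (\alpha-1)\log(\alpha-1)\bigr) ~\ge~ 1.
\]

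To finish, I would verify this. Using $1/c_\alpha = \log(\alpha+1) - \tfrac{\alpha}{\alpha+1}\log\alpha$ from~(\ref{eq:cAlphaDef}), it is equivalent, after clearing denominators, to $\alpha(\alpha+2)\log\alpha \ge (\alpha+1)\log(\alpha+1) + (\alpha^2-1)\log(\alpha-1)$. Both $c_\alpha$ (shown increasing in the paper) and $\alpha\mapsto h(\alpha-1)$ are positive and increasing for $\alpha>1$, so their product $\alpha\mapsto c_\alpha\,h(\alpha-1)$ is increasing; it therefore suffices to check the endpoint $\alpha=\varphi$ of the range of interest. There, with $\varphi^2 = \varphi+1$ (so $\varphi-1 = 1/\varphi$), one gets $h(\varphi-1) = (2\varphi-1)\log\varphi$ and $c_\varphi = \varphi/\bigl((2\varphi-1)\log\varphi\bigr)$, hence $c_\varphi\,h(\varphi-1) = \varphi > 1$. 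So the inequality holds with room to spare throughout $\varphi<\alpha\le 2$, which is what is needed for Theorems~\ref{thm:alphamergeUpper} and~\ref{thm:twomergeUpper}.

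I expect the real obstacle to be the last one-variable inequality $c_\alpha\,h(\alpha-1)\ge 1$: one has to spot its clean equivalent form, and it is worth noting that what secures it is genuinely $\alpha\ge\varphi$ rather than merely $\alpha>1$, since the slack in the bound shrinks to zero as $\alpha\to 1^+$. Everything else --- the cancellation, the one-line calculus minimisation, the monotonicity of $h$ --- should be routine once the substitutions $r = b/a$ and $\lambda = A/B$ are in hand.
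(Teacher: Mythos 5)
Your proof is correct and follows essentially the same route as the paper's: after cancelling terms you reduce to a one-variable optimization (the paper sets $t=b/(a+b)$ and maximizes $\log(t(1-t)^{\alpha-1})$ at $t=1/\alpha$; you set $r=b/a$, $\lambda=A/B$ and minimize at $r=1/\lambda$ --- the same calculus), arriving at the identical key inequality $c_\alpha\bigl(\alpha\log\alpha-(\alpha-1)\log(\alpha-1)\bigr)\ge 1$. Where you genuinely differ is the final verification: the paper disposes of this by ``numerical examination shows that this is true for $\alpha>1.29$,'' whereas you note that $c_\alpha$ and $\alpha\mapsto\alpha\log\alpha-(\alpha-1)\log(\alpha-1)$ are both positive and increasing, so it suffices to evaluate at $\alpha=\varphi$, where the product is exactly $\varphi>1$. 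That is a cleaner, fully analytic check, valid on the range $\alpha\ge\varphi$ where the lemma is actually used (via Lemmas~\ref{lem:GHtwobounds} and~\ref{lem:GHalphabounds}). One caveat: neither your argument nor the paper's proves the statement for every $\alpha>1$ as literally claimed, and in fact it is false near $\alpha=1$ (for instance $\alpha=1.1$, $A=a=1$, $B=b=10$ violates (\ref{eq:alphamergebd2})), because $c_\alpha\bigl(\alpha\log\alpha-(\alpha-1)\log(\alpha-1)\bigr)\to 0$ as $\alpha\to 1^+$; so your closing remark that the ``slack shrinks to zero'' understates the situation --- the reduced inequality actually fails below roughly $\alpha\approx 1.29$. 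That is a defect of the lemma's stated hypothesis, not of your proof.
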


\begin{lemma}\label{lem:alphamerge-Ctwo}
  Let $\varphi<\alpha\le 2$ and $A$, $B$, $a$, $b$ be positive integers.
  \begin{itemize}
    \item[\rm (a)] (For $\alpha=2$.) If $a\ge 2$ and $A \le 2B$, then
      \[
        A \cdot (d_2 + c_2 \log (a - 1)) + A + B
          \le (A  +   B) \cdot (d_2 - 1 + c_2 \log (a +  b)) .
      \]
    \item[\rm (b)] If $\varphi<\alpha<2$ and
      $A \le \frac{\alpha}{\alpha - 1} \cdot B$, then
      \[
        A \cdot (d_\alpha + c_\alpha \log a) + A + B
          \le (A + B) \cdot (d_\alpha - 1 + c_\alpha \log (a + b)) .
      \]
  \end{itemize}
\end{lemma}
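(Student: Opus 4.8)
The plan is to use that, for fixed $\alpha,a,b$, both displayed inequalities are \emph{affine} in the pair $(A,B)$ with vanishing constant term, so each is equivalent to one of the form $\lambda A+\mu B\ge 0$, where $\lambda$ and $\mu$ are the coefficients of $A$ and of $B$ in (right side $-$ left side). A direct computation gives
\[
  \lambda~=~c_\alpha\log\tfrac{a+b}{a'}-2
  \qquad\text{and}\qquad
  \mu~=~(d_\alpha-2)+c_\alpha\log(a+b),
\]
where in part~(a) one reads $c_2,d_2$ for $c_\alpha,d_\alpha$ and takes $a'=a-1$, while in part~(b) $a'=a$. Each hypothesis bounds the ratio $A/B$ above by a constant $\rho$ (namely $\rho=2$ in~(a) and $\rho=\tfrac{\alpha}{\alpha-1}$ in~(b)), and $A,B>0$. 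So it suffices to establish the two scalar facts $\mu>0$ and $\rho\lambda+\mu\ge 0$: if $\lambda\ge 0$ then $\lambda A+\mu B\ge\mu B>0$, and if $\lambda<0$ then $A\le\rho B$ gives $\lambda A\ge\rho\lambda B$, so $\lambda A+\mu B\ge(\rho\lambda+\mu)B\ge 0$.

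I would handle part~(b) first, since it has enormous slack. From the definition~(\ref{eq:dalphaDef}) of $d_\alpha$ and $k_0(\alpha)\ge 1$ (recall~(\ref{eq:kalphaDef})) one gets the crude bound $d_\alpha-1\ge 12(2\alpha-1)/(\alpha-1)$, whence, after clearing denominators, $d_\alpha-2\ge\tfrac{2\alpha}{\alpha-1}$ (this reduces to $21\alpha\ge 11$). Since $a+b\ge 2$ gives $\log(a+b)\ge 1$, we obtain $\mu=(d_\alpha-2)+c_\alpha\log(a+b)>0$; and since $\lambda=c_\alpha\log(1+b/a)-2\ge-2$ and $\rho=\tfrac{\alpha}{\alpha-1}$, we get $\rho\lambda+\mu\ge -\tfrac{2\alpha}{\alpha-1}+(d_\alpha-2)\ge 0$. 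This proves~(b).

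Part~(a) is the hard case, because $d_2\approx 1.911$ leaves no room to spare — indeed the inequality becomes an \emph{equality} at $(a,b,A,B)=(5,1,2,1)$, and the value~(\ref{eq:dtwoDef}) of $d_2$ is calibrated exactly so that this happens. The preliminary identities I would record are $d_2=3-c_2$ and $c_2\log\tfrac{27}{4}=3$: the latter is immediate from~(\ref{eq:cAlphaDef}), and the former follows because $3\log 6-2\log 4=\log\tfrac{216}{16}=\log\tfrac{27}{2}=1+\log\tfrac{27}{4}$, so $d_2=6-c_2\bigl(1+\log\tfrac{27}{4}\bigr)=6-c_2-3$. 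Then $d_2-2=1-c_2$, so $\mu=c_2\log(a+b)-(c_2-1)>0$ because $a+b\ge 3$ gives $c_2\log(a+b)\ge c_2\log 3>c_2>c_2-1$. For the remaining inequality, set $s:=a-1\ge 1$ and $t:=a+b$; using $c_2+3=c_2\bigl(1+\log\tfrac{27}{4}\bigr)=c_2\log\tfrac{27}{2}$ one gets
\[
  2\lambda+\mu~=~c_2\log\tfrac{t^3}{s^2}-c_2-3~=~c_2\Bigl(\log\tfrac{t^3}{s^2}-\log\tfrac{27}{2}\Bigr),
\]
so $\rho\lambda+\mu\ge 0$ is precisely the statement $2t^3\ge 27s^2$. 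Because $b\ge 1$ forces $s=a-1\le t-2$, it is enough to prove $2t^3\ge 27(t-2)^2$ for all integers $t\ge 3$, and this is elementary: $g(t):=2t^3-27(t-2)^2$ has $g'(t)=6(t-3)(t-6)$, hence on $[3,\infty)$ it attains its minimum at $t=6$, where $g(6)=0$, while $g(3)=27>0$. The main obstacle is exactly this tightness: since equality is attained at $t=6$ there is nothing to discard, so the argument genuinely needs the exact identities $d_2=3-c_2$ and $c_2\log\tfrac{27}{4}=3$ together with the non-asymptotic analysis of the cubic $g$; by contrast, part~(b) goes through with the crudest possible estimates on $d_\alpha$.
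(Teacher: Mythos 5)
Your proof is correct and follows essentially the same route as the paper's: both reduce the affine-in-$(A,B)$ inequality, via the ratio bound $A\le\rho B$ and positivity of the $B$-coefficient, to a scalar inequality that in part~(a) amounts to $(a+b)^3/(a-1)^2\ge 27/2$ with equality at $(a,b)=(5,1)$ (exactly how $d_2$ was calibrated), and in part~(b) holds with large slack from the crude bound on $d_\alpha$. The only differences are presentational: the paper takes $b=1$ w.l.o.g.\ and asserts the integer minimization at $a=5$, whereas you keep general $b$, verify the minimum via the cubic $2t^3-27(t-2)^2$, and spell out the positivity of the $B$-coefficient and the identity $d_2=3-c_2$ that the paper leaves implicit.
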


\begin{lemma}\label{lem:alphamerge-CD}
  Let $\varphi < \alpha < 2$. Let $A$, $B$, $C$, $a$, $b$, and $k$ be positive integers such that $k\le k_0(\alpha)+1$ and
  $\frac{2^k (2\alpha - 1)}{\alpha - 1} C \ge A + B + C$. Then
  \begin{multline}\label{eq:alphamerge-CD}
      A \cdot (d_\alpha + c_\alpha \log a) +
      B \cdot (d_\alpha + c_\alpha \log b)) +
      k \cdot C + 2 B + A
      ~\le~\\
      A \cdot(d_\alpha - 1 + c_\alpha \log a) +
      (B + C) \cdot(d_\alpha - 1 + c_\alpha \log (b  +   1)).
  \end{multline}
\end{lemma}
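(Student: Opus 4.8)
The plan is to strip the inequality down to an elementary arithmetic fact by removing, in turn, the common $A$-terms and then the logarithms, leaving a linear inequality that the constants $d_\alpha$ and $k_0(\alpha)$ were tailored to satisfy.

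First I would cancel the $A$-terms. Subtracting $A\cdot(d_\alpha-1+c_\alpha\log a)$ from both sides of~\eqref{eq:alphamerge-CD} absorbs the first summand on the right and produces an extra $+A$ on the left; regrouping the remaining $B$- and $C$-terms (and reading the stray closing parenthesis after $B\cdot(d_\alpha+c_\alpha\log b))$ in the statement as a typo) shows that~\eqref{eq:alphamerge-CD} is equivalent to
\[
  2A + 3B + C + kC + c_\alpha\bigl(B\log b - (B+C)\log(b+1)\bigr) ~\le~ d_\alpha\, C .
\]
Next I would discard the logarithmic term: since $b\ge 1$ we have $\log b\le\log(b+1)$ and $\log(b+1)\ge\log 2 = 1$, hence $B\log b-(B+C)\log(b+1)\le -C\log(b+1)\le -C$, and because $c_\alpha>1$ this gives $c_\alpha\bigl(B\log b-(B+C)\log(b+1)\bigr)\le -C$. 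So it suffices to prove the linear inequality $2A + 3B + kC \le d_\alpha\, C$.

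For the linear inequality I would use $2A+3B\le 3(A+B)$ together with the hypothesis $A+B+C\le\frac{2^k(2\alpha-1)}{\alpha-1}\,C$, i.e.\ $A+B\le\bigl(\frac{2^k(2\alpha-1)}{\alpha-1}-1\bigr)C$, to obtain
\[
  2A + 3B + kC ~\le~ \Bigl( \tfrac{3\cdot 2^k(2\alpha-1)}{\alpha-1} + k - 3 \Bigr) C ,
\]
which reduces everything to the scalar inequality $\frac{3\cdot 2^k(2\alpha-1)}{\alpha-1}+k-3\le d_\alpha$. Writing $k_0=k_0(\alpha)$ and recalling $d_\alpha = \frac{2^{k_0+1}\max\{k_0+1,3\}(2\alpha-1)}{\alpha-1}+1$ with $k_0\ge 1$: since $k\le k_0+1$ we have $2^k\le 2^{k_0+1}$ and $k-4\le k_0-3$, so the left side is at most $\frac{3\cdot 2^{k_0+1}(2\alpha-1)}{\alpha-1}+k_0-3$, and comparing with $d_\alpha-1$ the claim becomes $k_0-3\le\bigl(\max\{k_0+1,3\}-3\bigr)\cdot\frac{2^{k_0+1}(2\alpha-1)}{\alpha-1}$. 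I would finish by the trivial split on $k_0\le 2$ (right side $=0\ge k_0-3$) versus $k_0\ge 3$ (right side $=(k_0-2)\cdot\frac{2^{k_0+1}(2\alpha-1)}{\alpha-1}\ge k_0-2>k_0-3$, using $\frac{2^{k_0+1}(2\alpha-1)}{\alpha-1}\ge1$ since $\alpha>1$).

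The only subtle point is this last step: when $k_0(\alpha)\ge 3$ one may have $k>3$, so the easy bound $k-3\le 0$ is unavailable and the factor $\max\{k_0(\alpha)+1,3\}$ in $d_\alpha$ is genuinely needed to absorb the multiplicative blow-up $3\cdot 2^k$ and the additive $k-3$ simultaneously; everything else is routine monotone estimation, and I expect the full write-up to run well under a page.
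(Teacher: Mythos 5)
Your proposal is correct and follows essentially the same route as the paper: cancel the $A$-terms, discard the logarithmic terms, reduce to a linear inequality in $A$, $B$, $C$, and finish using the hypothesis $\frac{2^k(2\alpha-1)}{\alpha-1}C \ge A+B+C$ together with the definition of $d_\alpha$ and $k\le k_0(\alpha)+1$. The only difference is bookkeeping: the paper bounds $2A+3B+kC \le \max\{k,3\}\cdot(A+B+C)$ so the definition of $d_\alpha$ applies immediately, whereas you bound $2A+3B\le 3(A+B)$ separately and absorb the leftover additive $k-3$ through the small case split on $k_0(\alpha)$, which still goes through because of the slack built into $d_\alpha$.
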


\begin{proof}[Proof of Lemma~\ref{lem:alphamerge-A}]
  The inequality~(\ref{eq:alphamergebd1}) is equivalent to
  \[
    A \cdot (c_\alpha \log a - c_\alpha \log(a  +   b) + 1)
    ~\le~
    B \cdot (c_\alpha \log(a + b) - c_\alpha \log b - 1)
  \]
  and hence to
  \[
    A \cdot \Bigl(1 + c_\alpha \log \frac{a}{a + b} \Bigr)
    ~\le~
    B \cdot \Bigl(-1 - c_\alpha \log \frac{b}{a + b} \Bigr).
  \]
  Setting $t = b / (a  +   b)$, this is the same as
  \begin{equation}\label{eq:alphamergebd1_1}
    A \cdot(1 + c_\alpha \log (1  -  t))
    ~\le~
    B \cdot(-1 - c_\alpha \log t).
  \end{equation}
  Let $t_0 = 1 - 2^{-1 / c_\alpha}$. Since $c_\alpha>1$, we have $t_0 < 1/2$,
  so $t_0 < 1  -  t_0$. The lefthand side of~(\ref{eq:alphamergebd1_1}) is
  positive iff $t < t_0$. Likewise, the righthand side is positive
  iff $t < 1 - t_0$. Thus (\ref{eq:alphamergebd1_1}) certainly
  holds when $t_0 \le t \le 1 - t_0$ where the lefthand side is
  $\le 0$ and the righthand side is $\ge 0$.

  Suppose $0<t<t_0$, so $1 + c_\alpha \log (1  -  t)$
  and $-1 - c_\alpha \log t$ are both positive.
  Since $A\le \alpha B$, to prove~(\ref{eq:alphamergebd1_1})
  it will suffice to prove
  $\alpha (1 + c_\alpha \log(1  -  t)) \le -1 - c _\alpha\log t$,
  or equivalently that
  $-1 - \penalty10000 \alpha \ge c_\alpha \log( t (1  -  t)^\alpha )$. The derivative
  of $\log( t (1  -  t)^\alpha )$ is
  $(1- (1 + \alpha)t)/(t(1 - t))$; so $\log( t (1  -  t)^\alpha )$
  is maximized at $t = 1/(1  +   \alpha)$ with value
  $\alpha \log \alpha - (\alpha + 1)\log(\alpha + 1)$. Thus
  the desired inequality holds by the definition of~$c_\alpha$, so
  (\ref{eq:alphamergebd1_1}) holds in this case.

  Now suppose $1 - t_0 < t < 1$, so $1 + c_\alpha \log (1  -  t)$
  and $-1 - c_\alpha \log t$ are both negative. Since $B\le\alpha A$,
  it suffices to prove
  $1 + c_\alpha \log(1  -  t) \le \alpha (-1 - c_\alpha \log t)$
  or equivalently that
  $-1 - \alpha \ge c_\alpha \log( t^\alpha (1  -  t))$.
  This is identical to the situation of the previous paragraph, but
  with $t$ replaced by $1  -  t$, so
  (\ref{eq:alphamergebd1_1}) holds in this case also.
\end{proof}

\begin{proof}[Proof of Lemma~\ref{lem:alphamerge-B}]
  The inequality (\ref{eq:alphamergebd2}) is equivalent to
  \[
    B \cdot (1 + c_\alpha \log b - c_\alpha \log(a  +   b))
    ~\le~
    A \cdot (c_\alpha \log(a  +   b) - c_\alpha \log a).
  \]
  Since $\log(a  +   b) - \log a > 0$ and $(\alpha  -  1)B \le A$,
  it suffices to prove
  \[
    1 + c_\alpha \log b - c_\alpha \log(a  +   b)
      ~\le~
      (\alpha - 1) \cdot (c_\alpha \log(a  +   b) - c_\alpha \log a).
  \]
  This is equivalent to
  \[
    c_\alpha\log b + (\alpha  -  1)\cdot c_\alpha\log(a) - \alpha\cdot c_\alpha \log(a +  b)
      ~\le~ -1.
  \]
  Letting $t = b / (a +  b)$, we must show
  $-1 \ge c_\alpha\log(t (1  -  t)^{\alpha - 1})$.
  Similarly to the previous proof, taking the first derivative shows that
  the righthand side is maximized with $t = 1/\alpha$, so it will suffice
  to show that
  $-1 \ge c_\alpha \log( (\alpha  -  1)^{\alpha-1}/\alpha^\alpha )$,
  i.e., that
  $c_\alpha\cdot(\alpha \log \alpha - (\alpha  -  1)\log(\alpha  -  1)) \ge 1$.
  Numerical examination shows that this is true for $\alpha>1.29$.
\end{proof}

\begin{proof}[Proof of Lemma~\ref{lem:alphamerge-Ctwo}]
  We assume w.l.o.g.\ that $b = 1$.
  The inequality of part~(a) is equivalent to
  \[
      A \cdot (2 + c_2 \log (a - 1) - c_2 \log (a +  1))
      ~\le~
      B \cdot (d_2 - 2 + c_2 \log (a  +   1)).
  \]
  Since the righthand side is positive and $A \le 2 B$,
  we need to prove
  \[
    2 \cdot(2 + c_2 \log(a - 1) - c_2 \log(a +  1))
    ~\le~
    d_2 - 2 + c_2 \log (a +  1).
  \]
  This is easily seen to be the same as
  $6 - d_2 \le
      c_2 \log ((a  +   1)^3/(a - 1)^2)$.
  With $a>1$ an integer, the quantity
  $(a  +   1)^3/(a - 1)^2$ is minimized when $a=5$.
  Thus, we must show that
  \[
    d_2 ~\ge~ 6 - c_2 \log( 6^3/4^2 ) ~=~ 6 - c_2(3 \log 6 - 2 \log 4).
  \]
  In fact, $d_2$ was defined so that equality holds.
  Thus (a) holds.

  Arguing similarly for part~(b), we must show
  \[
      \frac{2\alpha}{\alpha-1}+2-d_\alpha
      ~\le~ c_2 \log ((a  +   1)^{\alpha/(\alpha-1)}/a).
  \]
  This holds trivially, as $d_\alpha\ge 12$
  for $\alpha<2$ so the lefthand side is negative and
  the righthand side is positive.
\end{proof}

\begin{proof}[Proof of Lemma~\ref{lem:alphamerge-CD}]
    The inequality~(\ref{eq:alphamerge-CD}) is equivalent to
    \[
      2 A + B \cdot (3 + c_\alpha \log b - c_\alpha \log (b  +   1)) + k \cdot C
      ~\le~
      C \cdot (d_\alpha - 1 + c_\alpha \log (b  +   1)).
    \]
    Since $\frac{2^k (2\alpha - 1)}{\alpha - 1} C \ge  A + B + C$, it is enough
    to prove
    \[
      \max(k, 3) \cdot \frac{2^k (2\alpha  -  1)}{\alpha  -  1} C
      ~\le~
      (d_\alpha  -  1) \cdot C .
    \]
    This is equivalent to
    \[
      d_\alpha ~\ge~
      \frac{2^k \cdot \max\{k, 3\} \cdot (2\alpha  -  1)}{\alpha - 1} + 1.
    \]
    This holds by the definition of $d_\alpha$,
    since $k \le k_0(\alpha) +  1$.
\end{proof}

The proofs of Theorems \ref{thm:alphamergeUpper} and~\ref{thm:twomergeUpper}
use two functions $G_\alpha$ and~$H_\alpha$ to bound the
merge cost of runs stored on the stack.
\begin{definition}
  For $\alpha=2$, define
  \begin{eqnarray*}
    G_2(n,m) &=& n \cdot (d_2 - 1 + c_2 \log m) \\[1ex]
    H_2(n,m) &=&
    \left\{
      \begin{array}{ll}
        n \cdot (d_2 + c_2 \log (m - 1)) \quad & \hbox{if $m\ge 2$} \\
        0 & \hbox{if $m=1$.}
      \end{array}
    \right.
  \end{eqnarray*}
  For $\alpha < 2$,
  \begin{eqnarray*}
      G_\alpha(n,m) &=& n \cdot (d_\alpha - 1 + c_\alpha \log m) \\[1ex]
      H_\alpha(n,m) &=& n \cdot (d_\alpha     + c_\alpha \log m).
  \end{eqnarray*}
\end{definition}
Recall that $m_X$ is the number of original runs
merged to form a run~$X$.
For the proof of Theorem~\ref{thm:twomergeUpper} in the next section,
upper bounding the merge cost of $2$-merge sort,
the idea is that for most runs~$X$ on the stack~$\mathcal \QQ$,
the merge cost of~$X$ will be bounded by $G_2(|X|, m_X)$.
However, many of the runs formed by merges in cases (B) and~(C)
will instead have merge cost bounded by $H_2(|X|,m_X)$.

A similar intuition applies to the proof of Theorem~\ref{thm:alphamergeUpper}
for $\varphi<\alpha<2$, in Section~\ref{sec:alphaMergeUpperPf}.
However, the situation is more complicated as
that proof will bound the total merge cost
instead of individual merge costs~$w_{\QQ_i}$.

The next lemma is the crucial property of $G_2$ and~$G_\alpha$
that is needed for both Theorems
\ref{thm:twomergeUpper} and~\ref{thm:alphamergeUpper}.
The lemma is used to bound the merge costs incurred when merging
two runs which differ in size by at most a factor~$\alpha$.
The constant $c_\alpha$ is exactly what is needed to make this lemma hold.
\begin{lemma}\label{lem:Gbound}
  Suppose $n_1, n_2, m_1, m_2$ are positive
  integers, and $\varphi < \alpha \le 2$. Also suppose $n_1 \le \alpha n_2$ and
  $n_2 \le \alpha n_1$. Then,
  \[
    G_\alpha(n_1, m_1) + G_\alpha(n_2, m_2) + n_1 + n_2
    ~\le~
    G_\alpha(n_1  +   n_2, m_1  +   m_2).
  \]
\end{lemma}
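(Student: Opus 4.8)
\textbf{Proof plan for Lemma~\ref{lem:Gbound}.}
The plan is to observe that this lemma is, after unwinding the definition of $G_\alpha$, essentially a restatement of Lemma~\ref{lem:alphamerge-A}. First I would note that for both $\alpha=2$ and $\varphi<\alpha<2$ the function $G_\alpha$ has the same algebraic form, namely $G_\alpha(n,m) = n\cdot(d_\alpha - 1 + c_\alpha\log m)$, so a single uniform argument handles both cases. Substituting this into the desired inequality, the left-hand side becomes
\[
  n_1(d_\alpha - 1 + c_\alpha\log m_1) + n_2(d_\alpha - 1 + c_\alpha\log m_2) + n_1 + n_2,
\]
and the right-hand side is $(n_1+n_2)(d_\alpha - 1 + c_\alpha\log(m_1+m_2))$. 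The terms involving the constant $d_\alpha - 1$ contribute exactly $(d_\alpha-1)(n_1+n_2)$ on each side and therefore cancel.

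What remains to be shown is precisely
\[
  n_1\cdot c_\alpha\log m_1 + n_2\cdot c_\alpha\log m_2 + n_1 + n_2
  ~\le~ (n_1+n_2)\cdot c_\alpha\log(m_1+m_2),
\]
which is inequality~(\ref{eq:alphamergebd1}) of Lemma~\ref{lem:alphamerge-A} with the substitutions $A = n_1$, $B = n_2$, $a = m_1$, $b = m_2$. The hypotheses needed by Lemma~\ref{lem:alphamerge-A} are that $A,B,a,b$ are positive integers with $A\le\alpha B$ and $B\le\alpha A$; these are guaranteed here since $n_1,n_2,m_1,m_2$ are positive integers and we assumed $n_1\le\alpha n_2$ and $n_2\le\alpha n_1$ (and $\alpha>1$ since $\varphi<\alpha\le 2$). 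Thus the lemma follows directly by invoking Lemma~\ref{lem:alphamerge-A}.

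In this case there is no real obstacle: all the analytic work — the reduction via $t = b/(a+b)$ and the optimization of $\log(t(1-t)^\alpha)$ at $t = 1/(1+\alpha)$, which is exactly where the constant $c_\alpha$ comes from — is already carried out in the proof of Lemma~\ref{lem:alphamerge-A}. The only things to check are the bookkeeping items above: that the $d_\alpha$-terms cancel cleanly, that the piecewise definition of $G_\alpha$ does not matter (the two cases share the same formula), and that the integrality and ratio hypotheses transfer verbatim. So the proof is short, just a couple of lines.
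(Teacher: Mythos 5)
Your proof is correct and follows essentially the same route as the paper: unwind the definition of $G_\alpha$ (which has the same formula $n\cdot(d_\alpha-1+c_\alpha\log m)$ in both the $\alpha=2$ and $\varphi<\alpha<2$ cases) and observe that the resulting inequality is exactly Lemma~\ref{lem:alphamerge-A} with $A,B,a,b$ replaced by $n_1,n_2,m_1,m_2$. The paper's proof is just as short, so nothing further is needed.
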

\begin{proof}
  The inequality expresses that
  \[
    n_1 \cdot (d_\alpha - 1 + c_\alpha \log m_1)
     + n_2 \cdot (d_\alpha - 1 + c_\alpha \log m_2) + n_1 + n_2
    ~\le~
    (n_1 + n_2) \cdot (d_\alpha - 1 + c_\alpha \log (m_1  +   m_2)).
  \]
  This is an immediate consequence of Lemma~\ref{lem:alphamerge-A} with $A$,
  $B$, $a$, $b$ replaced with $n_1$, $n_2$, $m_1$, $m_2$.
\end{proof}

\subsection{Upper bound proof for 2-merge sort}
\label{sec:twoMergeUpperPf}
This section gives the proof of Theorem~\ref{thm:twomergeUpper}.
Lemma~\ref{lem:GHtwobounds} states some properties of $G_2$ and $H_2$ which
follow from Lemmas \ref{lem:alphamerge-B} and~\ref{lem:alphamerge-Ctwo}(a).

\begin{lemma}\label{lem:GHtwobounds}
  Suppose $n_1, n_2, m_1, m_2$ are positive integers.
  \begin{description}
    \item[\rm (a)] If $n_2 \le n_1$, then
      \[
        G_2(n_1, m_1) + H_2(n_2, m_2) + n_1 + n_2
        ~\le~
        H_2(n_1  +   n_2, m_1  +   m_2).
      \]
    \item[\rm (b)]
      If $n_1 \le 2 n_2$, then
      \[
        H_2(n_1,m_1) + n_1 + n_2 ~\le~ G_2(n_1 +  n_2, m_1 +  m_2) .
      \]
  \end{description}
\end{lemma}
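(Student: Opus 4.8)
The plan is to obtain both parts by substituting suitable values into Lemmas~\ref{lem:alphamerge-B} and~\ref{lem:alphamerge-Ctwo}(a), after first peeling off the degenerate subcases in which one of the two runs is a single original run (so the corresponding $H_2$ value is~$0$ and the general lemmas, which require strictly positive integer arguments, cannot be applied directly).

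For part~(a), I would first treat the case $m_2 = 1$, where $H_2(n_2,m_2) = 0$. Since $m_1 + m_2 \ge 2$, the target inequality becomes $n_1(d_2 - 1 + c_2 \log m_1) + n_1 + n_2 \le (n_1 + n_2)(d_2 + c_2 \log m_1)$; cancelling $n_1(d_2 + c_2 \log m_1)$ from both sides leaves $n_2 \le n_2(d_2 + c_2 \log m_1)$, which holds because $d_2 > 1$ and $m_1 \ge 1$. For $m_2 \ge 2$, I would expand $G_2$ and $H_2$ using the definitions and subtract $(d_2 - 1)(n_1 + n_2)$ from both sides; what remains is exactly inequality~(\ref{eq:alphamergebd2}) of Lemma~\ref{lem:alphamerge-B} with $\alpha = 2$ under the substitution $A = n_1$, $B = n_2$, $a = m_1$, $b = m_2 - 1$ (note that $a + b = m_1 + m_2 - 1$, matching the argument of $H_2(n_1 + n_2, m_1 + m_2)$, which is where the $\log(m-1)$ in the definition of $H_2$ is used). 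The hypothesis $(\alpha - 1)B \le A$ of that lemma becomes precisely $n_2 \le n_1$, the hypothesis of part~(a).

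For part~(b), I would again first handle $m_1 = 1$, where $H_2(n_1,m_1) = 0$ and the claim reduces to $n_1 + n_2 \le (n_1 + n_2)(d_2 - 1 + c_2 \log(1 + m_2))$; this holds since $1 + m_2 \ge 2$ gives $c_2 \log(1 + m_2) \ge c_2 > 1$. For $m_1 \ge 2$, writing $d_2 = (d_2 - 1) + 1$ and expanding the definitions, the inequality $H_2(n_1,m_1) + n_1 + n_2 \le G_2(n_1 + n_2, m_1 + m_2)$ is literally the conclusion of Lemma~\ref{lem:alphamerge-Ctwo}(a) with $A = n_1$, $B = n_2$, $a = m_1$, $b = m_2$: the two hypotheses there, $a \ge 2$ and $A \le 2B$, become exactly $m_1 \ge 2$ and $n_1 \le 2n_2$, which are the standing assumptions.

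There is no real analytic obstacle here — the content has already been done in Lemmas~\ref{lem:alphamerge-B} and~\ref{lem:alphamerge-Ctwo}. The only care required is clerical: keeping track of the shift $b = m_2 - 1$ in part~(a) (as opposed to $b = m_2$), and not forgetting that the subcases $m_2 = 1$ in~(a) and $m_1 = 1$ in~(b) must be argued separately by the elementary estimates above, since $m_2 - 1$, respectively $m_1 - 1$, would otherwise fail to be a positive integer.
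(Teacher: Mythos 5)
Your proposal is correct and follows essentially the same route as the paper's own proof: the degenerate subcases ($m_2=1$ in part~(a), $m_1=1$ in part~(b)) are dispatched by the same elementary estimates, and the main cases reduce, exactly as in the paper, to Lemma~\ref{lem:alphamerge-B} (with $\alpha=2$, $b=m_2-1$) and to Lemma~\ref{lem:alphamerge-Ctwo}(a). The bookkeeping of the shift $b=m_2-1$ and the verification of the hypotheses $n_2\le n_1$, $m_1\ge 2$, $n_1\le 2n_2$ are all handled correctly.
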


\begin{proof}
  If $m_2 \ge 2$, part~(a) states that
  \[
    n_1 \cdot (d_2 - 1 + c_2 \log m_1)
     + n_2 \cdot (d_2 + c_2 \log (m_2 - 1)) + n_1 + n_2
    ~\le~
    (n_1  +   n_2) \cdot (d_2 + c_2 \log (m_2  +   m_2  -  1)).
  \]
  This is an immediate consequence of Lemma~\ref{lem:alphamerge-B}.
  If $m_2 = 1$, then part~(a) states
  \[
    n_1 \cdot (d_2 - 1 + c_2 \log m_1) + n_1 + n_2
    ~\le~
    (n_1 + n_2) \cdot (d_2 + c_2 \log m_1).
  \]
  This holds since $d_2 \ge 1$ and $n_2 > 0$ and $m_1 \ge 1$.

  When $m_1 \ge 2$, part~(b) states that
  \[
    n_1 \cdot (d_2 + c_2 \log (m_1 - 1)) + n_1 + n_2
    ~\le~
    (n_1 + n_2) \cdot (d_2 - 1 + c_2 \log (m_1  +   m_2));
  \]
  this is exactly Lemma~\ref{lem:alphamerge-Ctwo}(a).
  When $m_1 = 1$, (b)~states
  $n_1 +  n_2 \le (n_1 +  n_2)(d_2  -  1  +   c_2\log(m_2 +  1))$,
  and this is trivial since $c_2 + d_2 \ge 2$ and $m_2 \ge 1$.
\end{proof}

We next prove Theorem~\ref{thm:twomergeUpper}. We use the convention that
the $2$-merge sort algorithm maintains a stack~$\mathcal \QQ$ containing runs $\QQ_1$,
$\QQ_2$, \dots, $\QQ_\ell$. The last four runs are denoted $W$, $X$, $Y$, $Z$. Each
$\QQ_i$ is a run of $|\QQ_i|$ many elements. Recall that $m_{\QQ_i}$ and
$w_{\QQ_i}$ denote the number of original runs that were merged to form $\QQ_i$
and the merge cost of~$\QQ_i$ (respectively). If $\QQ_i$ is an original run, then
$m_{\QQ_i} = 1$ and $w_{\QQ_i} = 0$. If $m_{\QQ_i}=2$, then $\QQ_i$ was formed by a
single merge, so $w_{\QQ_i} = |\QQ_i|$. To avoid handling the special cases for
$\ell \le 2$, we adopt the convention that there is a virtual initial run $\QQ_0$
with infinite length, so $|\QQ_0| = \infty$.

\begin{lemma}\label{lem:GHtwobound}
  Suppose $\QQ_i$ is a run in~$\mathcal \QQ$ and that
  $w_{\QQ_i} \le G_2(|\QQ_i|, m_{\QQ_i})$. Then $w_{\QQ_i} \le H_2(|\QQ_i|, m_{\QQ_i})$.
\end{lemma}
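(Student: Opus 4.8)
The plan is to compare $G_2$ and $H_2$ directly as functions of $|\QQ_i|$ and $m_{\QQ_i}$, using the hypothesis $w_{\QQ_i} \le G_2(|\QQ_i|, m_{\QQ_i})$ only as a bridge: it suffices to show $G_2(n,m) \le H_2(n,m)$ for the relevant values, namely all positive integers $n$ and $m \ge 1$ with $m = m_{\QQ_i}$. First I would dispose of the trivial case $m = 1$: here $w_{\QQ_i} = 0$ since an original run has no merge cost, and $H_2(n,1) = 0$, so the inequality $w_{\QQ_i} \le H_2(|\QQ_i|, m_{\QQ_i})$ holds with equality. (One should double-check that the hypothesis $w_{\QQ_i} \le G_2(n,1) = n(d_2 - 1)$ is consistent, which it is since $d_2 > 1$, but this plays no role.)

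For the main case $m \ge 2$, I would simply observe that
\[
  H_2(n,m) - G_2(n,m)
  ~=~ n\bigl(d_2 + c_2 \log(m-1)\bigr) - n\bigl(d_2 - 1 + c_2 \log m\bigr)
  ~=~ n\Bigl(1 + c_2 \log\tfrac{m-1}{m}\Bigr).
\]
So the desired inequality $G_2(n,m) \le H_2(n,m)$ reduces to showing $1 + c_2 \log\frac{m-1}{m} \ge 0$, i.e. $\log\frac{m}{m-1} \le 1/c_2$, i.e. $\frac{m}{m-1} \le 2^{1/c_2}$. The left side $\frac{m}{m-1} = 1 + \frac{1}{m-1}$ is largest when $m$ is smallest, i.e. at $m = 2$, where it equals $2$. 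So it is enough to check $2 \le 2^{1/c_2}$, which holds precisely because $c_2 = 3/\log(27/4) \approx 1.089 > 1$, hence $1/c_2 < 1$—wait, that gives $2^{1/c_2} < 2$. Let me recheck the direction.

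Correcting the above: since $c_2 > 1$ we have $1/c_2 < 1$ and $2^{1/c_2} < 2$, so the bound $\frac{m}{m-1} \le 2^{1/c_2}$ \emph{fails} at $m = 2$. This means $G_2(n,2) > H_2(n,2)$, so the inequality $G_2 \le H_2$ does \emph{not} hold in general, and the lemma cannot be proved by the naive comparison. The actual content must instead use that $w_{\QQ_i}$ is not an arbitrary value bounded by $G_2$, but the merge cost of a genuine run: when $m_{\QQ_i} = 2$ we have $w_{\QQ_i} = |\QQ_i|$ exactly, and one checks $|\QQ_i| \le H_2(|\QQ_i|, 2) = |\QQ_i|(d_2 + c_2 \log 1) = |\QQ_i| d_2$ directly from $d_2 \ge 1$. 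For $m_{\QQ_i} \ge 3$, the term $c_2 \log(m-1) \ge c_2 \log 2 = c_2 > 1$, so $H_2(n,m) = n(d_2 + c_2\log(m-1)) \ge n(d_2 - 1 + 1 + c_2\log(m-1)) \ge n(d_2 - 1 + c_2\log m)$ provided $1 + c_2\log(m-1) \ge c_2 \log m$, i.e. $1 \ge c_2\log\frac{m}{m-1}$; since $\frac{m}{m-1} \le \frac 32$ for $m \ge 3$, this needs $c_2 \log(3/2) \le 1$, and $c_2 \log(3/2) \approx 1.089 \cdot 0.585 \approx 0.637 < 1$, which holds. Combining the three cases $m = 1$, $m = 2$, $m \ge 3$ gives $w_{\QQ_i} \le H_2(|\QQ_i|, m_{\QQ_i})$ in all cases.

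The main obstacle, as the false start above illustrates, is recognizing that the lemma is \emph{not} a pure inequality between $G_2$ and $H_2$: those functions cross at $m = 2$. The real proof must split on the value of $m_{\QQ_i}$ and, for the small cases $m_{\QQ_i} \in \{1, 2\}$, use the exact formula for $w_{\QQ_i}$ (namely $0$ and $|\QQ_i|$ respectively) together with $d_2 \ge 1$, rather than the hypothesized $G_2$ bound; only for $m_{\QQ_i} \ge 3$ does the straightforward estimate $c_2\log\frac{m}{m-1} \le c_2\log\frac 32 < 1$ close the gap. I would present it exactly in that three-case structure.
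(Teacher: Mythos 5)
Your final argument is correct and is essentially the paper's proof: the same three-case split on $m_{\QQ_i}$, using $w_{\QQ_i}=0$ for $m_{\QQ_i}=1$, the exact value $w_{\QQ_i}=|\QQ_i|$ together with $d_2\ge 1$ for $m_{\QQ_i}=2$, and the bound $c_2\log\frac{m_{\QQ_i}}{m_{\QQ_i}-1}<1$ (hence $G_2\le H_2$) for $m_{\QQ_i}\ge 3$. The initial false start correctly identifies why a uniform comparison $G_2\le H_2$ fails at $m_{\QQ_i}=2$, and the corrected version matches the paper; you could simply omit the detour in a final write-up.
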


\begin{proof}
  If $m_{\QQ_i}=1$, the lemma
  holds since $w_X = 0$. If $m_{\QQ_i} = 2$, then it holds since
  $w_{\QQ_i} = |\QQ_i|$. If $m_{\QQ_i} > 2$, then it holds since
  $c_2 \log(m_{\QQ_i} / (m_{\QQ_i} - 1)) < 1$ and hence
  $G_2(|\QQ_i|, m_{\QQ_i}) < H_2(|\QQ_i|, m_{\QQ_i})$.
\end{proof}

\begin{proof}[Proof of Theorem~\ref{thm:twomergeUpper}]
  We describe the $2$-merge algorithm by using
  three invariants (A), (B), (C) for the stack;
  and analyzing what action is taken
  in each situation.
  Initially, the stack contains a single original $\QQ_1$,
  so $\ell=1$ and $m_{\QQ_1} = 1$ and $w_{\QQ_1}=0$, and case~(A)
  applies.

  \smallskip

  \noindent
  {\bf (A): Normal mode.} The stack satisfies
  \begin{itemize}
  \setlength{\itemsep}{0pt}
  \item[(A-1)] $|\QQ_i| \ge 2 \cdot |\QQ_{i + 1}|$ for all $i < \ell  -  1$.
  This includes $|X|\ge 2|Y|$ if $\ell\ge 2$.
  \item[(A-2)] $2 \cdot |Y| \ge |Z|$; i.e.\ $2 |\QQ_{\ell - 1}| \ge |\QQ_\ell|$.
  \item[(A-3)] $w_{\QQ_i} \le G_2(|\QQ_i|, m_{\QQ_i})$ for all $i \le \ell$.
  \end{itemize}
  \noindent
  If $\ell \ge 2$, (A-1) and (A-2) imply $|X|\ge |Z|$, i.e.\
  $|\QQ_{\ell - 2}| \ge |\QQ_\ell|$.
  The $\alpha$-merge algorithm does one of the following:

  \begin{itemize}
  \item If $2 |Z| \le |Y|$ and there are no more original
  runs to load, then it goes to case~(B). We claim the four
  conditions of (B) hold (see below). The condition (B-2)
  holds by $|Y| \ge 2|Z|$, and
  (B-1) and (B-3) hold by (A-1) and (A-3).  Condition (B-4)
  holds by (A-3) and Lemma~\ref{lem:GHtwobound}.
  \item If $2 |Z| \le |Y|$ and there is another original
  run to load, then the algorithm loads the next run as $\QQ_{\ell + 1}$.
  \begin{itemize}
  \item[$\circ$] If $|\QQ_{\ell + 1}| \le 2 |\QQ_\ell|$,
      then we claim that case~(A) still holds with
      $\ell$ incremented by one. In particular, (A-1) will hold
      since $|Y| \ge 2 |Z|$ is the same as
      $2 |\QQ_\ell| \le |\QQ_{\ell - 1}|$.
      Condition (A-2)
      will hold by the assumed bound on $|\QQ_{\ell + 1}|$.
      Condition (A-3) will still hold since $|\QQ_{\ell + 1}|$ is an
      original run so $m_{\QQ_{\ell + 1}}=1$ and
      $w_{\QQ_{\ell + 1}}=0$.
  \item[$\circ$] Otherwise $|\QQ_{\ell + 1}| > 2 |\QQ_\ell|$, and
      we claim that case~(C) below holds with
      $\ell$ incremented by one. (C-1) and (C-4) will hold by (A-1) and (A-3).
      For (C-2), we need
      $|\QQ_{\ell - 1}| \ge |\QQ_\ell|$;
      i.e.\ $|Y| \ge |Z|$: this follows trivially
          from $|Y| \ge 2 |Z|$.
      (C-5) holds by (A-3) and Lemma~\ref{lem:GHtwobound}.
      (C-3) holds since $2 |\QQ_\ell| < |\QQ_{\ell + 1}|$.
      (C-6) holds since $\QQ_{\ell + 1}$ is an original run.
  \end{itemize}
  \item If $2|Z| > |Y|$, then the algorithm merges the two
  runs $Y$ and~$Z$. We claim the resulting stack satisfies condition~(A)
  with $\ell$ decremented by one. (A-1) clearly will still hold.
  For (A-2) to still hold,
  we need $2 |X| \ge |Y| +  |Z|$: this
  follows from $2 |Y| \le |X|$ and $|Z|\le |X|$.
  (A-3) will clearly still hold for all $i<\ell - 1$.
  For $i = \ell - 1$, since merging $Y$ and $Z$ added $|Y| +  |Z|$
  to the merge cost, (A-3) implies that
  the new top stack element will
  have merge cost at most
  \[
  G_2(|Y|,m_Y) + G_2(|Z|,m_Z) + |Y| + |Z|.
  \]
  By (A-2) and Lemma~\ref{lem:Gbound},
  this is $\le G_2(|Y| +  |Z|, m_Y +  m_Z)$,
  so (A-3) holds.
  \end{itemize}

  \noindent
  {\bf (B): Wrapup mode, lines 15-18 of
  Algorithm~\ref{alg:alphaMergeSort}.}
  There are no more original runs to process. The
  entire input has been combined into the
  runs $\QQ_1, \ldots, \QQ_\ell$ and they satisfy:
  \begin{itemize}
  \setlength{\itemsep}{0pt}
  \item[(B-1)] $|\QQ_i| \ge 2\cdot |\QQ_{i + 1}|$ for all $i < \ell - 1$.
  This includes $|X|\ge 2|Y|$ if $\ell\ge 2$.
  \item[(B-2)] $|Y| \ge |Z|$;
         i.e., $|\QQ_{\ell - 1}| \ge |\QQ_\ell|$.
  \item[(B-3)] $w_{\QQ_i} \le G_2(|\QQ_i|,m_{\QQ_i})$ for all $i \le \ell - 1$,
  \item[(B-4)] $w_Z \le H_2(|Z|, m_Z)$.
  \end{itemize}
  If $\ell=1$, the run $Z = \QQ_1$ contains the entire input in sorted order
  and the algorithm terminates. The total merge cost is $\le H_2(|Z|, m_Z)$.
  This is $< n \cdot (d_2 + c_2 \log m )$
  as needed for Theorem~\ref{thm:twomergeUpper}.

  Otherwise $\ell>1$,
  and $Y$ and $Z$ are merged.\footnote{
      Note that in this case, if $\ell\ge 2$, $|Z| < |X|$,
       since (B-1) and (B-2) imply that
      $|X|\ge 2 |Y| \ge 2|Z| > |Z|$.
      This is the reason why Algorithm~\ref{alg:twoMergeSort} does not check
      for the condition $|X| < |Z|$ in lines
      \ref{algline:2mergefinalloopA}-\ref{algline:2mergefinalloopB} (unlike what
      is done on line~\ref{algline:2mergeXZ}).}
  We claim the resulting stack of
  runs satisfies case~(B),
  now with $\ell$ decremented by one. It is obvious that
  (B-1) and (B-3) still hold. (B-4) will still hold since by
  (B-3) and (B-4) the merge cost
  of the run formed by merging $Y$ and~$Z$ is at most
  $|Y| + |Z| + G_2(|Y|,m_Y) + H_2(|Z|,m_Z)$, and this is
  $\le H_2(|Y| +  |Z|,m_Y +  m_Z)$ by Lemma~\ref{lem:GHtwobounds}(a).
  To show (B-2) still holds, we must
  show that $|X| \ge |Y| +  |Z|$. To prove this,
  note that $\frac 1 2 |X| \ge |Y|$ by (B-1); thus from
  (B-2), also $\frac 1  2|X| \ge |Z|$.
  Hence $|X| \ge |Y| +  |Z|$.

  \smallskip

  \noindent
  {\bf (C): Encountered long run $Z$.}
  When case~(C) is first entered,
  the final run~$|Z|$ is long relative to~$|Y|$.
  The algorithm will repeatedly merge $X$ and~$Y$
  until $|Z|\le |X|$, at which point it merges $Y$ and~$Z$ and returns
  to case~(A). (The
  merge of $Y$ and $Z$ must eventually occur by the convention
  that $\QQ_0$ has infinite length.)
  The following conditions hold with $\ell\ge 2$:
  \begin{itemize}
    \setlength{\itemsep}{0pt}
    \item[(C-1)] $|\QQ_i| \ge 2 |\QQ_{i + 1}|$ for all $i < \ell - 2$.
      If $\ell \ge 3$, this includes $|W| \ge 2|X|$.
    \item[(C-2)] $|X| \ge |Y|$;
      i.e., $|\QQ_{\ell - 2}| \ge |\QQ_{\ell - 1}|$.
    \item[(C-3)] $|Y| < 2 |Z|$;
      i.e., $|\QQ_{\ell - 1}| < 2 |\QQ_\ell|$.
    \item[(C-4)] $w_{\QQ_i} \le G_2(|\QQ_i|, m_{\QQ_i})$ for all
      $i \le \ell - 2$.
    \item[(C-5)] $w_Y \le H_2(|Y|, m_Y)$.
    \item[(C-6)] $m_Z = 1$ and $w_Z = 0$,
      because $Z$ is an original run and has not undergone a merge.
  \end{itemize}
  By (C-3), the test on line~\ref{algline:2mergeXYZ} of Algorithm~\ref{alg:twoMergeSort}
  will now trigger a merge, either of $Y$ and~$Z$ or of $X$ and~$Y$,
  depending on the relative sizes of $X$ and~$Z$.
  We handle separately the cases $|Z| > |X|$ and $|Z| \le |X|$.
  \begin{itemize}
    \item Suppose $|Z|>|X|$. Then $\ell\ge 3$ and the algorithm
      merges $X$ and~$Y$. We claim that case~(C) still holds,
      now with $\ell$ decremented by~1. It is obvious that
      (C-1), (C-4) and (C-6) still hold.
      (C-5) will still hold, since by (C-4) and (C-5),
      the merge cost of the run obtained by
      merging $X$ and $Y$ is at most
      $|X| + |Y|+G_2(|X|,m_X) + H_2(|Y|,m_Y)$, and this is
      $\le H_2(|X| + |Y|,m_X +  m_Y)$ by Lemma~\ref{lem:GHtwobounds}(a)
      since $|X|\ge |Y|$.
      To see that (C-2) still holds, we argue exactly
      as in case~(B) to
      show that $|W| \ge |X| +  |Y|$. To prove this,
      note that $\frac 1 2 |W| \ge |X|$ by (C-1); thus from
      (C-2), $\frac 1 2 W \ge |Y|$. Hence $|W| \ge |X| +  |Y|$.
      To establish that (C-3) still holds, we must
      prove that $|X| +  |Y| <  2|Z|$.
      By the assumption that $|Z|>|X|$, this follows from
      $|Y| \le |X|$, which holds by (C-2).

    \item Otherwise, $|Z| \le |X|$ and $Y$ and $Z$ are merged. We claim that
      now case~(A) will hold.
      (A-1) will hold by (C-1).
      To show (A-2) will hold, we need $|Y| + |Z| \le 2|X|$:
      this holds by (C-2) and $|Z| \le |X|$.
      (A-3) will hold for $i < \ell - 1$ by (C-4).
      For $i=\ell - 1$, the merge cost $w_\YZ$ of the run obtained by
      merging $Y$ and~$Z$ is $\le H_2(|Y|,m_Y) + |Y| + |Z|$
      by (C-5) and (C-6). By (C-3) and Lemma~\ref{lem:GHtwobounds}(b)
      this is $\le G_2(|Y| + |Z|, m_Y + m_Z)$.
      Hence (A-3) will hold with $i = \ell - 1$.
  \end{itemize}

  That completes the proof of Theorem~\ref{thm:twomergeUpper}.
\end{proof}

Examination of the above proof shows why
the $2$-merge Algorithm~\ref{alg:twoMergeSort}
does not need to test
the condition $|X| < 2|Y|$ on line~\ref{algline:2mergeXYZ};
in contrast to what the $\alpha$-merge
Algorithm~\ref{alg:alphaMergeSort} does. In cases
(A) and~(B), the test will fail by conditions (A-1) and (B-1).
In case~(C), condition (C-3) gives
$|Y| < 2|Z|$, so an additional test would be redundant.

\subsection{Upper bound proof for \texorpdfstring{$\alpha$}{a}-merge sort}
\label{sec:alphaMergeUpperPf}

This section gives the proof Theorem~\ref{thm:alphamergeUpper}.
The general outline of the proof is similar to that of
Theorem~\ref{thm:twomergeUpper}; however, we must handle
a new, and fairly difficult, case (D). It is also necessary to
bound the total merge cost $\sum_i w_{\QQ_i}$ of all the runs
in the stack~$\mathcal \QQ$, instead of bounding
each individual merge cost $w_{\QQ_i}$.
We first prove a lemma stating properties of $G_\alpha$
and $H_\alpha$ which follow from
Lemmas \ref{lem:alphamerge-B}, \ref{lem:alphamerge-Ctwo}(b)
and~\ref{lem:alphamerge-CD}.  Parts (a) and~(b) of the lemma generalize
Lemma~\ref{lem:GHtwobounds}.

\begin{lemma}\label{lem:GHalphabounds}
  Suppose $n_1, n_2, m_1, m_2$ are positive integers, and
  $\varphi < \alpha < 2$.
  \begin{description}
    \item[\rm (a)] If $(\alpha  -  1) n_2 \le n_1$, then
      \[
        G_\alpha(n_1, m_1) + H_\alpha(n_2, m_2) + n_1 + n_2
        ~\le~
        H_\alpha(n_1  +   n_2, m_1  +   m_2).
      \]
    \item[\rm (b)] If $n_1 \le \frac{\alpha}{\alpha-1} \cdot n_2$, then
      \[
        H_\alpha(n_1,m_1) + n_1 + n_2 ~\le~ G_\alpha(n_1 +  n_2, m_1 +  m_2).
      \]
    \item[\rm (c)] If $n_1 \le \frac{\alpha}{\alpha-1} n_2$, then
      \[
          H_\alpha(n_1, m_1) ~\le~ G_\alpha(n_1,m_1) + G_\alpha(n_2, 1).
      \]
    \item[\rm (d)] If $k\le k_0(\alpha) +  1$ and
      $\frac{2^k (2\alpha - 1)}{\alpha - 1} n_3 \ge n_1 + n_2 + n_3$, then
      \[
       H_\alpha(n_1, m_1) + H_\alpha(n_2, m_2) + k \cdot n_3 + n_1 + 2n_2
       ~\le~ G_\alpha(n_1  +   n_2  +   n_3, m_1  +   m_2  +   1) .
      \]
  \end{description}
\end{lemma}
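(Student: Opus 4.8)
The plan is to derive each of the four inequalities by specializing the matching technical lemma from Section~\ref{sec:mergeUpperPrelim} --- Lemma~\ref{lem:alphamerge-B} for part~(a), Lemma~\ref{lem:alphamerge-Ctwo}(b) for part~(b), and Lemma~\ref{lem:alphamerge-CD} for part~(d) --- and then translating from the ``bare'' form of those inequalities (stated purely in terms of $c_\alpha\log$) into the $G_\alpha/H_\alpha$ notation. The key structural simplification is that for $\varphi<\alpha<2$ the functions $G_\alpha(n,m)=n(d_\alpha-1+c_\alpha\log m)$ and $H_\alpha(n,m)=n(d_\alpha+c_\alpha\log m)$ have no exceptional case at $m=1$ (unlike $H_2$). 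Consequently, none of the case splits on small $m_i$ that appear in the proof of Lemma~\ref{lem:GHtwobounds} are needed; each part follows from a single substitution together with some bookkeeping involving the constant $d_\alpha-1$.

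For part~(a) I would apply Lemma~\ref{lem:alphamerge-B} with $A=n_1$, $B=n_2$, $a=m_1$, $b=m_2$, noting that its hypothesis $(\alpha-1)B\le A$ is exactly the hypothesis $(\alpha-1)n_2\le n_1$ of~(a) (and that $\alpha>\varphi>1.29$, so the lemma applies). This gives $n_1 c_\alpha\log m_1+n_2(1+c_\alpha\log m_2)+n_1+n_2\le (n_1+n_2)(1+c_\alpha\log(m_1+m_2))$; adding $(d_\alpha-1)(n_1+n_2)$ to both sides and regrouping turns the left-hand side into $G_\alpha(n_1,m_1)+H_\alpha(n_2,m_2)+n_1+n_2$ --- the term $(d_\alpha-1)n_1$ is absorbed into the $c_\alpha\log m_1$ part, and $(d_\alpha-1)n_2$ together with the $+n_2$ already present inside $n_2(1+c_\alpha\log m_2)$ into the $c_\alpha\log m_2$ part --- while the right-hand side becomes exactly $H_\alpha(n_1+n_2,m_1+m_2)$. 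Part~(b) is even more immediate: the substitution $A=n_1$, $B=n_2$, $a=m_1$, $b=m_2$ into Lemma~\ref{lem:alphamerge-Ctwo}(b) reproduces the claimed inequality verbatim, with the hypothesis $n_1\le\frac{\alpha}{\alpha-1}n_2$ matching word for word, so no rearrangement is needed at all.

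Part~(c) is the one inequality that does not come from a technical lemma; instead I would prove it by direct expansion. Writing out both sides, $H_\alpha(n_1,m_1)\le G_\alpha(n_1,m_1)+G_\alpha(n_2,1)$ is equivalent to $n_1\le (d_\alpha-1)n_2$, so by the hypothesis $n_1\le\frac{\alpha}{\alpha-1}n_2$ it suffices to verify $d_\alpha-1\ge\frac{\alpha}{\alpha-1}$. This holds with an enormous margin: by~(\ref{eq:dalphaDef}) and $k_0(\alpha)\ge 1$ we have $d_\alpha-1=\frac{2^{k_0(\alpha)+1}\cdot\max\{k_0(\alpha)+1,3\}\cdot(2\alpha-1)}{\alpha-1}\ge\frac{12(2\alpha-1)}{\alpha-1}$, which dwarfs $\frac{\alpha}{\alpha-1}$.

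For part~(d) I would invoke Lemma~\ref{lem:alphamerge-CD} with $A=n_1$, $B=n_2$, $C=n_3$, $a=m_1$, $b=m_2$, and the given $k\le k_0(\alpha)+1$: its hypothesis on $\frac{2^k(2\alpha-1)}{\alpha-1}n_3$ is precisely the one assumed, and its left-hand side is exactly $H_\alpha(n_1,m_1)+H_\alpha(n_2,m_2)+k\,n_3+n_1+2n_2$. Its conclusion bounds this by $n_1(d_\alpha-1+c_\alpha\log m_1)+(n_2+n_3)(d_\alpha-1+c_\alpha\log(m_2+1))$, which keeps $n_1$ separated and is not yet in the combined form $G_\alpha(n_1+n_2+n_3,m_1+m_2+1)$ demanded by~(d). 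The remaining step is to collapse it: the $d_\alpha-1$ pieces combine to $(n_1+n_2+n_3)(d_\alpha-1)$ outright, and for the logarithmic pieces one uses monotonicity of $\log$, namely $n_1\log m_1+(n_2+n_3)\log(m_2+1)\le(n_1+n_2+n_3)\log(m_1+m_2+1)$, which is valid because $m_1\le m_1+m_2+1$ and $m_2+1\le m_1+m_2+1$ (the latter using $m_1\ge 1$). This reconciliation between the ``separated'' right-hand side of Lemma~\ref{lem:alphamerge-CD} and the single combined $G_\alpha$ term is the only non-mechanical point in the whole argument --- it is also where the ``$+1$'', i.e.\ the absorbed original run of length $n_3$, is accounted for --- but it is a one-line inequality, so I do not expect it to present a genuine obstacle.
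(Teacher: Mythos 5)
Your proof is correct and takes essentially the same route as the paper: parts (a), (b), and (d) are obtained by the same substitutions into Lemmas~\ref{lem:alphamerge-B}, \ref{lem:alphamerge-Ctwo}(b) and~\ref{lem:alphamerge-CD}, and part~(c) by the same direct reduction to $d_\alpha - 1 \ge \frac{\alpha}{\alpha-1}$ via $k_0(\alpha)\ge 1$. Your explicit monotonicity step collapsing the right-hand side of Lemma~\ref{lem:alphamerge-CD} into $G_\alpha(n_1+n_2+n_3,m_1+m_2+1)$ is exactly the detail the paper leaves as ``follows easily.''
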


\begin{proof}
  Part (a) states that
  \[
    n_1 \cdot (d_\alpha - 1 + c_\alpha \log m_1)
     + n_2 \cdot (d_\alpha + c_\alpha \log m_2) + n_1 + n_2
    ~\le~
    (n_1  +   n_2) \cdot (d_\alpha + c_\alpha \log (m_2  +   m_2)).
  \]
  This is an immediate consequence of Lemma~\ref{lem:alphamerge-B}.

  Part~(b) states that
  \[
    n_1 \cdot (d_\alpha + c_\alpha \log m_1))
     + n_1 + n_2
    ~\le~
    (n_1 + n_2) \cdot (d_\alpha - 1 + c_\alpha \log (m_1  +   m_2)).
  \]
  This is exactly Lemma~\ref{lem:alphamerge-Ctwo}(b).

  The inequality of part~(c) states
  \[
    n_1\cdot(d_\alpha + c_\alpha \log m_1) ~\le~
       n_1\cdot(d_\alpha - 1 + c_\alpha \log m_1)
       + n_2 \cdot (d_\alpha - 1 + 0 ).
  \]
  After cancelling common terms, this is the same as
  $n_1 \le n_2 \cdot (d_\alpha - 1)$.  To establish this,
  it suffices to show that $d_\alpha - 1 \ge \frac{\alpha}{\alpha-1}$.
  Since $k_0(\alpha)\ge 1$,
  we have $d_\alpha\ge \frac{12(2\alpha-1)}{\alpha-1}+1$.
  And, since $\alpha>1$, we have $12(2\alpha-1)>\alpha$.
  Therefore $d_\alpha - 1 > \frac{\alpha}{\alpha-1}$, and
  (c)~is proved.

  Part~(d) states that
  \begin{eqnarray}
    \nonumber
    \lefteqn{ n_1 \cdot (d_\alpha + c_\alpha \log m_1) +
        n_2 \cdot (d_\alpha + c_\alpha \log m_2) +
        k \cdot n_3 + n_1 + 2 n_2} \\
    \label{eq:GHalphabounds_d}
    &\le& (n_1 + n_2 + n_3) \cdot (d_\alpha - 1 + c_\alpha \log (m_1  +   m_2  +   1)).
    \hspace*{1in}
  \end{eqnarray}
  Lemma~\ref{lem:alphamerge-CD} implies that
  \begin{eqnarray*}
  \lefteqn{ n_1 \cdot (d_\alpha + c_\alpha \log m_1) +
      n_2 \cdot (d_\alpha + c_\alpha \log m_2) +
      k \cdot n_3 + n_1 + 2 n_2} \\
  &\le&
      n_1 \cdot (d_\alpha - 1 + c_\alpha \log m_1) +
      (n_2 + n_3) \cdot (d_\alpha -1 + c_\alpha \log (m_2  +   1)) .
  \end{eqnarray*}
  The desired inequality (\ref{eq:GHalphabounds_d}) follows easily.
\end{proof}

We now prove Theorem~\ref{thm:alphamergeUpper}.

\begin{proof}[Proof of Theorem~\ref{thm:alphamergeUpper}]
  We describe the $\alpha$-merge algorithm using four invariants
  (A), (B), (C), (D) for the stack; and analyze what action is taken
  in each situation. Initially, the stack contains a single original $\QQ_1$, so
  $\ell = 1$ and $m_{\QQ_1} = 1$ and $w_{\QQ_1}=0$, and case~(A) applies.

  \smallskip

  \noindent
  {\bf (A): Normal mode.} The stack satisfies
  \begin{itemize}
    \setlength{\itemsep}{0pt}
    \item[(A-1)]
      $|\QQ_i| \ge \alpha |\QQ_{i + 1}|$ for all $i < \ell  -  1$. This includes
      $|X| \ge \alpha|Y|$ if $\ell \ge 2$.
    \item[(A-2)] $\alpha |Y| \ge |Z|$; i.e.\ $\alpha |\QQ_{\ell - 1}| \ge |\QQ_\ell|$.
    \item[(A-3)] $\sum_{i = 1}^\ell w_{\QQ_i} ~\le~
      \sum_{i = 1}^\ell G_\alpha(|\QQ_i|, m_{\QQ_i})$.
  \end{itemize}

  If $\ell \ge 2$, (A-1) and (A-2) imply $|X|\ge |Z|$, i.e.\
  $|\QQ_{\ell - 2}| \ge |\QQ_\ell|$.
  The $\alpha$-merge algorithm does one of the following:
  \begin{itemize}
    \item If $\alpha |Z| \le |Y|$ and there are no more original runs to load,
      then it goes to case~(B). Condition (B-1) holds by (A-1). (B-3) holds by
      (A-3) since
      $G_\alpha(|\QQ_\ell|, m_{\QQ_\ell}) \le H_\alpha(|\QQ_\ell|, m_{\QQ_\ell})$.
      Condition (B-2) states that $(\alpha - 1) |Z| \le |Y|$ and this holds since
      $\alpha |Z| \le |Y|$.

    \item If $\alpha |Z| \le |Y|$ and there is another original run to load, then
      the algorithm loads the next run as $\QQ_{\ell + 1}$.
      \begin{itemize}
        \item[$\circ$] If $|\QQ_{\ell + 1}| \le \alpha |\QQ_\ell|$,
          then we claim that case~(A) still holds after $\ell$ is incremented
          by one. In particular, (A-1) and $\alpha |Z| \le |Y|$ imply that
          (A-1) will still hold since $\alpha |Z| \le |Y|$ is the same as
          $\alpha |\QQ_\ell| \le |\QQ_{\ell - 1}|$. (A-2) still holds by the
          assumed bound on $|\QQ_{\ell + 1}|$. Condition (A-3) still holds since
          $|\QQ_{\ell + 1}|$ is an original run so $m_{\QQ_{\ell + 1}}=1$ and
          $w_{\QQ_{\ell + 1}} = 0$.
        \item[$\circ$] Otherwise $|\QQ_{\ell + 1}| > \alpha |\QQ_\ell|$, and
          we claim that case~(C) below holds with
          $\ell$ incremented by one.
          (C-1) holds by (A-1).
          For (C-2), we need $|\QQ_{\ell - 1}| \ge (\alpha  -  1)|\QQ_\ell|$, i.e.
          $|Y| \ge (\alpha  -  1)|Z|$: this follows trivially from
          $|Y| \ge \alpha |Z|$. (C-4) holds by (A-3), since
          $G_\alpha(|\QQ_\ell|,m_{\QQ_\ell}) \le H_\alpha(|\QQ_\ell|,m_{\QQ_\ell})$ and
          since $\QQ_{\ell + 1}$ is an original run so $m_{\QQ_{\ell + 1}} = 1$ and
          $w_{\QQ_{\ell + 1}} = 0$. To have (C-3) hold, we need
          $|Z| \le \frac{\alpha}{(\alpha - 1)}|\QQ_{\ell + 1}|$. This follows
          from the hypothesis $|\QQ_{\ell + 1}| > \alpha |Z|$ and $\alpha > 1$.
          Finally, (C-5) will hold since $\QQ_{\ell + 1}$ is an original run.
      \end{itemize}

    \item If $\alpha |Z| > |Y|$, then $\ell\ge 2$ . In this case, since
      $|Z|\le|X|$, the algorithm merges the two runs $Y$ and~$Z$. We claim the
      resulting stack satisfies case~(A) with $\ell$ decremented by one.
      It is obvious that (A-1) still holds. (A-3) still holds by
      Lemma~\ref{lem:Gbound}. For (A-2), we need $|Y| + |Z| \le \alpha|X|$;
      this follows from $|Y| \le \frac{1}{\alpha} |X|$ and $|Z| \le |X|$ and
      $1 + \frac{1}{\alpha} < \alpha$ as $\varphi<\alpha$.
  \end{itemize}

  \noindent
  {\bf (B): Wrapup mode, lines 15-18 of
  Algorithm~\ref{alg:alphaMergeSort}.}
  There are no more original runs to process. The
  entire input has been combined into the
  runs $\QQ_1, \ldots, \QQ_\ell$ and they satisfy:
  \begin{itemize}
  \item[(B-1)] $|\QQ_i| \ge \alpha |\QQ_{i + 1}|$ for all $i < \ell  -  1$.
     This includes $|X| \ge \alpha |Y|$ if $\ell\ge 2$.
  \item[(B-2)] $|Y| \ge (\alpha  -  1)|Z|$; i.e.,
      $|\QQ_{\ell - 1}| \ge (\alpha  -  1)|\QQ_\ell|$.
  \item[(B-3)]
      $\sum_{i = 1}^\ell w_{\QQ_i} ~\le~
           \sum_{i = 1}^{\ell - 1}  G_\alpha(|\QQ_i|, m_{\QQ_i}) +
           H_\alpha(|\QQ_\ell|, m_{\QQ_\ell})$.
  \end{itemize}
  If $\ell = 1$, the run $Z = \QQ_1$ contains the entire input in sorted order
  and the algorithm terminates. The total merge cost is
  $\le H_\alpha(|Z|, m_Z)$. This is $n \cdot (d_\alpha + c_\alpha \log m )$
  as needed for Theorem~\ref{thm:alphamergeUpper}.

  If $\ell > 1$, then $Y$ and $Z$ are merged.\footnote{
      Note that in this case, if $\ell\ge 2$, $|Z| < |X|$ since (B-1) and (B-2) imply that
      $|X|\ge \alpha |Y| \ge \alpha(\alpha - 1)|Z|$ and $\alpha^2-\alpha > 1$
      since $\varphi < \alpha$.
      This is the reason why Algorithm~\ref{alg:alphaMergeSort} does not check
      for the condition $|X| < |Z|$ in lines
      \ref{algline:amergefinalloopA}-\ref{algline:amergefinalloopB} (unlike what
      is done on line~\ref{algline:amergeXZ}).
  }
  We claim that the resulting stack of runs satisfies (B), now with $\ell$ decremented by one.
  It is obvious that (B-1) will still holds.
  (B-3) will still hold since merging $Y$ and~$Z$
  adds $|Y| + |Z|$ to the total merge cost and since
  \[
    G(|Y|, m_Y) + H(|Z|, m_Z) + |Y| + |Z|
    ~\le~
    H(|Y|  +   |Z|, m_Y  +   m_Z)
  \]
  by Lemma~\ref{lem:GHalphabounds}(a) since $|Y| \ge (\alpha - 1)|Z|$ by (B-2).
  To show (B-2) will still hold, we must show that
  $|X| \ge (\alpha  -  1)(|Y|  +   |Z|)$. To prove this, note that
  $\frac{1}{\alpha} |X| \ge |Y|$ by (B-1); thus from (B-2),
  $\frac{1}{\alpha(\alpha-1)}|X| \ge |Z|$. This gives
  $\bigl( \frac 1 \alpha + \frac 1 {\alpha(\alpha-1)} \bigr)|X| \ge |Y|  +   |Z|$;
  hence $|X| \ge (\alpha  -  1)(|Y|  +   |Z|)$.

  \smallskip

  \noindent
  {\bf (C): Encountered long run $Z$.}
  When case~(C) is first entered, the final run~$Z$ is long relative to~$Y$.
  The algorithm will repeatedly merge $X$ and~$Y$ as long as $|Z| < |X|$, staying
  in case~(C). Once $|Z|\le|X|$, as discussed below,
  there are several possibilities. First, it may be
  that case~(A) already applies. Otherwise, $Y$ and~$Z$ are merged, and the algorithm
  proceeds to either case~(A) or case~(D).

  Formally, the following conditions hold during case~(C)
  with $\ell\ge 2$:
  \begin{itemize}
    \setlength{\itemsep}{0pt}
    \item[(C-1)] $|\QQ_i| \ge \alpha |\QQ_{i + 1}|$ for all
      $i < \ell  -  2$. If $\ell\ge 4$, this includes $|W| \ge \alpha|X|$.
    \item[(C-2)] $|X| \ge (\alpha  -  1) |Y|$;
      i.e., $|\QQ_{\ell - 2}| \ge (\alpha  -  1)|\QQ_{\ell - 1}|$.
    \item[(C-3)] $|Y| \le \frac \alpha{(\alpha - 1)} |Z|$;
      i.e., $|\QQ_{\ell  -  1}| \le \frac{\alpha}{(\alpha - 1)} |\QQ_\ell|$.
    \item[(C-4)] $\sum_{i = 1}^\ell w_{\QQ_i} \le
      \sum_{i = 1}^{\ell - 2} G_\alpha(|\QQ_i|, m_{\QQ_i}) + H_\alpha(|Y|, m_Y)$.
    \item[(C-5)] $Z$ is an original run, so $m_Z=1$ and $w_Z = 0$.
  \end{itemize}

  It is possible that no merge is needed, namely if $|X| \ge \alpha |Y|$ and
  $|Y| \ge \alpha |Z|$. In this case we claim that case (A) holds. Indeed,
  (A-1) will hold by (C-1) and since $|X| \ge \alpha |Y|$. Condition (A-2)
  holds by $|Y| \ge \alpha |Z|$. Condition (A-3) follows from (C-4) and the
  fact that, using (C-3), Lemma~\ref{lem:GHalphabounds}(c) gives the inequality
  $H_\alpha(|Y|, m_Y) \le G_\alpha(|Y|, m_Y) +  G_\alpha(|Z|, 1)$.\footnote{
    It is this step which requires us to bound the total merge cost $\sum_i
    w_{\QQ_i}$ instead of the individual merge costs $w_{\QQ_i}$. Specifically,
    $G_\alpha(|Y|, m_Y)$ may not be an upper bound for $w_Y$.
  }

  Otherwise, a merge occurs. The cases $|Z| > |X|$ and $|Z|\le|X|$
  are handled separately:

  \begin{itemize}
    \item Suppose $|Z| > |X|$. We have $\ell\ge 3$ by the convention that
      $|\QQ_0| = \infty$, and the algorithm
      merges $X$ and~$Y$. We claim that case~(C) still holds, now
      with $\ell$ decremented by~1.
      It is obvious that (C-1) and (C-5) will still hold.
      (C-4) will still hold since merging $X$ and~$Y$
      adds $|X| + |Y|$ to the total merge cost, and since
      \[
        G_\alpha(|X|, m_X) + H_\alpha(|Y|, m_Y) + |X| + |Y|
          ~\le~ H_\alpha(|X| + |Y|, m_X + m_Y)
      \]
      by Lemma~\ref{lem:GHalphabounds}(a) since $|X| \ge (\alpha - 1) |Y|$
      by (C-3).

      To see that (C-2) still holds, we argue exactly as in case~(B): We must
      show that $|W| \ge (\alpha - 1)(|X| + |Y|)$. To prove this, note that $\frac{1}{\alpha} |W| \ge |X|$ by (C-1); thus from (C-2),
      $\frac{1}{\alpha(\alpha-1)} W \ge |Y|$. This gives
      $\left(
          \frac{1}{\alpha} + \frac{1}{\alpha(\alpha - 1)}
      \right) |W| \ge |X| + |Y|$;
      hence $|W| \ge (\alpha  -  1)(|X|  +   |Y|)$.

      To establish that (C-3) still holds, we must
      prove that $|X| + |Y| \le \frac{\alpha}{\alpha - 1}|Z|$. Since
      $|Z| > |X|$, it suffices to show $|X| + |Y| \le
      \frac{\alpha}{\alpha - 1}|X|$, or equivalently
      $|Y| \le \frac{1}{\alpha - 1}|X|$. This holds by (C-2).

    \item Otherwise $|Z| \le |X|$, so $\ell \ge 2$,
      and $Y$ and $Z$ are merged. The analysis
      splits into two cases, depending on whether $|Y| + |Z| \le \alpha |X|$
      holds.

      First, suppose $|Y| + |Z| \le \alpha |X|$. Then we
      claim that, after the merge of $Y$ and~$Z$, case~(A) holds.
      Indeed, (A-1) will hold by (C-1). (A-2) will hold by
      $|Y| + |Z| \le \alpha |X|$. Condition (A-3) will hold by (C-4) and (C-5)
      and the fact that, using (C-3), Lemma~\ref{lem:GHalphabounds}(b) gives
      the inequality
      $H_\alpha(|Y|, m_Y) + |Y| + |Z|\le G_\alpha(|Y| +  |Z|, m_Y +  1)$.

      Second, suppose $|Y| + |Z| > \alpha |X|$ and thus $\ell \ge 3$.
      We claim that, after the merge of $Y$ and~$Z$, case (D) holds.
      (D-1) holds by (C-1).
      (D-2) holds with $k = 0$ so $Z_1$ is the empty run, and
      with $Z_2$ and~$Z_3$ equal to the just-merged runs $Y$ and~$Z$
      (respectively).
      (D-3) holds since $k=0$ and $|X| \ge |Z|$.
      (D-4) holds by (C-2) and the choice of $Z_1$ and~$Z_2$.
      (D-5) holds by (C-3).
      (D-6) holds since is the same as the assumption
      that $|Y| + |Z| > \alpha |X|$.
      (D-7) holds since, by (C-3),
          $|Y|  +   |Z| \le
          \Bigl(\frac{\alpha}{\alpha - 1}  +   1\Bigr) |Z| =
          \frac{2\alpha - 1}{\alpha - 1} |Z|$.
      Finally, we claim that (D-8) holds by (C-4).
      To see this, note
      that with $k = 0$, the quantity ``$(k  +   1) Z_3 + Z_2$'' of (D-8) is
      equal to the cost $|Y| + |Z|$ of merging $Y$ and~$Z$,
      and that the quantities ``$m_Z  -  1$'' and ``$|Z_1| + |Z_2|$''
      of (D-8) are the same as our $m_Y$ and $|Y|$.
  \end{itemize}

  \noindent
  {\bf (D): Wrapping up handling a long run.}
  In case~(D), the original long run,
  which was earlier called ``$Z$'' during case~(C), is now
  called ``$Z_3$'' and has been merged with runs at the
  top of the stack to form the current top stack element $Z$.
  This $Z$ is equal to the merge of three runs $Z_1$, $Z_2$ and~$Z_3$.
  The runs $Z_2$ and~$Z_3$ are the two runs $Y$ and~$Z$ which
  were merged when leaving case~(C) to enter case~(D). The run
  $Z_1$ is equal to the merge of $k$ many runs $U_k, \ldots,U_1$
  which were just below the top of the stack when leaving case~(C)
  to enter case~(D). Initially $k=0$, so $Z_1$ is empty.

  The runs $Z_2$ and~$Z_3$ do not change while the algorithm is
  in case~(D).
  Since $Z_3$ was an original run, $m_{Z_3}=1$. In other words,
  $m_Z - 1 = m_{Z_1} +  m_{Z_2}$.

  There are two possibilities for how the merge algorithm proceeds in
  case~(D). In the simpler case, it merges Y and Z, and either
  goes to case~(A) or stays in case~(D).
  If it stays in case~(D), $k$~is incremented by~1.
  In the more complicated case, it merges X and Y, then merges the
  resulting run with Z, and then again either goes to (A) or stays in (D).
  In this case, if it stays in~(D), $k$ is incremented by~2.
  Thus $k$ is equal to the number of merges that have been performed.
  We will show that $k<k_0(\alpha)$ must always hold.

  Formally, there is an integer $k < k_0(\alpha)$
  and there exists runs $Z_1$, $Z_2$ and
  $Z_3$ (possibly $Z_1$ is empty) such that $\ell\ge 2$ and the following hold:
  \begin{itemize}
      \setlength{\itemsep}{0pt}
      \item[(D-1)] $|\QQ_i| \ge \alpha |\QQ_{i + 1}|$ for all
          $i < \ell  -  1$. This includes $X\ge \alpha Y$.
      \item[(D-2)] $Z$ is equal to the merge of three runs $Z_1,Z_2,Z_3$.
      \item[(D-3)] $|Y| \ge \alpha^k |Z_3|$.
      \item[(D-4)] $|Y|\ge (\alpha - 1)(|Z_1| + |Z_2|)$.
      \item[(D-5)] $|Z_3| \ge \frac{\alpha - 1}{\alpha} |Z_2|$.
      \item[(D-6)] $\alpha |Y| < |Z|$.
      \item[(D-7)] $|Z| \le  \frac{2^k (2\alpha - 1)}{\alpha - 1} |Z_3|$.
      \item[(D-8)] $\sum_{i = 1}^\ell w_{\QQ_i} ~\le~
          \sum_{i = 1}^{\ell - 1} G_\alpha(|\QQ_i|, m_{\QQ_i}) + (k  +   1) |Z_3| + |Z_2| +
          H_\alpha(|Z_1| +  |Z_2|, m_Z  -  1)$.
  \end{itemize}
  We claim that conditions (D-3), (D-4) and (D-6) imply that $k<k_0(\alpha)$.
  To prove this, suppose $k\ge k_0(\alpha)$. From the definition
  of $k_0$, this implies that $\alpha \ge \frac 1 {\alpha^k} + \frac 1{\alpha-1}$.
  (D-3) gives $\frac1{\alpha^k} |Y| \ge |Z_3|$;
  (D-4) gives $\frac1{\alpha-1} |Y| \ge |Z_1| + |Z_2|$.
  With (D-6) and $|Z| = |Z_1| +  |Z_2| +  |Z_3|$, these imply
  \[
  |Z| ~>~ \alpha|Y|
       ~\ge~ \Bigr(\frac1{\alpha^k} + \frac1{\alpha-1}\Bigl ) |Y|
       ~\ge~ |Z_1| + |Z_2| + |Z_3| ~=~ |Z| ,
  \]
  which is a contradiction.

  By (D-6) and the test in line~\ref{algline:amergeXYZ} of
  Algorithm~\ref{alg:alphaMergeSort}, the algorithm must perform a merge,
  either of $X$ and~$Y$ or of $Y$ and~$Z$, depending on the relative sizes of
  $X$ and~$Z$. The cases of $|Z| \le |X|$ and $|Z| > |X|$ are handled
  separately.
  \begin{itemize}
    \item Suppose $|Z| \le |X|$.  Therefore, the algorithm merges $Y$ and~$Z$.

      In addition, suppose that $\alpha |X| \ge |Y| + |Z|$. We claim that this
      implies case~(A) holds after the merge. Indeed, (D-1) implies that
      condition (A-1) holds. The assumption $\alpha |X| \ge |Y| + |Z|$ gives
      that (A-2) will hold. For (A-3), we argue by applying
      Lemma~\ref{lem:GHalphabounds}(d) with $n_1 = |Y|$ and $n_2 = |Z_1| +
      |Z_2|$ and $n_3 = |Z_3|$ and with $k +  2$ in place of~$k$.
      For this, we need $k +  2 \le k_0(\alpha) +  1$ as was already proved and
      also need
      \[
        \frac{2^{k+2} (2\alpha  -  1)}{\alpha - 1} |Z_3|
        ~\ge~ |Y|+ |Z|.
      \]
      This is true by (D-7) since $|Y| + |Z| < (1 + 1/\alpha)|Z|$ by (D-6)
      and since $1 + 1 / \alpha < 4$. We also have $m_Z  -  1 = m_{Z_1}  +   m_{Z_2} > 0$.
      Thus, Lemma~\ref{lem:GHalphabounds}(d) implies
      \[
        H_\alpha(|Y|, m_Y) + H_\alpha(|Z_1| + |Z_2|, m_Z - 1) +
        (k + 2) \cdot |Z_3| + |Y| + 2(|Z_1| + |Z_2|)
        ~\le~
        G_\alpha(|Y|  +   |Z|, m_Y +  m_Z).
      \]
      Since $\alpha < 2$, $G_\alpha(|Y|, m_Y) \le H_\alpha(|Y|, m_Y)$.
      Therefore, since $|Z| = |Z_1|  +   |Z_2|  +   |Z_3|$,
      \[
        G_\alpha(|Y|,m_Y) + (k  +   1) |Z_3| + |Z_2| +
            H_\alpha(|Z_1|  +   |Z_2|, m_Z  -  1) + |Y| +|Z|
        ~\le~
        G_\alpha(|Y| + |Z|, m_Y + m_Z).
      \]
      Since the cost of merging $Y$ and~$Z$ is equal to $|Y| + |Z|$, this
      inequality plus the bound (D-8) implies that (A-3) will hold after the
      merge.

      Alternately, suppose $\alpha|X| < |Y| + |Z|$. In this case,
      $Y$ and $Z$ are merged:
      $k$~will be incremented by~1 and $Y$~will become part of~$Z_1$.
       We claim that case~(D) still holds after the merge.
      Clearly (D-1) still holds.
      (D-2) still holds with $Z_1$ now including~$Y$. (That is,
      $Y$ becomes~$U_{k+1}$.) (D-1) and (D-3) imply
      $|X| \ge \alpha |Y| \ge \alpha^{k + 1} |Z_3|$,
      so (D-3) will still hold.
      (D-4) implies
      $\frac{|Y|}{\alpha - 1} \ge |Z_1| + |Z_2|$.
      Therefore, (D-1) gives
      \[
        \frac{|X|}{\alpha - 1} ~\ge~ \frac{\alpha |Y|}{\alpha - 1}
          ~=~ |Y| + \frac{|Y|}{\alpha - 1} ~\ge~ |Y| + |Z_1| + |Z_2|.
      \]
      Thus (D-4) will still hold after the merge (since, after the merge,
      $Y$ becomes part of~$Z_1$).
      The hypothesis $\alpha|X| < |Y| + |Z|$ implies (D-6) will still hold.
      By (D-6) we have $|Z| > \alpha |Y| > |Y|$, so (D-7) gives
      \[
      |Z| + |Y| ~\le~ 2 |Z| ~\le~ \frac{2^{k + 1} (2\alpha  -  1)}{\alpha - 1} |Z_3|.
      \]
      Thus (D-7) will still hold. Finally,
      by (D-4), we may apply Lemma~\ref{lem:GHalphabounds}(a)
      with $n_1=|Y|$ and $n_2=|Z_1| +  |Z_2|$ to obtain
      \[
        G_\alpha(|Y|,m_Y) + H_\alpha(|Z_1| +  |Z_2|,m_Z - 1)
         + |Y| + |Z_1| + |Z_2| + |Z_3|
         ~\le~ H_\alpha(|Y| + |Z_1| +  |Z_2|,m_Y+m_Z - 1) + |Z_3|.
      \]
      Since the cost of merging $Y$ and $Z$ is
      $|Y| + |Z_1| + |Z_2| + |Z_3|$ and since $k$ is incremented
      by~1 after the merge, this implies that (D-8) will still
      hold after the merge.

    \item Now suppose $|Z| > |X|$, so $\ell\ge 3$.
      In this case, algorithm merges $X$ and $Y$; the result
      becomes the second run on the stack, which we denote $(XY)$.
      We claim that immediately after this, the
      algorithm merges the combination $(XY)$
      of $X$ and~$Y$ with the run $Z$.
      Indeed, since $\varphi<\alpha$, we have $\alpha > 1+ 1/\alpha$
      and therefore by (D-6) and by the assumed bound on $|X|$
      \[
          \alpha |Z| ~>~ \frac{1}{\alpha} |Z| + |Z| ~>~ |Y| + |X|.
      \]
      Thus, the test on line~\ref{algline:amergeXYZ} of
      Algorithm~\ref{alg:alphaMergeSort} triggers a second
      merge operation.  Furthermore, since $\varphi<\alpha$,
      we have $1 > \frac{1}{\alpha^2} + \frac{1}{\alpha^2 (\alpha - 1)}$
      and thus, using (D-1), (D-3) and (D-4),
      \[
          |W| ~>~ \frac{1}{\alpha^{k + 2}} |W| +
                  \frac{1}{\alpha^2 (\alpha  -  1)} |W|
              ~\ge~ \frac{1}{\alpha^k} |Y| +
                  \frac{1}{\alpha - 1} |Y|
              ~\ge~ |Z_1| + |Z_2| + |Z_3|= |Z|.
      \]
      With $|W| > |Z|$, the second merge acts to merge $(XY)$ and~$Z$
      instead of $W$ and $(XY)$. After the second merge,
      the top three runs $X,Y,Z$ on the stack
      have been merged, with an additional merge cost
      of $2|X| +  2|Y| +  |Z|$.  As we argue next, the algorithm
      now either transitions to case~(A)
      or stays in case~(D), depending whether
      $\alpha |W| \ge |X| + |Y| + |Z|$ holds.

      First, suppose that $\alpha |W| \ge |X| + |Y| + |Z|$.
      We claim that this implies case~(A) holds after the two merges.
      Indeed, (D-1) implies that condition (A-1) will hold.
      The assumption $\alpha |W| \ge |X| + |Y| + |Z|$ gives
      that (A-2) will hold.
      For (A-3), we argue as follows. We apply
      Lemma~\ref{lem:GHalphabounds}(d) with $n_1 = |X|  +   |Y|$
      and $n_2 = |Z_1|  +   |Z_2|$ and $n_3 = |Z_3|$
      and with $k+2$ in place of~$k$.
      For this, we need $k +  2 \le k_0(\alpha)+1$ as was already proved
      and also need
      \begin{equation}\label{eq:ZthreeXYZ}
          \frac{2^{k + 2} (2\alpha  -  1)}{\alpha - 1} |Z_3|
          ~\ge~ |X| + |Y| + |Z|.
      \end{equation}
      To prove~(\ref{eq:ZthreeXYZ}), first note that
      that $|X| + |Y| +|Z| < (2 + 1 / \alpha)|Z|$ by
      (D-6) and the assumption that $|Z| > |X|$;
      then (\ref{eq:ZthreeXYZ}) follows from (D-7)
      and the fact that $2+1/\alpha<4$.
      We have $m_Z  -  1 = m_{Z_1}  +   m_{Z_2} > 0$.
      Thus, Lemma~\ref{lem:GHalphabounds}(d) implies
      \begin{eqnarray}
        \nonumber
        \lefteqn{H_\alpha(|X|  +   |Y|, m_X  +   m_Y) + H_\alpha(|Z_1|  +   |Z_2|, m_Z  -  1) +
                          (k  +   2) \cdot |Z_3|
                        + |X| + |Y| + 2(|Z_1| + |Z_2|)} \\
        \label{eq:Dtwo1}
        &\le& G_\alpha(|X|  +   |Y|  +   |Z|, m_X  +   m_Y  +   m_Z).
              \hspace*{3.3in}
      \end{eqnarray}
      By (D-1), we have $(\alpha  -  1) |Y| < \alpha |Y| \le |X|$.
      Thus Lemma~\ref{lem:GHalphabounds}(a), with $n_1 = |X|$ and
      $n_2 = |Y|$ and $m_1=m_X$ and $m_2 = m_Y$ gives
      \begin{equation}\label{eq:Dtwo2}
        G_\alpha(|X|, m_X) + H_\alpha(|Y|, m_Y) + |X| + |Y|
         ~\le~ H_\alpha(|X|  +   |Y|, m_X  +   m_Y).
      \end{equation}
      Since $\alpha<2$, $G_\alpha(|Y|, m_Y) < H_\alpha(|Y|, m_Y)$.
      So, using $|Z| = |Z_1| +  |Z_2| +  |Z_3|$,
      (\ref{eq:Dtwo1}) and~(\ref{eq:Dtwo2}) imply
      \begin{eqnarray*}
        \lefteqn{G_\alpha(|X|, m_X) +
              G_\alpha(|Y|, m_Y) +
              (k  +   1) |Z_3| + |Z_2| +
              H_\alpha(|Z_1|  +   |Z_2|, m_Z  -  1) + 2|X|  +   2|Y|  +  |Z|} \\
        &\le& G_\alpha(|X|  +   |Y|  +   |Z|, m_X  +   m_Y  +   m_Z).
                \hspace*{3.3in}
      \end{eqnarray*}
      Since the cost of the two merges combining $X$, $Y$, and~$Z$
      was $2|X|  +   2|Y|  +  |Z|$,
      the last inequality and (D-8) imply
      that (A-3) will hold after the two merges.

      Alternately, suppose $\alpha |W| < |X| + |Y| + |Z|$. This implies
      $\ell \ge 4$. We claim that in this
      case~(D) still holds after the two merges combining
      $X$, $Y$ and~$Z$, and with $k$ incremented by~2.
      Certainly, (D-1) still holds.
      (D-2) is also still true:
      $Z_2$ and~$Z_3$ are unchanged, and $Z_1$ will
      include $X$ and~$Y$ (as $U_{k+2}$ and $U_{k+1}$).
      (D-1) and (D-3) imply
      \[
        |W| ~\ge~ \alpha |X| ~\ge~ \alpha^2 |Y| ~\ge~ \alpha^{k + 2} |Z_3|,
      \]
      so (D-3) will still hold.
      (D-4) implies $\frac{|Y|}{\alpha - 1} \ge |Z_1|  +   |Z_2|$.
      Hence, (D-1) and (D-4) give
      \begin{eqnarray*}
        \frac{|W|}{\alpha - 1}
          &\ge& \frac{\alpha |X|}{\alpha - 1}
          ~=~ \frac{|X|}{\alpha - 1} + |X|
          ~\ge~ \frac{\alpha |Y|}{\alpha - 1} + |X|
          ~=~ |X| + |Y| + \frac{|Y|}{\alpha - 1} \\
        &\ge& (|X| + |Y| + |Z_1|) + |Z_2|;
      \end{eqnarray*}
      i.e., (D-4) will still hold.
      (D-5) is obviously still true.
      $\alpha|W| < |X|  +   |Y|  +   |Z|$ implies (D-6)
      will still hold.
      The already proved equation~(\ref{eq:ZthreeXYZ}) implies
      that (D-7) will still hold.
      Finally, we need to establish (D-8).

      By (D-4), $|Y| \ge (\alpha  -  1) (|Z_1|  +   |Z_2|)$;
      and thus by (D-1) and (D-3),
      $|X| \ge \alpha|Y| \ge (\alpha - 1)(|Y| +  |Z_1| +  |Z_2|)$
      Therefore, we can apply Lemma~\ref{lem:GHalphabounds}(a) twice,
      first with $n_1 = |Y|$ and $n_2 = |Z_1| +  |Z_2|$
      and then with $n_1 = |X|$ and $n_3 = |Y| +  |Z_1| +  |Z_2|$,
      to obtain
      \begin{eqnarray*}
      \lefteqn{G_\alpha(|X|, m_X) + G_\alpha(|Y|, m_y) +
                 H_\alpha(|Z_1|  +   |Z_2|, m_Z  -  1) + 2|X| + 2|Y| + |Z|} \\
      &<& G_\alpha(|X|, m_X) + G_\alpha(|Y|, m_y) +
                      H_\alpha(|Z_1|  +   |Z_2|, m_Z  -  1) + |X| + 2|Y| + 2|Z| \\
      &\le& G_\alpha(|X|, m_X) + H_\alpha(|Y|  +   |Z_1|  +   |Z_2|, m_Y  +   m_Z  -  1) +
                  |X| + |Y| + |Z| + |Z_3| \\
      &\le& H_\alpha(|X|  +   |Y|  +   |Z_1|  +  |Z_2|, m_X  +   m_Y  +   m_Z  -  1) + 2|Z_3|
      \end{eqnarray*}
      where the first inequality uses the assumption $|X| < |Z|$
      and the other two inequalities use
      Lemma~\ref{lem:GHalphabounds}(a) and $|Z| = |Z_1| +  |Z_2| +  |Z_3|$.
      From this, it is easily seen that (D-8) will still hold after the
      two merges combining $X$, $Y$ and~$Z$: this is because the
      additional merge cost is $2|X|  +   2|Y|  +   |Z|$ and
      since $k$ will be incremented by~2.
  \end{itemize}
  This completes the proof of Theorem~\ref{thm:alphamergeUpper}.
\end{proof}

\section{Experimental results}\label{sec:experiments}
This section reports some computer experiments comparing the $\alpha$-stack
sorts, the $\alpha$-merge sorts, Timsort, Shivers sort, adaptive Shivers sort
and powersort. The test
sequences use the following model. We only measure merge costs, so the inputs to
the sorts are sequences of run lengths (not arrays to be sorted).  Let $\mu$ be
a distribution over integers. A sequence of $m$ run lengths is chosen by
choosing each of the $m$ lengths independently according to the
distribution~$\mu$. We consider two types of distributions for~$\mu$:
\begin{enumerate}
  \item The uniform distribution over numbers between $1$ and $100$,
  \item A mixture of the uniform distribution over integers between $1$ and
    $100$ and the uniform distribution over integers between $10000$ and
    $100000$, with mixture weights $0.95$ and $0.05$. This distribution was
    specially tailored to work better with $3$-aware algorithms while still
    being formulated in a general way that avoids favoring any
    particular algorithm.
\end{enumerate}
We also experimented with power law distributions.  However,
they gave very similar
results to the uniform distributions so we do not report these results
here.

In Figures
\ref{figure:alpha-stack-sort-plot}
and~Figure~\ref{figure:alpha-merge-sort-plot}, we estimate the ``best'' $\alpha$
values for the $\alpha$-merge and $\alpha$-stack sorts under the uniform
distribution. The experiments show that the best value for $\alpha$ for both
types of algorithms is around the golden ratio, or even slightly lower.  For $\alpha$
at or below $\varphi$, the results start to show oscillation,
albeit within a small range.  We discuss
oscillation for other sorts below, but we do not know why the results $\alpha$-merge sort
oscillate for small values of~$\alpha$.

\begin{figure}
    \centering
    \input{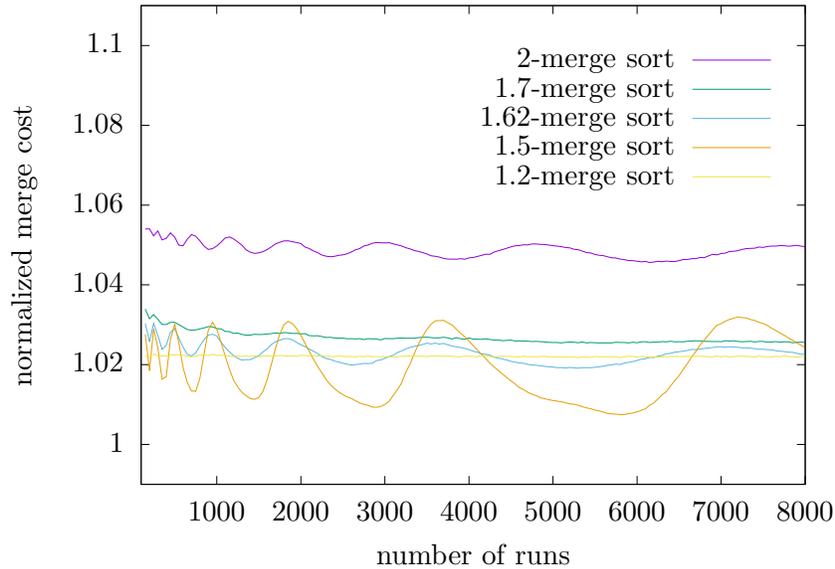}
    \caption{Comparison between $\alpha$-merge sorts for different $\alpha$ on
    uniform distribution over integers between 1 and 100.
    For all our figures, the $x$-axis
    shows the ``number
    of runs'', namely the number of presorted subsequences (ascending or
    descending) in the input sequence.  The $y$-axis shows the ``normalized
    merge cost'', namely the ratio of the total merge cost and $n\log m$.
    The data points reported in
    Figures \ref{figure:alpha-stack-sort-plot}-\ref{figure:allsortsMixed},
    are the average of 100 random trials.}
    \label{figure:alpha-stack-sort-plot}
\end{figure}

\begin{figure}
    \centering
    \input{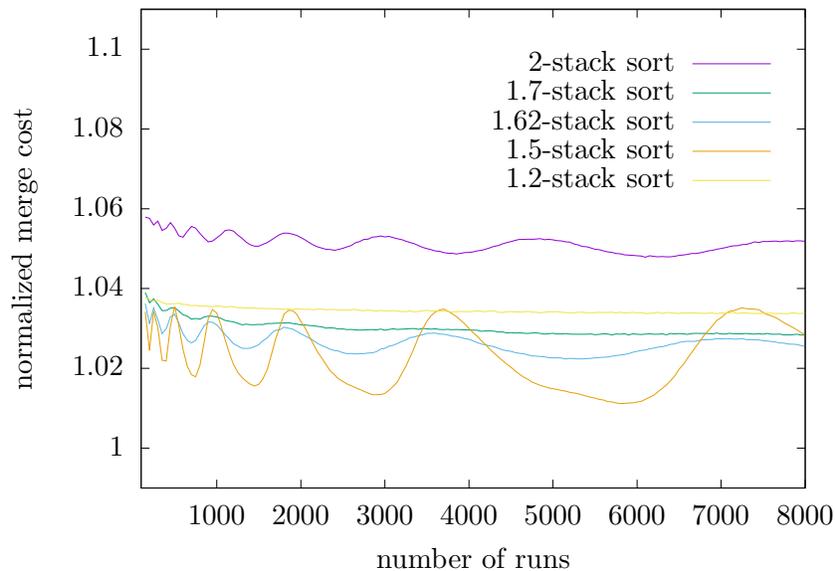}
    \caption{Comparison between $\alpha$-stack sorts for different $\alpha$ on
    uniform distribution over integers between 1 and 100.}
    \label{figure:alpha-merge-sort-plot}
\end{figure}

\begin{figure}
    \centering
    \input{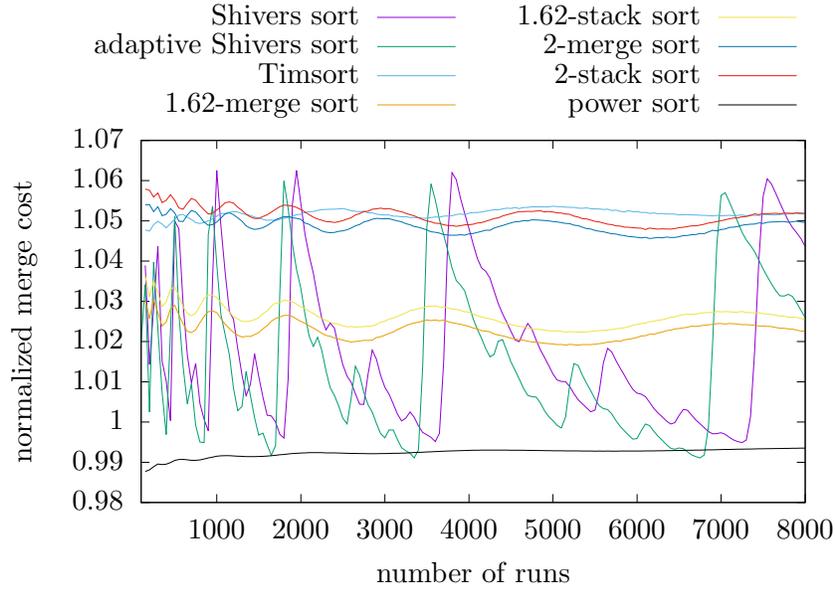}
    \caption{Comparison between sorting algorithms using the
    uniform distribution over integers between 1 and 100.}
    \label{figure:allsortsUniform}
\end{figure}

\begin{figure}
    \centering
    \input{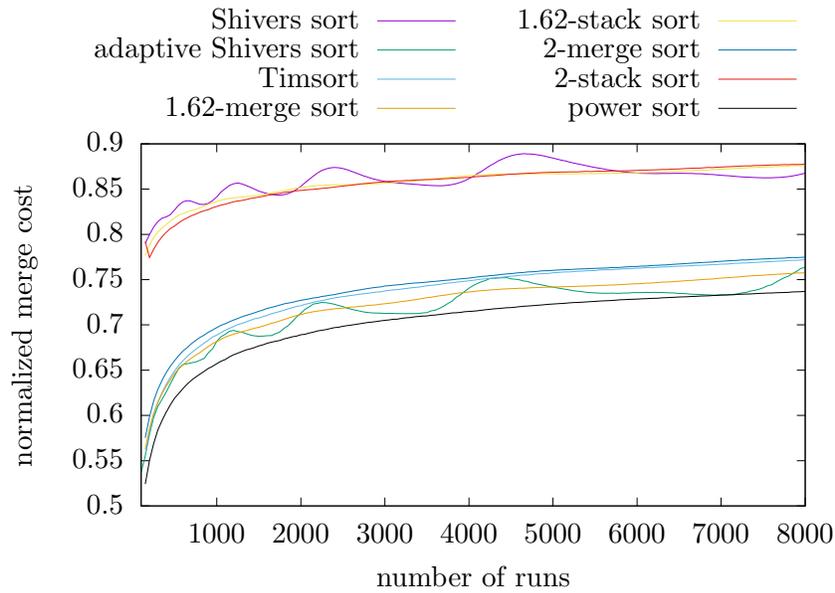}
    \caption{Comparison between sorting algorithms using a
            mixture of the uniform distribution over integers
            between 1 and 100 and the uniform distribution over integers between
            10000 and 100000, with mixture weights 0.95 and 0.05}
    \label{figure:allsortsMixed}
\end{figure}

\begin{figure}
\centering
\input{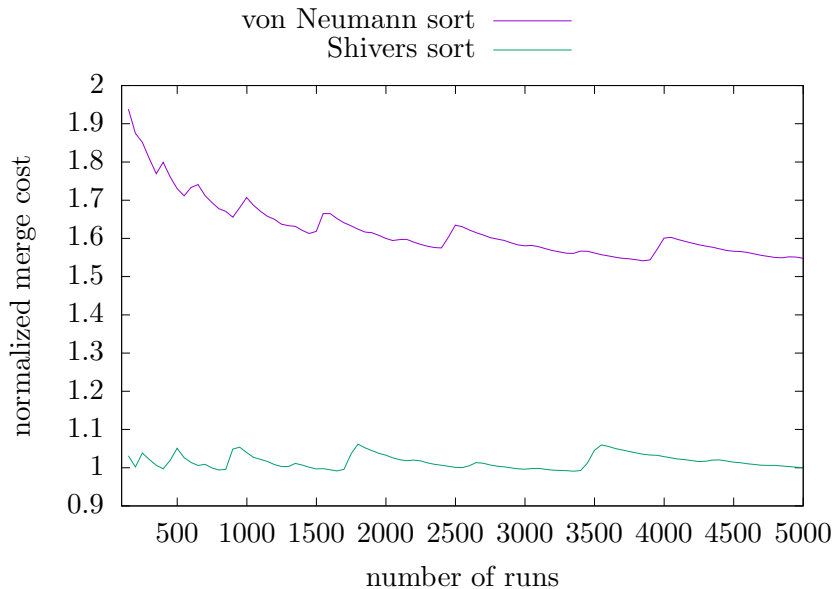}
\caption{The performance of the von Neumann merge sort (nonadaptive
merge sort) on the uniform distribution of run lengths
between 1 and 100. The random run lengths affect the total
input length~$n$, but the von Neumann sort ignores the runs in the initial runs in
the input sequence, and greedily builds binary trees of merges.
Unlike the other graphs, the data points on this graph are the result
of averaging 10 tests.}
\label{figure:vonNeumannUnif}
\end{figure}

Next we compared all the stable merge sorts discussed in
the paper, plus the adaptive Shivers sort of Jug\'e~\cite{Juge:AdaptiveShivers} and
the powersort of Munro-Wild~\cite{MunroWild:MergeSort}.  Figure~\ref{figure:allsortsUniform} reports on comparisons
using the uniform distribution.  It shows
that $1.62$-merge sort performs slightly better
than $1.62$-stack sort; and they perform better than Timsort,
$2$-stack sort and $2$-merge sort. The Shivers sort and the adaptive
Shivers sort performed
comparably to the $1.62$-stack and $1.62$-merge sorts, but exhibited
a great deal of oscillation in performance. This is discussed below,
but this is
presumably due to the Shivers and adaptive Shivers sorts rounding to powers of two.
The powersort algorithm performed the best in this test. The powersort knows
the total length~$n$ of the input to be sorted; this allows it to make
decisions on merges that approximate better a binary merge tree
over $n$~inputs.

Figure~\ref{figure:allsortsMixed} considers the mixed distribution. This is the
situation that is more meaningful for our stable merge sorts. As expected,
$1.62$-stack sort, $2$-stack sort, and Shivers sort performed the least well, approximately
15\% or 20\% worse than the other algorithms.
The other algorithms' performances were closer to each other. The Timsort
and $2$-merge sort were
marginally worse than the remaining three algorithms.  The
$1.62$-merge sort and the adaptive Shivers performed comparably, with
the adaptive Shivers sort exhibiting oscillatory behavior.
The powersort performed the best, presumably because its use of
the value~$n$ allowed it to make merge decisions that were
globally optimal.

The figures show that in many cases the merge cost oscillates periodically with the
number of runs; this was true in many other test cases not reported here. This seems
to be due in part to the fact that some of the sorts with this behavior (notably Shivers sort and
adaptive Shivers sort) base their merge decisions using powers of two, and do not take into
account the number $n$ of elements being sorted. In fact, the same behavior
is exhibited by the simple von Neumann sort, namely the non-adaptive sort that greedily
builds binary merge trees.  This is seen in Figure~\ref{figure:vonNeumannUnif}
using the uniform distribution. In this figure, for each value of~$m$,
the input sequences were chosen under the uniform distribution; these
have a fixed number, $m$, of runs, but different total input lengths~$n$.
That is, for fixed $m$,
the input length $n$ is a random variable;he von Neumann sort
ignores runs, and its merge cost depends only on~$n$.
The oscillations in the von Neumann and
Shivers sorts are clearly very similar, so we expect they have similar causes.
Under the mixed distribution (not shown), the oscillation with the von Neumann
sort was not present, presumably to the larger relative variance of~$n$. It is still present
for the Shivers and adaptive Shivers sort in the mixed distribution, presumably because
of their sensitivity to longer runs.

\section{Conclusion and open questions}\label{sec:conclude}
Theorem~\ref{thm:alphamergeUpper} analyzed
$\alpha$-merge sort only for $\alpha > \varphi$.
This leaves several open questions:
\begin{question}
    For $\alpha \le \varphi$, does $\alpha$-merge sort run in time
    $c_{\alpha} (1  +   \omega_m(1)) n \log m$?
\end{question}
\noindent
It is likely that when $\alpha<\varphi$, $\alpha$-merge sort
could be improved by making it 4-aware, or more generally,
as $\alpha\rightarrow 1$, making it $k$-aware
for even larger $k$'s.
\begin{question}
Is it necessary that the constants $d_\alpha \rightarrow \infty$
as $\alpha$ approaches~$\varphi$?
\end{question}

An \textit{augmented Shivers sort} can defined by
replacing the inner while loop on
lines \ref{algline:shiversYZ}-\ref{algline:ShiversInnerLoopB}
of the Shivers sort
with the code:

\noindent
\hspace*{0.5in}\hbox{
    \vbox{
        \begin{algorithmic}
            \While{$2^{\lfloor{\log |Y|}\rfloor} \le |Z|$}
              \If {$|Z|\le|X|$}
                \State{Merge $Y$ and $Z$}
              \Else
                \State{Merge $X$ and $Y$}
              \EndIf
            \EndWhile
        \end{algorithmic}
    }
}
\noindent
The idea is to incorporate the $3$-aware features of $2$-merge sort
into the Shivers sort method. The hope is that this might give an
improved worst-case upper bound:
\begin{question}
    Does the augmented Shivers sort run in time $O(n \log m)$?
    Does it run in time $(1 +  o_m(1)) n \log m$?
\end{question}
\noindent
The notation $o_m(1)$ is intended to denote a value that
tends to zero as $m\rightarrow\infty$.

Jug\'e's adaptive Shivers sort~\cite{Juge:AdaptiveShivers} is similar
to the above augmented Shivers sort, and it does have run time
$(1 +  o_m(1)) n \log m$. In fact it is asymptotically optimal
in a rather strong sense: It
has run time asymptotically equal to the entropy-based lower
bounds of Barbay and Navarro~\cite{BarbayNavarro:AdaptiveSorting}.
This also answered the open Question~4 in~\cite{BussKnop:StableMergeSortSODA}
about the existence of asymptotically optimal $k$-aware algorithms.
The powersort algorithm of Munro and Wild~\cite{MunroWild:MergeSort}
has similar asymptotic run time bounds.

Our experiments indicate that $1.62$-merge sort, the adaptive
Shivers sort, and powersort are all strong contenders for
the ``best'' adaptive merge sort in practice.
The powersort was the best in our experiments by a
small margin, but it uses a more complicated algorithm
to decide when to merge, and this will increase the running time.
(It potentially will use $O(\log n)$ time
per merge decision, although this can be improved on.) The adaptive
Shivers sort has better theoretical worst-case upper bounds than
$1.62$-merge sort, but this probably only arises in rare cases and it
is a small difference in any event.
The adaptive Shivers sort also exhibited some oscillation.
Both of these algorithms are easy to implement;
neither has
a clear edge in our experiments.

Our experiments used random distributions that probably
do not do a very good job of modeling real-world data.
Is it possible to create better models for real-world data?
Finally, it might be beneficial to run experiments with actual real-world data,
e.g., by modifying deployed sorting algorithms to calculate statistics based
on run lengths that arise in actual practice.

\bibliographystyle{siam}
\bibliography{logic}

\end{document}